\newtheorem{theorem}{Theorem}
\newtheorem{proposition}{Proposition}
\newtheorem{lemma}{Lemma}
\newtheorem{definition}{Definition}
\newtheorem{corollary}{Corollary}
\newcommand{\hlf}{\frac{1}{2}}
\newcommand{\tr}{\text{Tr}}
\newcommand{\supp}{\text{Supp}}
\newcommand{\mc}{\mathcal}
\newcommand{\mb}{\mathbb}
\newcommand{\id}{\text{id}}
\newcommand{\im}{\mathrm{Im}\,}
\newcommand{\hyph}{\text{-}}
\title{Distilling Fractons from Layered Subsystem-Symmetry Protected Phases}
\author{Albert T. Schmitz\footnote{contact: albert.schmitz.colorado.edu}} 
\affil{Department of Physics and Center for Theory of Quantum Matter\\University of Colorado, Boulder, CO 80309, USA}
\begin{document}
\maketitle

\abstract{
It is well-known that 3D Type-I fracton models can be obtained from the condensations of stacked layers of 2D anyons. It is less obvious if 3D Type-II fractons can be understood from a similar perspective. In this paper, we affirm that this is the case: we produce the paradigm Type-II fracton model, Haah's cubic code, from a 2D layer construction. However, this is not a condensation of 2D anyons, but rather we start with stacks of 2D {\it subsystem-symmetry protected topological states} (SSPT). As this parent model is not topologically ordered in the strict sense, whereas the final state is, we refer to this process as a {\it distillation} as we are forming a long-range entangled (LRE) state from several copies of a short-range entangled (SRE) state. We also show that Type-I fracton topological order can also be distilled from SSPT states in the form of the {\it cluster-cube model} which we introduce here. We start by introducing the Brillouin-Wigner perturbation theory of distillation. However, a more detailed analysis from which the perturbation theory follows, is also performed using linear gauge structures and an extension we introduce here referred to as {\it gauge substructures}. This allows us to rigorously define distillation as well as understand the process of obtaining LRE from SRE. We can diagnose the source of LRE as the distillation of subsystem symmetries into robust long-range ground state degeneracy as characterized by logical operators of the resulting stabilizer code. Furthermore, we find which Hamiltonian terms are necessary for selecting the ground state which results from the perturbation. This leads to a protocol for realizing a fracton quantum error correcting code initialized in a chosen fiducial state using only finite-depth circuits and local measurements.}

\tableofcontents

\section{Introduction}

Topologically ordered (TO) phases have garnered a lot of attention due the fact that they do not conform to the Landau symmetry-breaking paradigm. However, topological phases have also generated their own paradigmatic structure, i.e. finite braiding and fusion relations, effective field theory descriptions in terms of topological quantum field theories (TQFT), and fully deformable line-like logical operators, i.e. the Wilson loop operators which characterize the topological sectors of the phase. In the same way that TO brake the Landau paradigm, one can argue that fracton topological order (FTO) breaks some of these TO paradigms as the braiding and fusion relations are infinite \cite{Pai2019}, there is no TQFT effective description, and logical operators can take on exotic forms. For these reasons as well as others, fractons as introduced in \cite{Chamon2005,Nussinov2009,Nussinov2009a} and made prominent in \cite{Bravyi2011,Haah2011,Yoshida2013,Vijay2015,Vijay2016, Pretko2017a, Pretko2017b} have gained a lot of attention in recent years; see \cite{Nandkishore2018} for a review. All these properties, as well as the definitive property of the restrictive mobility of quasi-particles, come from a subsystem structure of the model in a similar way to how TO gains its properties from global structure \cite{Vijay2015, Vijay2016, Ma2017,Slagle2018b,Shirley2018, Shirley2019}. For Type-I FTO, these subsystems correspond to 2D planes, whereas anything else corresponds to Type-II FTO\footnote{We take the perspective of Refs. \cite{ Haah2011, HaahThesis} that Type-I is characterized by the existence of string logical operators or equivalently, some mobile composite excitations,  and Type-II is anything with no string logical operator, or equivalently, no mobile composite excitations.}, most notably fractal subsystems. With the introduction of layer constructions \cite{Ma2017, Vijay2017} and foliation \cite{Slagle2018b,Shirley2018, Shirley2019} it would seem well-established that Type-I FTO is given by stacking layers of 2D TO and coupling the layers, whereby  the subsystem structure is inherited from the layers. It is still an open question and seems to be assumed that Type-II cannot be obtained from such 2D layer constructions, as one would reasonably suspect such a model to have planar subsystem structure. In this letter, we provide a counter-example to this intuition by showing a 2D layer construction which realizes the paradigm Type-II model, Haah's cubic code \cite{Haah2011, HaahThesis}. However unlike the typical layer construction, we start with many copies of a subsystem-symmetry protected topological (SSPT) state \cite{Yizhi2018,Devakul2019,Devakul2018, Kubica2018, Stephen2018, Williamson2018, Devakul2019a} as opposed to some 2D TO phase. We term this SSPT model {\it the quasi-cluster model} which is new to the best of our knowledge. That is we start with stacks of a short-range entangled (SRE) SSPT phase and couple them to form a long-range entangled (LRE) FTO phase in a process we term {\it distillation} of LRE from SRE (SRE$\to$LRE).\footnote{We are borrowing this term from the process of entanglement distillation\cite{Bennett1996}, whereby lots of low-grade noisy entanglement (analogous to SRE for us) is distilled into fewer high-grade bell-pairs (analogous to LRE for us).}  One might then suspect that Type-II is a by-product of this SRE$\to$LRE distillation, but again, we present a counter-example to this as well by finding a Type-I model, what we term the {\it cluster-cube model} via a similar SRE$\to$LRE distillation process using the cluster state. The connection between SSPT and FTO in not new. For example, Ref. \cite{Schmitz2019b} found that when action on the cluster state SSPT is restricted to  respect the subsystem symmetry, the resulting excitations become effective 2D fractons. Other examples of this connection can be found in Refs. \cite{Vijay2016, Devakul2018, Shirley2018, Williamson2019}.

The SRE$\to$LRE construction in this paper closely follows that of the layer construction in Refs. \cite{Ma2017, Vijay2017}, where the paradigm Type-I X-cube model is formed from layers of the $d=2$ toric code. There, Brillouin-Wigner perturbation theory is used to show how an infinite strength perturbation to the stacked layers of the $d=2$ toric code realizes the X-cube model, a result we review below. However, we take the analysis of their result and ours further by relying on the fact that all three constitutive models: the parent layers, the target fracton model, and the perturbation, all form {\it stabilizer codes}. Stabilizer codes were first introduced in \cite{Gottesman1997} in the context of quantum error correcting codes (QECC), but they also serve as exactly solvable models for topological phases. In a paper by the author \cite{Schmitz2019a}, it is shown that all stabilizer codes can be abstracted to the notion of a {\it linear gauge structure}. In this paper, we extend the theory of gauge structures to include what we refer to as {\it gauge substructures}, whereby one obtains one gauge structure from another by essentially restricting the set of operators which act on the system while simultaneously restricting the set of commuting operators which define the stabilizer code. As a result, all the layer constructions we present can be couched in this gauge substructure language. This abstraction comes with several benefits: 1) the process is much more constructive, whereby one determines the parent and target models, and the perturbation can be extracted from the analysis. 2) One can rigorously define and thus identify what makes SRE$\to$LRE distillation qualitatively distinct from anyon condensation. 3) Finally and most importantly, we can rigorously answer questions about SRE$\to$LRE that are more obscure in the perturbation analysis and at best only intuited from the results. For example, from the results of the fracton distillation, we find that subsystem symmetries--in layer planes for Type-I and non-layer planes for Type-II--are responsible for the LRE as they become or are ``distilled'' into logical operators which, at least in part, define the topological sectors. One might recognize and infer this from the perturbation analysis, whereas the gauge substructure analysis proves this is generic: subsystem symmetries which obey certain rules are always distilled into logical operators. Similarly, we also note that as the parent phase is not topologically ordered, i.e. there is a unique ground state, the perturbation could take this state to anyone of the many ground states of the target. So one asks, which ground state do we land on and why? The perturbation analysis may answer which state we land on, but the gauge substructure answers why: the perturbation is not unique, and the  perturbation ``selects'' some set of distilled subsystem symmetries over others. The state we land on is the $+1$ eigenstate of those distilled logical operators. This result also suggests an interesting possibility: we can use SRE$\to$LRE fracton distillation to form a fracton QECC initialized in a fiducial state of our choosing using only finite depth circuits and local measurements or with some generalization, possibly realize a fault-tolerant measurement-based quantum computer by appropriately choosing the perturbation. We discuss this possibility near the end of the paper.

The structure of the remaining paper is as follows: The main body can be divided roughly into two parts. The first deals with the layer constructions from the perspective of perturbation theory (pages \pageref{sec:stab}-\pageref{sec:part2}), and the second part takes the perspective of gauge structures and substructures (pages \pageref{sec:part2}-\pageref{sec:con}). Both are valuable for understanding the results, but those readers less interested in technical details and abstraction may wish to focus more on the first part. We start by reviewing the stabilizer formalism in Section \ref{sec:stab}. From there we discuss the general perturbation analysis used for our examples in Section \ref{sec:PT}, and then apply these ideas to the examples in Section \ref{sec:pert}, where all models are explored in detail.  In Section \ref{sec:genprop}, we extract some of the general properties found for SRE$\to$LRE distillation, thus ending the first part. We then start the gauge structure analysis by reviewing the original definition in Section \ref{sec:lgsdef} and defining a gauge substructure in Section \ref{sec:gs}. Section \ref{sec:dsum} defines the direct sum of gauge structures as well as the distinction between condensation and distillation in the language of gauge substructures as a direct sum and also gives a useful necessary of sufficient condition for condensation. We then deep dive into SRE$\to$LRE distillation in Section \ref{sec:sretolre} where we prove that most logical operators are distilled from subsystem symmetries and how the exact form of the perturbation determines the resulting ground state of the fracton model. We then apply these results to our examples in Section \ref{sec:subex}. Section \ref{sec:ERGandMBQC} briefly discusses how gauge substructures are similar to entanglement renormalization and how SRE$\to$LRE can be used to form a fracton QECC as well as conjectured to be a means of realizing a fault-tolerant measurement-based quantum computer. We finish with concluding remarks in Section \ref{sec:con}.

\subsection{Notation}

We use small Roman letters for operators, capital Roman letters for finite simple sets and generic elements of sets (when there is no confusion) and script Roman letters for linear or algebraic spaces. All maps use Greek letters. The composition of maps $\alpha$ and $\beta$ is written as $\alpha \beta$, $\star$ represents the pullback, $\iota_{\mc A}$ the inclusion map for the space $\mc A$ and $\id_{\mc A}$ the identity map on $\mc A$. $\dim \mc A$ is the dimension of $\mc A$. For the sake of readability, we use square-brackets for the image and pre-image i.e, $\alpha[A]$ and $\alpha^{-1}[A]$, respectively. For spaces defined with some two-form $\omega :\mc A \times \mc A \to \mb F$, where $\mb F$ is the base field of $\mc A$, we define for all subspaces $\mc B \subseteq \mc A$ the perpendicular complement with respects to $\omega$, $\mc B^{\perp_\omega}$ as the set of all $A\in \mc A$ such that $\omega(A, B)=0$ for all $B \in \mc B$. We also use $+$ for the symmetric difference\footnote{Recall the symmetric difference is give by $A + B = A \cup B - A\cap B$.}, where there should be no confusion when used for sets, $\wp(A)$ for the power set of $A$ or the set of all subsets of $A$ and $|A|$ for the number of elements in the finite set $A$. For all figures, we use magenta to represent X-type support, cyan for Z-type support and color code plaquettes in different directions via: red $\rightarrow \hat{e}_1$ direction, green $\rightarrow\hat{e}_2$ direction and blue $ \rightarrow\hat{e}_3$ direction.

\section{The Basic Setup}

\subsection{Stabilizer Codes}\label{sec:stab}

All models in this paper can be described in the language of stabilizers codes which implies they all fit within the framework of a linear gauge structure as introduced in \cite{Schmitz2019a} and reviewed here in Section \ref{sec:lgsdef}. Most of the information in this section is extracted from Refs.\cite{Gottesman1997, Haah2011, Haah2013, Terhal2015, Schmitz2019a}. Those familiar with stabilizer codes may wish to skip this section. 

Consider a system of $N$ spin-$\hlf$ or qubit degrees of freedom, as contained in the set $Q$, such that the Hilbert space is $\mc H\simeq \mb C_2^{\otimes N}$. Let $\mc P$ be the set of all products of single-qubit X-, Y- and Z-type Pauli operators (simply referred to as Pauli operators) acting on the qubits of our system, modulo any phase of $\pm1, \pm i$. We use $x_i,y_i$ and $z_i$ for the single-qubit X-, Y-, and Z-type Pauli operators for qubit $i$. Ignoring the phase allows us to treat $\mc P$ as a vector space over the field of two elements, $\mb F_2$, where addition of two Pauli's is  given by their product and scalar multiplication corresponds to the power. Since $f^2 \propto \id_{\mc H}$, $\mb F_2$ is the appropriate field. For any member $f \in \mc P$ we define the {\it support} as
\begin{align}
\supp(f) =\{i \in Q: f_i \not \propto \id_{\mb C_2} \},
\end{align}
where $f_i$ is the part of $f$ acting in the product space associated with $i\in Q$. Likewise for any $F\subseteq \mc P$, $\supp(F) = \bigcup_{f \in F} \supp(f)$. As we have modded out the phase, it seems we have thrown away the commutation relations. We can recover this information by introducing the symplectic form $\lambda: \mc P\times \mc P \to \mb F_2$, which encodes the commutation relations via
\begin{align}\label{eq:lamdef1}
(f,g) \mapsto \lambda(f,g)= \begin{cases}
0 & \text{ if $f$ and $g$ commute,}\\
1 & \text{otherwise.}
\end{cases}
\end{align}
As any two Pauli's either commute or anti-commute, this encodes all commutation relations. It is symplectic since it is a bi-linear non-degenerate form which satisfies $\lambda(p,q)= -\lambda(q,p)$. We have a natural basis for $\mc P$, namely the set of all single qubit $X$- and $Z$-type Pauli's (as $y_i \propto x_i z_i$). As this is a basis, we conclude $\dim \mc P = 2 N$. Note this basis also has a special property that it divides into two subsets $\{x_i : i \in Q\}$ and $\{z_i: i \in Q\}$ such that $\lambda(x_i, x_j) = \lambda(z_i,z_j)=0$ and $\lambda(x_i, z_j) = \delta_{ij}$. This form of basis is general for a symplectic vector space. That is, given any basis, one can always form a {\it canonical basis},  $\{f_i\}_{i\in \mb Z_N}\cup \{g_i\}_{i\in \mb Z_N}$ such that $\lambda(f_i, f_j) = \lambda(g_i,g_j)=0$ and $\lambda(f_i,g_j) = \delta_{ij}$. As a corollary, the maximum number of independent, mutually commuting operators is $N$. 

We now define the stabilizer set and its associated group.

\begin{definition} \label{def:stab}
 The set $ S \subset \mc P$ is a stabilizer set if at minimum it satisfies the following:
\begin{enumerate}
\item $ \lambda[S \times S]=\{0\}$, i.e. it is composed of mutually commuting operators and
\item $ id_\mc H \notin  S$.
\end{enumerate}

Though these conditions are technically sufficient, it is often best to include a few additional restrictions:
\begin{enumerate}
\setcounter{enumi}{2}
\item $|S| \geq N$, i.e. there are at least as many stabilizers as there are qubits in the system,
 \item $\supp( S) =Q$, i.e. every qubit is acted upon nontrivially by at least one member of $S$ and
\item Any symmetry of the {\it stabilizer group} is also a symmetry of $S$.
\end{enumerate}
\end{definition}

The last requirement involves the stabilizer group denoted by $\mc G$ and defined as
\begin{align}
\mc G= \{ \prod_{s \in F} s \mod(\pm 1, \pm i) : F\in \wp (S)\}.
\end{align}
Thus $\mc G \subset \mc P$ is the set of all Pauli operators generated by taking products of members from $S$, modulo any phase. We refer to members of $S$ as stabilizers and members of $\mc G$ as stabilizer group elements.\footnote{This is not the usual terminology as all members of $\mc G$ are typically called stabilizers. But for the purposes of this discussion, it is better to distinguish only members of $S$ as the stabilizers.} To further explain requirement 5 of Definition \ref{def:stab}, we let the set of symmetries of $\mc G$ be denoted by $\Pi$ which contains members $\pi : \mc P \to \mc P$ which are bijective maps that preserves all of $\mc G$. i.e. $\pi[\mc G] = \mc G$. So our last requirement means that if $\pi \in \Pi$, then $\pi[S] = S$. 

The stabilizer code is then described as the subspace
\begin{align}
\mc H_{\text{code}} =\{ \ket{\psi} \in \mc H : s \ket{\psi} = \ket{\psi}, \text{ for all } s\in S\},
\end{align}

or all states which are simultaneous $+1$ eigenstates of  members of $S$. The stabilizer code is used to encode logical information in a fault-tolerant way \cite{Terhal2015}. $\mc H_{\text{code}}$ can also be described as the ground space of the Hamiltonian
\begin{align}\label{eq:stabham}
H_{S} = \hlf \sum_{s \in S}j_s (1-s),
\end{align}
where $j_s>0$ and the phase of $s \in S$ is chosen such that it is Hermitian. Note that for all $s$, $\hlf (1-s)$ is the projection onto the $+1$ eigenstates of $s$. We choose this form so that the ground space eigenvalue is zero. 

 All examples presented here are Calderbank, Shor and Steane (CSS) stabilizer codes \cite{Terhal2015}. That is, stabilizers are composed of either all $X$-type or $Z$-type operators. Even though our examples are of the CSS form, it is not an essential ingredient for the general results of this paper.  

As with the Pauli space, we introduce another $\mb F_2$ vector space $(\wp (S), +)$ as generated by the stabilizer set where we shall use $\mc A =\wp(S)$. This is also equipped with a two-form $\omega: \mc A \times \mc A \to \mb F_2$ such that
\begin{align}
(A,B) \mapsto \omega(A,B) =|A \cap B| \mod 2.
\end{align}
Unlike $\lambda$, this is not a symplectic form, nor is it an inner product as $\omega(A,A)=0$ does not imply $A=0$. However, it does have the required properties such as bi-linearity and non-degeneracy. If we were to map the space $\wp (S)$ onto $\{0,1\}^{|S|}$, we recognizes $\omega$ as the binary ``dot product,'' so this definition is natural. $\omega$ can be used to construct vectors using the special singleton basis $\{\{s\}: s\in S\}$ via 
\begin{align} \label{eq:recon}
A= \sum_s \omega(A, \{s\}) \{s\}.
\end{align}
It is not immediately obvious that this space is important for the perturbation analysis, but it as well as $\omega$ are used to form the linear gauge structure of a stabilizer code and is  consequential for the excitations of the stabilizer Hamiltonian as argued below.
 
To relate members of $\mc A$ back to operators, we define the {\it stabilizer map} $\phi:\mc A \to \mc P$ such that
\begin{align}
A \mapsto \phi(A)= \prod_{s\in A}s. 
\end{align}
Note that $\im \phi = \mc G$. It should be clear that $\phi$ is linear, however, it is neither surjective nor injective. We call the space $\ker \phi$  the {\it constraint space} and its members constraints for which $C \in \ker\phi$ implies $\prod_{s\in C} s \propto \id_{\mc H}$.  In all cases considered here, some of these constraints have special meaning with respects to the topology of the system. This distinction is made precise in \cite{Schmitz2019a}, but can be roughly understood as those constraints which are lost when the topology of the underlying system is changed.\footnote{When we say changing the topology of the system, we really mean changing the topology of the Hamiltonian/stabilizers, not the Hilbert space. This means that the number of qubits is unchanged, but some sub-extensive number of stabilizers have been altered so as to change the topology. A precise definition is given in the Reference.} We refer to a system with trivial topology as having {\it open boundary conditions} (obc) while a non-trivial topology--typically a $d$-torus in $d$ dimensions-- is referred to as having {\it periodic boundary conditions} (pbc). 

 Excitations for the Hamiltonian in Eq.~\eqref{eq:stabham} can be characterized by stabilizer eigenstates such that all member of a subset of $S$ have eigenvalue $-1$. We can characterize such excitations by the map $\psi: \mc P \to \mc A$ such that 
\begin{align}
f \mapsto \psi(f)= \{s \in S: \lambda(s,f)=1\}.
\end{align}
 We refer to this as the {\it syndrome map} and members of its image syndromes due to the connection with error syndromes in the context of error correction. Individual members of a syndrome are called fundamental excitations and subsets composite excitations. This map tells us which stabilizers ``flip'' (change eigenstate) under the action of $f$ on any ground state.  It also has a fundamental connection to $\phi$ and $\omega$ through the {\it braiding relation}
\begin{align}\label{eq:stabgauge}
\lambda(\phi(A), f) = \omega(A, \psi(f)).
\end{align}
for any $A \in \mc A$ and $ f \in \mc P$. 
%
%
This condition, along with the constraint space, is proven to imply a kind of $\mb Z_2$ charge conservation via the following statement\cite{Schmitz2019a}:
\begin{theorem}\label{prop:stabcon}
$J$ is a syndrome i.e. $J \in \im\psi$ if and only if $ \omega(C, J)= 0$ for all constraints $C \in \ker \phi$. 
\end{theorem}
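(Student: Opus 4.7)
The forward direction (``only if'') is an immediate corollary of the braiding relation \eqref{eq:stabgauge}: if $J = \psi(f)$ for some $f \in \mc P$, then for any $C \in \ker\phi$ we have $\phi(C) \propto \id_{\mc H}$ (so it is the zero vector of $\mc P$), hence $\omega(C, J) = \omega(C, \psi(f)) = \lambda(\phi(C), f) = \lambda(0,f) = 0$. So the work lies in the converse: showing $(\ker\phi)^{\perp_\omega} \subseteq \im\psi$, i.e. that every element $\omega$-orthogonal to all constraints is actually realized as the syndrome of some Pauli operator.

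I would attack this by a dimension count rather than by constructing an $f$ explicitly, since the braiding relation is tailor-made to compare the two sides. First, observe that $\im\psi \subseteq (\ker\phi)^{\perp_\omega}$ by the forward argument above, so it suffices to show these two subspaces have the same $\mb F_2$-dimension. Non-degeneracy of $\omega$ on $\mc A$ gives
\begin{equation*}
\dim (\ker\phi)^{\perp_\omega} = \dim \mc A - \dim \ker\phi = \dim \im\phi,
\end{equation*}
where the last equality is rank-nullity applied to $\phi$. It therefore suffices to prove $\dim \im\psi = \dim \im\phi$.

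To see this, I would identify $\ker\psi$ with a $\lambda$-perpendicular complement. Since $\psi(f) = \{s \in S : \lambda(s,f)=1\}$ and the stabilizer group $\mc G = \im\phi$ is $\mb F_2$-spanned (as a subspace of $\mc P$) by $S$, bilinearity of $\lambda$ gives $\ker\psi = S^{\perp_\lambda} = (\im\phi)^{\perp_\lambda}$. Non-degeneracy of the symplectic form $\lambda$ on $\mc P$ then yields $\dim \ker\psi = \dim \mc P - \dim \im\phi$, and applying rank-nullity to $\psi : \mc P \to \mc A$ gives $\dim \im\psi = \dim \mc P - \dim \ker\psi = \dim \im\phi$, matching the count above and closing the argument.

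The main technical subtlety, and the only place one has to be careful, is the identification $\ker\psi = (\im\phi)^{\perp_\lambda}$: although $\psi$ is defined by testing $\lambda$ only against the generating set $S$ rather than all of $\im\phi$, bilinearity over $\mb F_2$ upgrades this to orthogonality against every product of stabilizers, so the identification is valid. Once this is in hand, the two non-degeneracy statements plus rank-nullity do all the remaining work, and the theorem follows from the inclusion $\im\psi \subseteq (\ker\phi)^{\perp_\omega}$ together with equality of dimensions.
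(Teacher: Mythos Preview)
Your proof is correct. The paper does not actually prove this statement in-line; it cites it from Ref.~\cite{Schmitz2019a} and later remarks that it is the special case of the general BrLE rules (Theorem~\ref{th:fund}) for linear gauge structures, namely $(\ker\phi)^{\perp_\Omega}=\im\psi$ and $(\im\phi)^{\perp_\Lambda}=\ker\psi$. Your dimension-counting argument is precisely the standard proof of those identities in the finite-dimensional setting: the braiding relation gives the inclusion $\im\psi\subseteq(\ker\phi)^{\perp_\omega}$, and then non-degeneracy of both forms together with rank--nullity forces equality. Note that your intermediate identification $\ker\psi=(\im\phi)^{\perp_\lambda}$ is exactly the second BrLE rule~\eqref{eq:consalt}, so you have in effect reproduced the content of Theorem~\ref{th:fund} directly in the stabilizer-code setting rather than invoking it as a black box. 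This is the natural route and almost certainly matches the proof in the cited reference.
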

which is to say any configuration of excitations is realizable if and only if that configuration overlaps with every constraint an even number of times. Importantly, this is used to argue that fractonic  behavior in stabilizer codes is a direct consequence of a sub-extensive number of intersecting topological constraints. Examples of this are described in the models discussed below.
We can use this formalism to write all projection operators onto the eigenspaces of the stabilizer Hamiltonian in Eq.~\eqref{eq:stabham} using $J \in \im \psi$ as a ``quantum number'' such that 
\begin{align} \label{eq:proj}
p^{(J)}= \frac{1}{|\mc A|} \sum_{A \in \mc A} (-1)^{\omega(A, J)} \phi(A),
\end{align}
with energy 
\begin{align}
E^{(J)}=  \sum_{s \in S} j_s \omega(\{s\},J).
\end{align}
Eq.~\eqref{eq:proj} is a consequence of group representation theory for Abelian groups\cite{Tinkham2003}.\footnote{The astute reader may recognize that $\im \phi = \mc G$ is the true abelian group, in which case we are over counting the number of unique operators of $\mc G$ in the sum. However, every unique operator has exactly $|\ker \phi|$ number of redundant terms, which is normalized by the over counting of $|\mc A|$ over $|\mc G|$ since $\dim \mc A = \dim \mc G + \dim \ker \phi$. Theorem~\protect \ref{prop:stabcon} also implies that the phase factor for all such non-unique terms is also the same. We write the projectors in this way for convenience of labeling the energy of the eigen spaces.}

As a final point, we describe the existence of {\it logical operators}. That is when constraints are present, it is often the case that there are too few independent stabilizers to form exactly one half of a canonical basis. Thus, there are more operators in $\mc P$ which commute with all of $\mc G$. With our maps $\psi$ and $\phi$, we capture all such operators in the {\it logical subspace} which we define as
\begin{align}
\mc P_\ell(S) = \ker \psi / \mc G = \ker \psi / \im \phi.
\end{align}
A member of $\mc P_\ell (S)$ is really an equivalence class of operators, each of which is referred to as a logical operator as they collectively form a Pauli sub-algebra for information stored in $\mc H_{\text{code}}$ \cite{Terhal2015}. This immediately implies $\log_2 \dim \mc H_{\text{code}} =\hlf \dim \mc P_\ell(S) = d_\ell$ as only that many independent operators in $\mc P_\ell(S)$ can mutually commute and thus have a fixed eigenstate in $\mc H_{\text{code}}$. Thus the ground state degeneracy (GSD) of the stabilizer Hamiltonian is $2^{d_\ell}$. This space also defines the {\it code distance} as $R = | \min_\supp \mc \mc P_{\ell}(S)|$, i.e. the size of the support of the smallest logical operator. If no logical operator exists, then $R=0$. 
 From the logical subspace, we define LRE as:
\begin{definition}
A stabilizer model (sequence of stabilizer codes as characterized by parameter $L$) is called \emph{LRE} if and only if the code distance scales with $L$, i.e. $R \sim L^a$, for any $a>0$. Any other stabilizer model which is not LRE is called \emph{SRE}. 
\end{definition}
So throughout this paper, we consider the existence of logical operators as our definition of LRE so long as the range as given by $R$ scales with the system size. This is appropriate as the existence of such logical operators implies that there does not exists a local, finite depth quantum circuit which take a single-qubit product state to a ground state of the model, as is the usual definition of LRE. 

 Just as with excitations, logical operators can be related to constraints, or more specifically to topological constraints. It is shown in Ref.~\cite{Schmitz2019a} that logical operators can be formed by the product of all operators of a topological constraint after going from pbc to obc. Thus, logical operators are formed in the intersection of any boundary with a topological constraint. 

\begin{figure}

\centering

\begin{tabular}{c c}

\subfloat[Smallest p-string.\label{fig:smlp}]{\includegraphics[scale=.65]{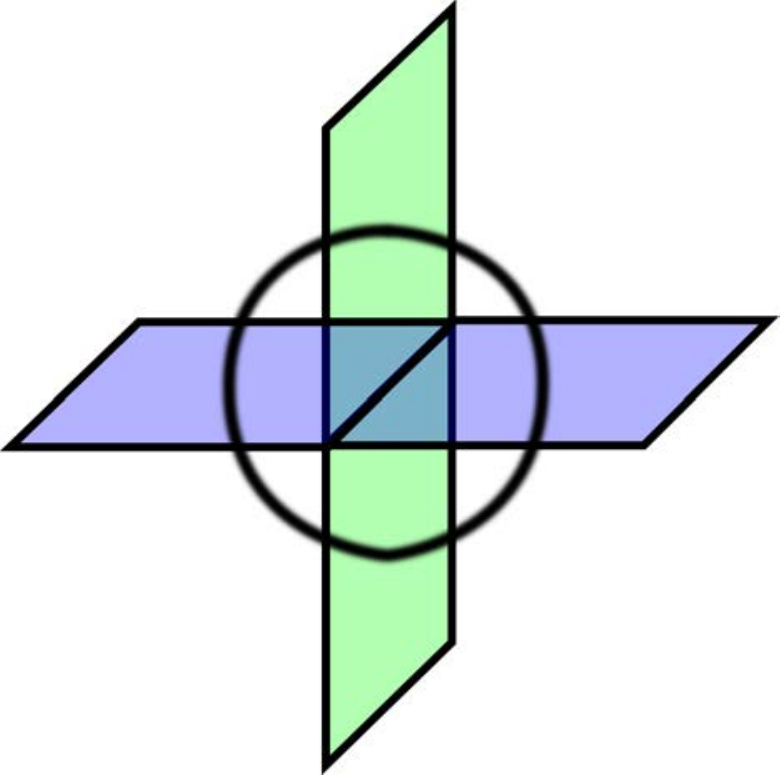}}&

\subfloat[Larger p-string.]{\includegraphics[scale=.35]{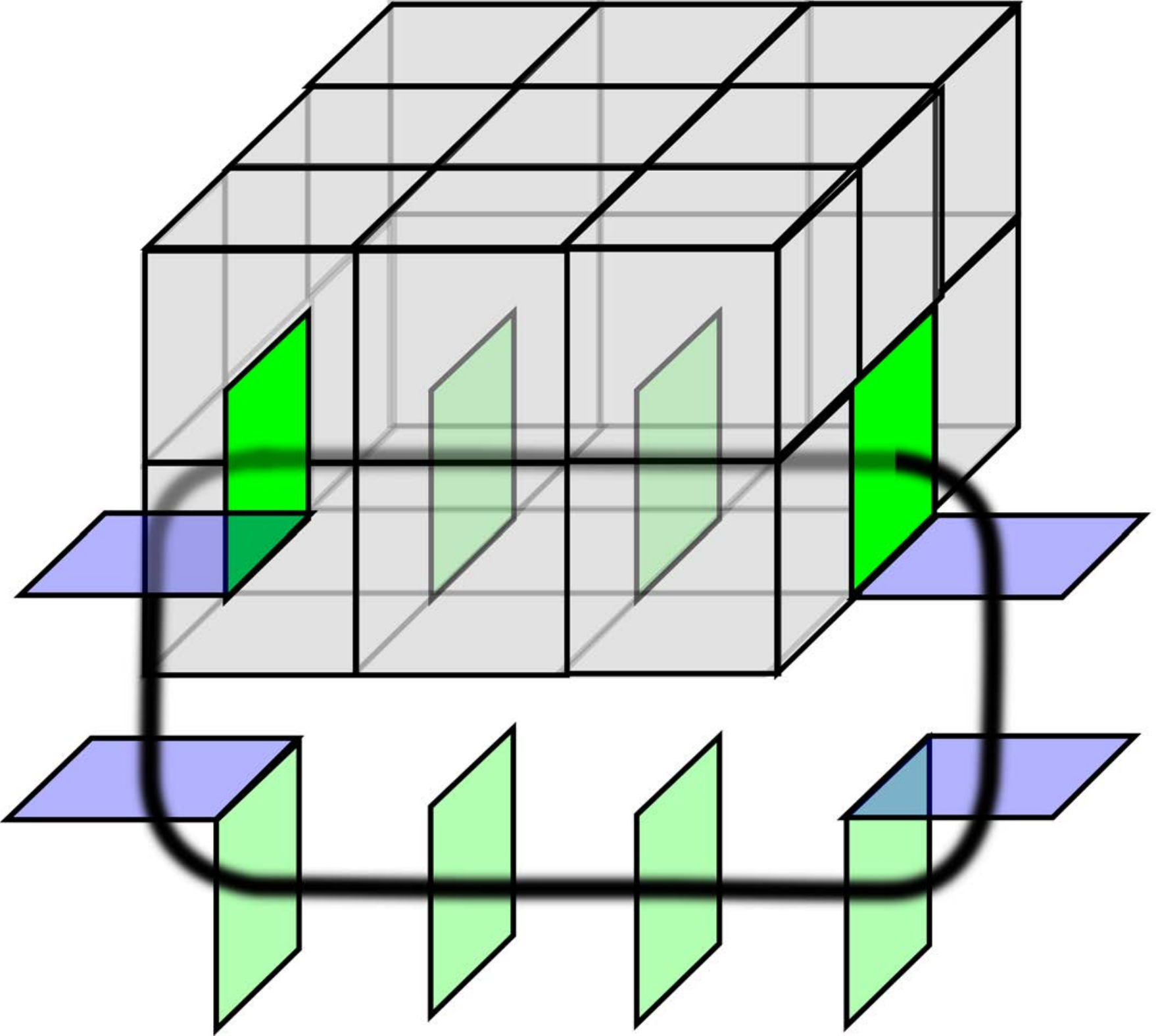}}

\end{tabular}

\caption{An example of how a p-string enters and exits the volume in grey or overlaps with the volume boundary an even number of times.}\label{fig:volume}

\end{figure}

\subsection{Perturbation Theory of Condensation and Distillation}\label{sec:PT}

We start by addressing condensation and distillation from the perspective of perturbation theory by looking at our examples. We follow the procedure used in Ref.~\cite{Ma2017} which uses  Brillouin-Wigner degenerate perturbation theory as reviewed in Appendix \ref{apx:PT}. We start by reviewing the main results of \cite{Ma2017}, namely that one can realize the X-cube model from three coupled stacks, one for each direction, of $d=2$ toric code by condensing so-called ``p-strings.'' Intuitively, a p-string is a collection of toric code plaquette excitations such that we can imagine a continuous string passing through each excited plaquette to eventually form a closed loop. More formally, a p-string is any member of a subspace $\left(\mc D^Z\right)^{\perp_\omega} \subseteq \wp\left(S^Z_{TC_2}\right)^{3L}= \left(\mc A_{TC_2}^Z\right)^{3L}$, i.e. a subspace of all sets of Z-type stabilizers of the toric code stacks. In particular, the space of p-strings is the orthogonal complement with respects to $\omega$ of the subspace $\mc D^Z \subseteq \left(\mc A^Z_{TC_2}\right)^{3L}$. $\mc D^Z$ is generated by all sets of stabilizers which from an elementary cube or, equivalently, is given by the  set of all sets which form contractable, closed 2D membranes.\footnote{ $\mc D$ is used here as we can view this as a discrete, $\mb Z_2$ analog to the space of all vector-valued functions which can be formed via the divergence of a scalar function. Such membrane objects are also referred to by some as ``1-form'' symmetries, though we avoid this terminology as it can cause confusion.} That is, a p-string is all collections which overlaps an even number of times with every elementary cube or, equivalently, a configuration which exits every volume of space it enters, as demonstrated in Fig. \ref{fig:volume}. If we abstract this notion of p-strings to the SSPT stacks, this definition of p-string is important to every model considered here, including the distillation cases.

We focus on three instances of a stacking procedure which result in three distinct models. The parent model consists of  three stacks of $L$ two-dimensional models--one for each direction in three-dimensions. The perturbation or {\it base model} Hamiltonian which couples these layers is then added to the parent model Hamiltonian with a control parameter $K$ making the system Hamiltonian
\begin{align}\label{eq:Ham1}
H= \hlf \sum_{s_p \in S_{\text{parent}}} j_{s_p}(1- s_p) + \frac{K}{2} \sum_{s_b \in S_{\text{base}}} j_{s_b}(1- s_b), 
\end{align}
where $S_{\text{base}}$ is the base model stabilizer set and $S_{\text{parent}}$ is the parent model stabilizer set. We then take $K\to \infty$ and our goal is to realize the effective Hamiltonian whose {\it entire } Hilbert space is that of the extensively degenerate ground space of the base model and whose action is given by
\begin{align}\label{eq:Ham2}
H_{\text{eff}}= \hlf\sum_{s_t \in S_{\text{target}}} j_{s_t}(1- s_t) + \mc O\left(K^{-(\alpha+1)}\right),
\end{align}
up to an overall constant, where $S_{\text{target}}$ is the target model stabilizer set and $\alpha$ is the highest order over all $j_{s_t}$ in $K^{-1}$ . In all cases we consider pbc and generally ignore the relation between the $j$-coefficients of Eqs.\eqref{eq:Ham1} and \eqref{eq:Ham2} though one can derive them in principle as discussed in Appendix \ref{apx:PT}. The parent model is topologically ordered in the case of condensation and an SSPT in the cases of distillation. The target in our three examples is a FTO model, though we do discuss a variation on the condensation case which results in a TO model. For condensation, the FTO is Type-I whereas for distillation, one is Type-I and the other is Type-II. To the best of our knowledge, there is no known condensation process which results in a Type-II model. The base model varies and is not generally unique as discussed in Section \ref{sec:genprop} and is a consequence of the parent and target models. Because the base model is not unique, its properties and phase vary and is generally not important to the final result. As a consequence of using only stabilizer models, a product of parent terms survive the projection onto the ground space of the base model and represents a term in the effective Hamiltonian if and only if that product commutes with all stabilizers of the base model as argued in Appendix \ref{apx:PT}. We then keep the lowest order terms such that only a sub-extensive degeneracy remains for the target model.\footnote{ By extensive and sub-extensive, we mean $\log_2$ of the ground space dimension  grows with the number of qubits either proportionally for extensive or to a power less than one for sub-extensive.}

\section{Models and Analysis of their Perturbation Theory}\label{sec:pert}

\subsection{Condensation Example for Type I Fractons: X-cube} \label{sec:pert1}

Our first example is a Type I fracton model  which can be described as a condensation of anyonic p-strings as argued in Ref. \cite{Ma2017, Vijay2017}. The parent model is layers of the $d=2$ toric code and the target mode is the X-cube model. We review these results as they serve as a contrasting example to distillation.  

\subsubsection{Parent Model}

\begin{figure}

\centering

\includegraphics[scale=.5]{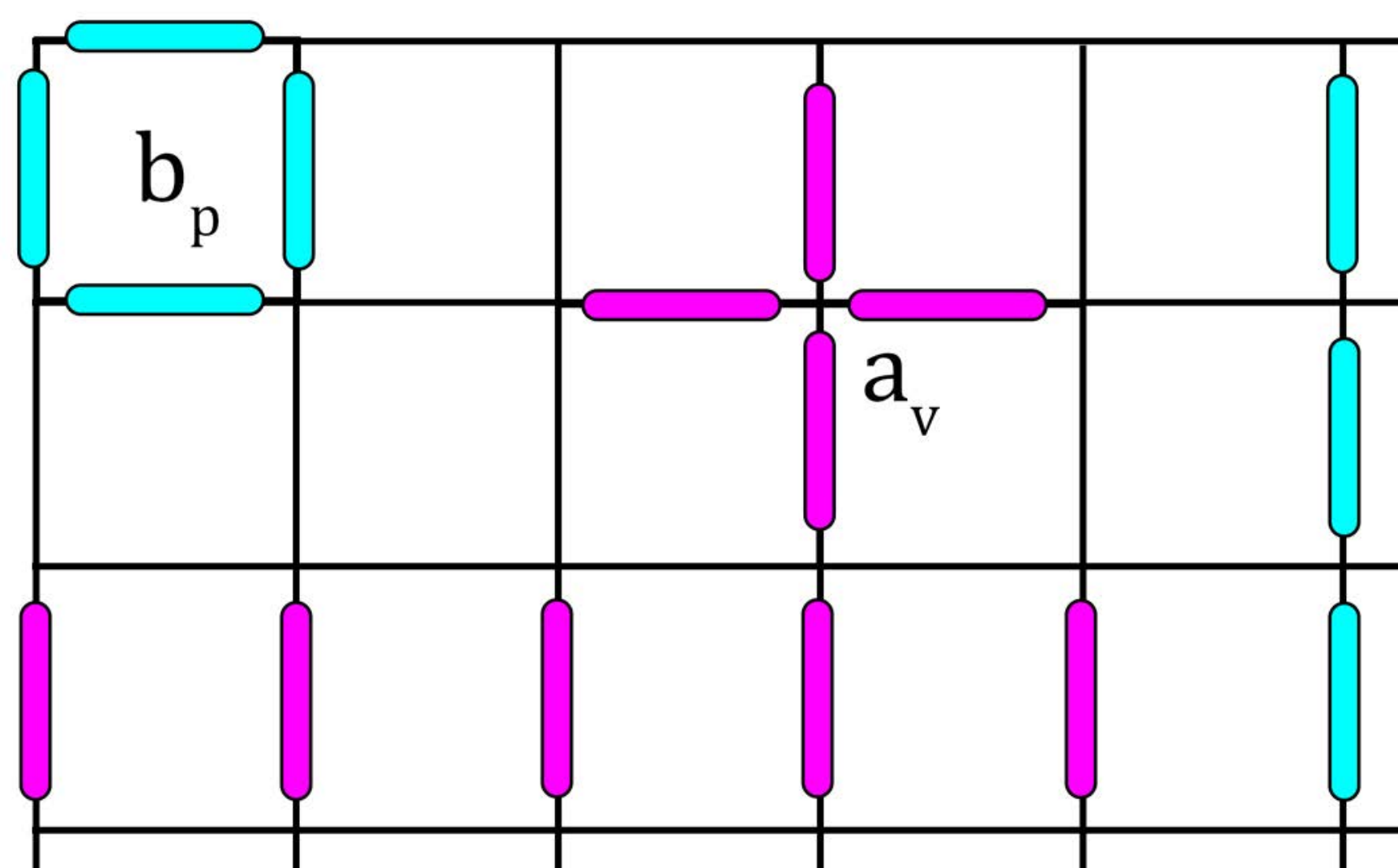}

\caption{Depiction of the operators defining the $d=2$ toric code.}\label{fig:tcdef}

\end{figure}

The parent two-dimensional model is the $d=2$ toric code as introduced in Ref. \cite{Kitaev2003}. The Hilbert space is that of $N=2L^2$ qubits  where each edge of an $L\times L$ square lattice is associated with a single qubit. The stabilizer set is given by 
\begin{align}
S_{TC_2} = \{a_v, b_p: v \text{ vertices and } p \text{ plaquettes}\}.
\end{align}
$b_p= \prod_{i \in p} z_i$, where $i$ indexes the edge qubits about the plaquette $p$ and $a_v = \prod_{i @ v} x_i$, where $i$ indexes the edge qubits coordinated to the vertex $v$. The toric code hosts two independent topological constraints given by the collection of all stabilizers of a given type. This enforces a global (in the plane) $\mb Z_2$ charge conservation via Theorem~ \ref{prop:stabcon}, whereby the $a_v$ terms or ``electic sector'' must be excited in pairs. Likewise is true for the $b_p$ terms or ``magnetic sector.'' $\mc P_\ell(TC_2)$   is given by basis members which are characterized as string operators--two for each no-trivial cycle of the 2-torus--given by forming a pair of electric charges or magnetic charges and wrapping them around the system such that the excitations cancel. All operators are shown in Fig. \ref{fig:tcdef} where we find that $\dim P_\ell(TC_2) =4$,  $R_{TC_2}=L$. Thus the $d=2$ toric code contains LRE.

\subsubsection{Target Model}

The target model is referred to as the X-cube model and is the paradigmatic example of a Type-I fracton phase \cite{Vijay2016}. The Hilbert space is that of $N= 3L^3$ qubits such that each edge of the cubic lattice is associated with one qubit. The stabilizer set is given by
\begin{align}
S_{\text{XC}} = \{a_v^{\hat e_1} , a_v^{\hat e_2}, a_v^{\hat e_3}, b_c : v \text{ vertices and } c \text{ primitive cubes}\}.
\label{eq:HXC}
\end{align}
$a_v^{\hat e_j} = \prod_{ i @ v_j} x_i$, where $i$ indexes the qubits coordinated to vertex $v$ and for edges confined to the plane normal to $\hat e_j$ and $b_c = \prod_{i \in c} z_i$, where $i$ indexes the qubits about the primitive cube $c$ (see Fig. \ref{fig:xcdef}). The properties of the X-cube model are well-known. Elementary excitations of the cube stabilizers are immobile as they cannot hop without generating additional excitations. However, pairs of cubic excitations are free to move in a plane perpendicular to the stacking direction of the two cubes (see Ref. \cite{Prem2017c} for a thorough discussion of excitation hopping in the X-cube model). This restriction on the mobility is a consequence of the sub-extensive number of constraints among these stabilizers. Such constraints are generated by any set which contains all cubes in a plane perpendicular to a coordinate direction. As a single cube lies at  the intersection of three such planar constraints, any syndrome containing that cube must contain at minimum three other excitations, the four of which form a $\mb Z_2$ quadrupole in order to satisfy Theorem \ref{prop:stabcon}. As for the vertex stabilizers, they must form a composite excitation of two terms at a given vertex due to the extensive number of trivial constraints. These constraints are generated by any set containing the three stabilizers associated to a given vertex. Furthermore, the collection of all vertex stabilizers whose support is confined to a coordinate plane is also a constraint. So for every composite excitation at a vertex, there is generally the same composite somewhere along the line at the intersection of the two planes containing the two stabilizers forming the composite (there are also three composite configurations which sit on three intersecting lines). It is in this sense that the composite is mobile only along lines and thus deemed ``lineons.'' Logical operators are the same sting operators as the $d=2$ toric code, but they can only be deformed in coordinate planes. For example, stacked pairs of $Z$-type strings can be deformed in the cubic planes, while $X$-type strings can only be deformed in the coordinate planes. Modding out by these deformations, one finds $\dim \mc P_\ell(XC) =12L-6$ \cite{Ma2017} and $R_{XC}=L$ indicating LRE.  

\begin{figure}

\centering

\includegraphics[scale=.5]{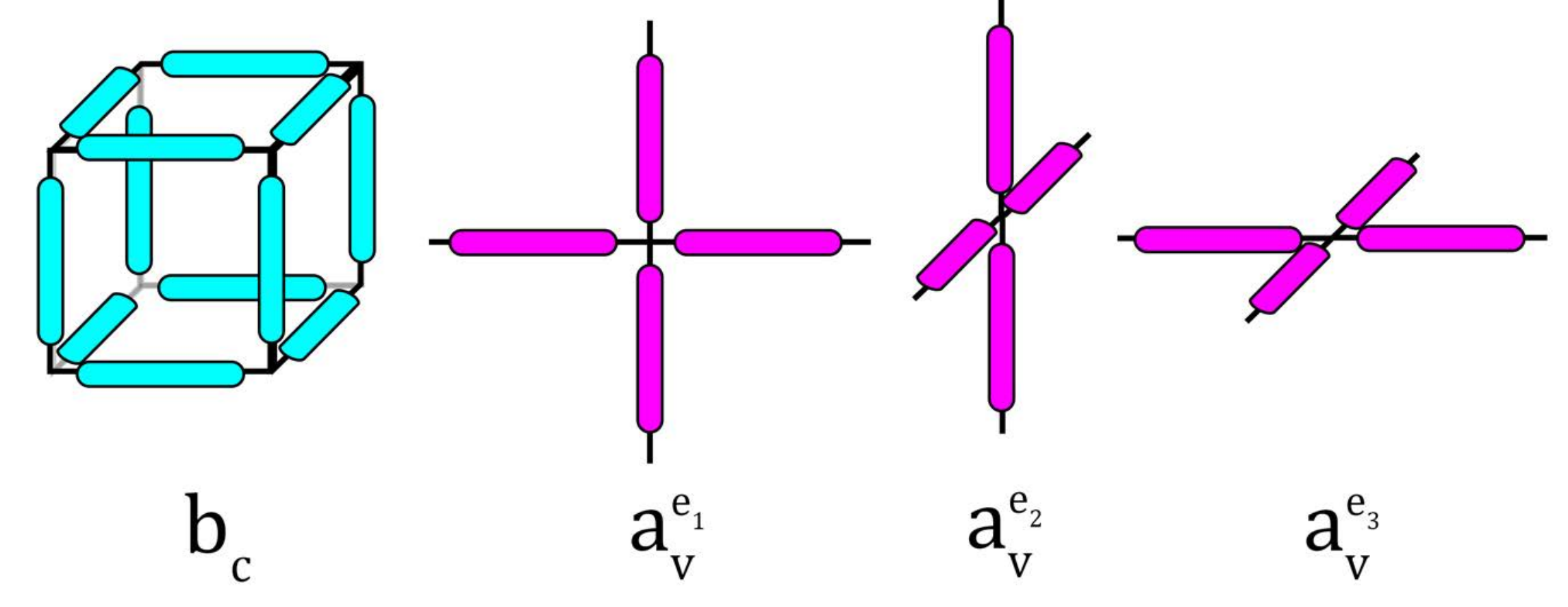}

\caption{Depiction of the stabilizer operators defining the X-cube model.}\label{fig:xcdef}

\end{figure}

\subsubsection{Base Model} 

Once we stack the layers of the toric code, there are $N= 6L^3$ qubits and every edge of the cubic lattice represents a two-qubit unit cell. The base model is then given by
\begin{align}
S_{XX}=\{x_e^1 x_e^2 : e \text{ edges}\}.
\end{align}
Such a stabilizer code is trivial in that $R_{XX}=1$ as given by the fact that $x_e^{1} \simeq x_e^{2} \in \mc P_\ell(XX)$ and its simultaneous eigenstates are product states within the unit cells.

\subsubsection{Perturbation Analysis}

We now apply Brillouin-Wigner perturbation theory to the X-cube example. As previously stated, the stabilizer nature of the parent and base models allows us to find the terms of the effective Hamiltonian in the Pauli operator basis by considering only products of the parent model which commute with all stabilizers of the base model. The $a_v$ terms of the toric code stacks survive at first order as they already commute with the base model stabilizers. These terms alone do not fully lift the extensive base model degeneracy, so we are forced to continue to higher orders. Considering the $b_p$  terms of the parent model, it is not until sixth order that some terms survive. In particular, the product of the six plaquette operators about a cube has support in each unit cell of $z_e^1 z_e^2$ which locally commutes with the base terms $x_e^1 x_e^2$. It was argued in Ref.~\cite{Ma2017} that no terms of order less than sixth survive, but we make an argument here as it carries over analogously for the other models. Consider the commutation of base model terms \{$(xx)_e =x_e^1 x_e^2\}$ with any product of parent model stabilizers. All such products are represented by the space $\mc A_{\left(TC_2\right)_{\text{stacks}}}=\wp\left(S_{TC_2}\right)^{3L}$. Let all products which commute with the base model be represented by a subspace $\mc L \subseteq \mc A_{\left(TC_2\right)_{\text{stacks}}}$. For $A \in \mc L$ and using  Eq.~\eqref{eq:stabgauge}, 
\begin{align}
0=\lambda\left(\phi(A), (xx)_e\right) = \omega\left(A,\psi_{\text{parent}}\left((xx)_e\right) \right) .
\end{align}
So we require that $\mc L = \left( \psi_{\text{parent}}[\mc G_{\text{base}}]\right)^{\perp_\omega}= \left( \im \psi_{\text{parent}}\phi_{\text{base}}\right)^{\perp_\omega}$ which is to say the terms which survive must be orthogonal to the image under $\psi_{\text{parent}}$ of the group of operators generated by the base model terms. That is, we look at the space of parent model excitations generated by base model terms, and any products which overlap an even number of times with those excitations represent the terms which survive the perturbation. The set $\psi_{\text{parent}}\left((xx)_e\right)$ is a small p-string wrapping the edge $e$ similar to the one shown in Fig. \ref{fig:smlp} and from this we see that $\im \psi_{\text{parent}} \phi_{\text{base}}$ is nearly the entire space of p-strings, $\left(\mc D^Z\right)^{\perp_\omega}$--the actual imagine does not contain the topologically non-trivial p-strings which wrap the 3-torus. Thus we conclude that $\mc L = \mc D^Z \oplus  \wp\left(S^X_{TC_2}\right)^{3L}$.\footnote{ Technically, there are additional members contained in $ \left( \im \psi_{\text{parent}}\phi_{\text{base}}\right)^{\perp_\omega}$. Because the image only contains closed, trivially-contractible p-strings, its perpendicular complement contains all of $\mc D^Z$, which includes all even numbers of membranes wrapping the cycles, as well as the topologically non-trivial odd numbers of membranes wrapping the cycles, which are not in $\mc D^Z$. However, the image of these wrapping membranes under $\phi_{\text{parent}}$ are also in the image of $\mc D^Z$ because of the planar toric code constraints, so this technical detail does not change the statement.} The lowest weight generators for $\mc D^Z$ are the sets $\{b_p: p \in c\}$ for every $c$ cube, i.e. the set of six plaquettes which form $c$. Thus no lower order product of plaquettes survive the perturbation. Also note that although we truncate the higher order terms of the effective Hamiltonian, such terms only survive if they land in $\mc D^Z$, so the target model eigenstates are exact and the higher order terms only modify the energy.   Finally to see that we completely recover all properties of the X-cube model, we recognize that the infinite perturbation fixes the degrees of freedom associated with the base model such that  $x_e^{1} \simeq x_e^{2}$ and as a result,
\begin{align}
\prod_j a_v^{\hat e_j} \simeq \id_{\mc H},
\end{align}
which we recognize as the trivial constraint among the vertex terms at $v$.

From the argument for the survival of the sixth order perturbation, we see this process as condensing the p-strings. Moreover, we also recognize the LRE of the X-cube model is inherited from the parent $d=2$ toric code layers. Namely, the string logical operators have an identical form as those from the toric code layers. The only distinction is that due to the coupling, the equivalence classes for these logical string operators are altered. For example, Z-type string operators no longer unambiguously belong to a specific plane, but rather lie at the intersection of two planes. Thus, a single string is ridge along that intersection, but a pair forming a ribbon can be deformed in the plane perpendicular to the stacking direction. This is related to the lack of mobility of vertex bound state excitations. One can also see that topological constraints are inherited from toric code layers. We leave a more detailed discussion of how the anyonic nature of the parent model can be used to understand the fractonic behavior of the X-cube model to Refs.~\cite{Ma2017, Vijay2017}. We do this in part as the parent models for the novel cases below are not anyonic so such details are not helpful for understanding the results below.  

\subsubsection{Connection to $d=3$ Toric Code as a Signature of Condensation} 

As pointed out in Ref.~\cite{Ma2017, Vijay2017}, if one were to instead alter the base model to be
\begin{align}
S_{ZZ}= \{z_e^1 z_e^2 : e \text{ edges } \},
\end{align}
the resulting target model becomes the $d=3$ toric code. The $d=3$ toric code Hilbert space is the same as that of the X-cube model and the stabilizer set is given by
\begin{align}
S_{TC_3} =\{a^{TC_3}_v, b_p : v \text{ vertices  and } p \text{ plaquettes } \},
\end{align}
where $a^{TC_3}_v = \prod_{e @ v} x_e$ or all X-type operators coordinated to $v$  and $b_p$ is the same plaquette operators from the $d=2$ toric code. This model is the prototypical stabilizer code example of TO in $d=3$.  Excitations of the electric sector, or $\{a^{TC_3}_v\}$ terms, must be created in pairs as enforced by the constraint consisting of all such operators. Excitations of the magnetic sector, or $\{b_p\}$ terms, must form p-strings  as enforced by the trivial constraint that the product of the six plaquette operators about any elementary cube is the identity. These excitations must also satisfy a total flux conservation as any p-string must pass through any plane wrapping the 3-torus--which is a constraint--an even number of times. The logical operators are given by Z-type strings that wrap a single direction of the 3-torus, where all such strings wrapping a given cycle are equivalent, and X-type planar operators which wrap any pair of directions of the 3-torus, where all such planar operators wrapping given directions are equivalent. As a consequence $\dim \mc P_\ell(TC_3)= 6$, $R_{TC_3} = L$ and this model is LRE.

The perturbation analysis works exactly the same way, but the subspace of the toric code stacks that survives the perturbation is $ \wp\left(S^Z_{TC_2}\right)^{3L} \oplus \overline{\mc D}^X$ where $\overline{\mc D}^X$ is generated by the sets $\{ a_v^{\hat e_1}, a_v^{\hat e_2}, a_v^{\hat e_3}\}$ for every $v$ vertex i.e. the three stabilizers associated to a vertex. Likewise, $z_e^1 \simeq z_e^2$, which implies for every cube $c$,
\begin{align}
\prod_{p \in c} b_p \simeq \id_{\mc H}.
\end{align}

One should recognize the connection between the three models $\left(TC_2\right)_{\text{stacks}}$, X-cube and $d=3$ toric code. The $XX$ base model is used exactly to condense the p-string excitations of the $d=3$ toric code, whereas the $ZZ$ base model is used to condense the lineon composite excitations of the X-cube model. Likewise, the subspace $\mc D^Z$ which survives the $XX$ base model perturbation is exactly the space which forms the trivial kernel of $\phi_{TC_3}$ i.e these operators are ``removed'' by becoming the identity, whereas the space $\overline{\mc D}^X$ which survives the $ZZ$ base model perturbation is exactly the space which forms the trivial kernel of $\phi_{XC}$. We also recognize that $ \dim \mc P_\ell \left(\left(TC_2\right)_{\text{stacks}}\right) = \dim \mc P_\ell(XC) + \dim \mc P_\ell(TC_3)$. Thus, it would seem that all the LRE of the toric code stacks is ``conserved'' between these two models. All these facts are not coincidental. We argue below that these are all a consequence of the statement, 
\begin{align}\label{eq:tcdecomp}
\left(TC_2\right)_{\text{stacks}} \simeq TC_3 \oplus XC, 
\end{align}
where equivalence and direct sum are defined in the category of $\mb F_2$-linear gauge structures. The existence of such a {\it co-target} model--which the $d=3$ toric code is in this case-- is exactly our definition of a condensation process and is made precise in Section \ref{sec:dsum}. Any case where no such co-target exists is then defined as a distillation process.

\subsection{Distillation Example for Type I Fractons: Cluster-cube}

Our next example suffices as a distillation of Type-I FTO from an SSPT model. The parent model is the $d=2$ cluster model and the target model is a new model we refer to as the cluster-cube model, which we describe in detail below.

\subsubsection{Parent Model}

\begin{figure}

\centering

\includegraphics[scale=.5]{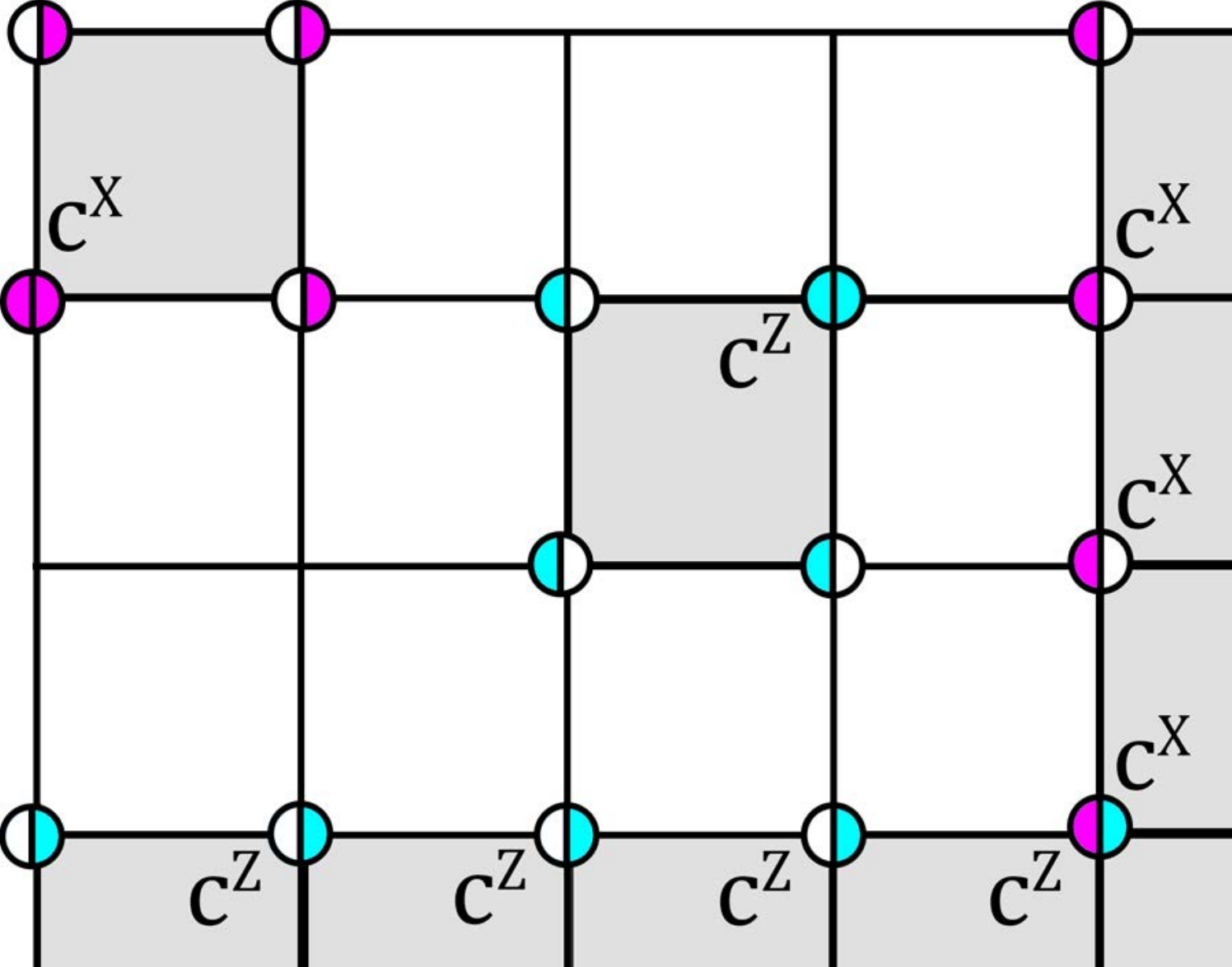}

\caption{Depiction of the stabilizer operators and subsystem symmetries of the cluster model.}\label{fig:clusdef}

\end{figure}

 The parent model is composed of a CSS variant of the cluster model as introduced in Ref.~\cite{Schmitz2019b}. The Hilbert space is that of $N=2L^2$ qubits with two qubits associated with each vertex. The stabilizer set is given by
\begin{align}
S_{Cl} =\{c^X_v, c^Z_v: \text{ vertex } v\},
\end{align}
where $c^X_v, c^Z_v$ are defined in Fig.~\ref{fig:clusdef}.\footnote{To see this equivalent to the cluster model, One can follow Ref.~\cite{Williamson2018} where the usual cluster model has been coarse-grained from one qubit per vertex to two qubits per vertex. Then by applying a Hadamard gate to every second qubit in a vertex unit cell, one finds our version.} To understand this model, consider applying $z^1_v \id_v^2$ for some vertex $v$. This excites only $c^X_v$, and likewise, $\id_v^1 x_v^2$ only excites $c_v^Z$.  Using general combinations of these operators, this implies that we can achieve any syndrome/excitation configuration. This further implies the constraint space is trivial by Theorem~\ref{prop:stabcon}, and because the number of stabilizers is equal to the number of qubits, it must be that $d_\ell =0$ and $ R_{Cl}=0$. Thus, the cluster model contains no LRE, and it would seem there is nothing of interest. However, one can {\bf assert} that there are special members of $\mc G_{Cl}$ which are referred to as subsystem symmetries (SS). Consider the product of all $c_v^X$ along any rigid line in a coordinate direction, and likewise for $c_v^Z$. The resulting operators are
\begin{subequations}\label{eq:sscluster}
\begin{align}
s_{e_1}^X = \prod_{e_2} (x\,\id)_{(e_1, e_2)}=\prod_{e_2} c^X_{(e_1, e_2)} \in \mc G_{cl} ,\\
s_{e_2}^X = \prod_{e_1} (x \,\id)_{(e_1,e_2)}= \prod_{e_1} c^X_{(e_1,e_2)}\in \mc G_{cl}, \\
s_{e_1}^Z = \prod_{e_2} (\id\, z)_{(e_1,e_2)}= \prod_{e_2} c^Z_{(e_1,e_2)}\in \mc G_{cl}, \\
s_{e_2}^Z = \prod_{e_1} (\id\, z)_{(e_1,e_2)}=\prod_{e_1} c^Z_{(e_1,e_2)} \in \mc G_{cl},
\end{align}
\end{subequations}
where $(e_1,e_2) \in \mb Z_L^{2}$ indexes a vertex of the graph, as shown in Fig.~\ref{fig:clusdef}. As argued in Ref.~\cite{Schmitz2019b}, these stabilizer group members gain significance if we limit the operators which act on our system to a subspace of the Pauli space $\mc R \subseteq \mc P$ containing all operators which commute with the subsystem symmetries. As the subsystem symmetries are in $\mc G_{Cl}$, $\mc R$ must contain all of $\mc G_{Cl}$, in which case it is reasonable to only consider $\mc R/\mc G_{Cl}\simeq \mc R'$, where $\mc R'$ is the set of all operators generated by $x^1_v \id_v^2$ and $\id_v^1 z_v^2$ for all $v$. Note this  excludes  operators generated by $z^1_v \id_v^2$ and $\id_v^1 x_v^2$, the exact operators we used to generate an arbitrary syndrome. We are now free to ignore any part of the stabilizers which commute with all $\mc R'$ since they no long affect the syndromes as generated by the function $\psi$ after we limit ourselves to $\mc R'$. The subspace that commutes with all of $\mc R'$ is just $\mc R'$ as it is a maximal, mutually commuting set. As a result, all subsystem symmetries in Eqs.\eqref{eq:sscluster} effectively become constraints and as a result imply effective conservation laws via Theorem~\ref{prop:stabcon}. Looking back to Fig.~\ref{fig:clusdef}, one can see that removing operators from $\mc R'$ makes the cluster model into two copies of the Ising plaquette model. The Ising plaquette model is a classical model with one qubit (or classical spin) per vertex of the square lattice and stabilizers given by
\begin{align}
S_{IPM}= \{\prod_{v \in p} z_v : p \text{ plaquettes }\}.
\end{align}
Action by $x_v$ creates a $\mb Z_2$  quadrupole and in general, excitations must satisfy $\mb Z_2$ conservation laws along all rigid $d=1$ subsystems in the coordinate directions, exactly the conservation laws implied by Eqs.~\eqref{eq:sscluster} when all operators from $\mc R'$ are removed. This model is also the paradigm example of fractons in two dimensions \cite{Yan2019} (though not FTO strictly speaking). This process of restricting to subspaces of operators is exactly the idea behind a gauge substructure as introduced in Section \ref{sec:gs}. 

\subsubsection{Target Model}\label{sec:ccdesc}

We refer to the target model as the cluster-cube model and to the best of our knowledge, this is its first introduction in the literature. The Hilbert space is that of $N=2L^3$ qubits arranged on the cubic lattice such that there are two qubits associated to each vertex. The stabilizer set is given by
\begin{align}
S_{CC}= \{cc_v^X, cc_v^Z: \text{ vertices } v\}, 
\end{align}
where $cc_v^X, cc_v^Z$ are defined in Fig.~\ref{fig:ccubedef}. We focus on the $cc_v^X$ terms, where all the same properties are present for the $cc_v^Z$ terms if one applies the clear duality between these two types of stabilizers. If we apply the operator $z_v^1 \id_v^2$ at some vertex $v$, one excites four cube stabilizers whose centers form an ``upward-facing'' right-angled tetrahedron (henceforth just referred to as a UF tetrahedron). Likewise, applying the operator $z_v^1 z_v^2$ at $v$ generates a tetrahedral pattern which is symmetric to the first by inversion in the $[111]$ direction or a ``downward-facing'' (DF) tetrahedron. As a result, one can generate the syndrome such that all eight cubes about $v$ are excited. Applying the same operator to an adjacent vertex is equivalent to pulling apart two bound states of four excitations. One can continue in this fashion until the four-excitation bound states are annihilated and so the resulting string operator must commute with all $\mc G_{CC}$. It should be clear that this string operator is not a member of $\mc G_{CC}$ and is therefore in $\mc P_\ell(CC)$. To characterize such string operators as logical operators, we need to determine their equivalence classes. Consider taking the product of cube stabilizers along a ridge line in a coordinate direction--similar to that of the subsystem symmetries for the cluster model. If this one dimensional subsystem is, for example, along the $[100]$ direction, then the resulting operator is equivalent to the product of two of our logical operators which both lie in the plane perpendicular to the $[011]$ direction. This implies all logical string operators in the $[011]$ plane are equivalent in $\mc P_\ell (CC)$, and likewise for stings in the $[101]$ and $[110]$ planes. The ``dual'' logical operators which anti-commute with the string logical operators are diagonal string operators which are discussed in detail in Section \ref{sec:distCC}. Furthermore, as we can take pairs of string operators in the $[011], [101]$ and $[110]$ planes, move them apart via the $d=1$ subsystem product of stabilizers and eventual cancel them out so as to remove all support, this implies that any set of cubes forming a $[110], [101]$ or $[011]$ plane is a topological constraints and implies a planar $\mb Z_2$ charge conservation similar to the X-cube model. Also like the X-cube model,  $[110], [101]$ and $[011]$ constraints intersect at a single cube stabilizer, implying fractonic behavior similar to X-cube. As the four-cube bound state is mobile and are found at the end of string operators, this model fits the definition of a Type I fracton model. 

\begin{figure}

\centering

\includegraphics[scale=.75]{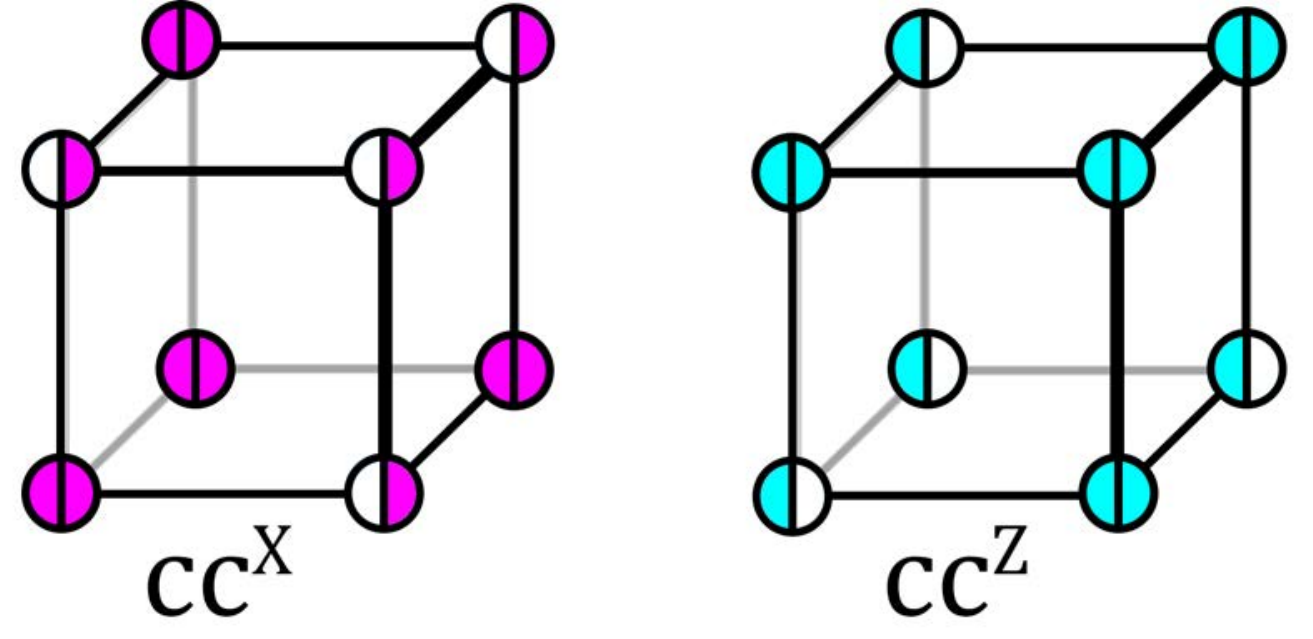}

\caption{Depiction of the stabilizer operators of the cluster cube model.}\label{fig:ccubedef}

\end{figure}

One might be tempted to classify this model as equivalent to X-cube, but string operators are not the only logical operators. If we start from either one of the tetrahedral configurations, we can apply the same operator to adjacent vertices, thereby un-exciting three of the original tetrahedron of cubes and exciting four cubes of a larger tetrahedron as shown if Fig. \ref{fig:fracbuild}. This process can be continued such that we excite the corners of larger and larger tetrahedra and if the system size is such that $L=2^a$ for some integer $a$, then these corners annihilate with one another after $a$ generations. The resulting operator which commutes with all of $\mc G_{CC}$ can be described as a three-dimensional, right-angled version of the Sierpinski fractal. We argue below that this cannot be equivalent to any combination of string operators, nor is it a product of stabilizers. However, not all translated versions of such fractal logical operators are distinct in $\mc P_\ell(CC)$. Consider that a cube stabilizer can be thought of as an UF and DF tetrahedron which are ``glued'' together. The UF tetrahedron is  formed from only $x_v^1 x_v^2$ operators and the DF tetrahedron is formed from only $\id_v^1 x_v^2$ operators. In the same manner as our process for forming the fractal logical operators, we can remove all support of a given type, say $x_v^1 x_v^2$, out to the corners of an ever larger UF tetrahedron and at the $a^{th}$ generation, those four points overlap and all such support is removed. We consider this operator to be a fractal SS operator and is equivalent to the product of four fractal logical operators. To see this, consider our fractal product of stabilizers at some generation less than $a$. Now imagine we remove the four  $x_v^1 x_v^2$ corners and consider the syndrome the resulting operator creates. Naturally, this operator generates the same syndrome as the four  $x_v^1 x_v^2$ corners we removed and can be described as four elementary DF tetrahedral configurations (among the the $cc_v^Z$ operators) centered on vertices of an UF tetrahedron which then annihilate with each other at the $a^{th}$ generation. This implies our fractal SS must be equivalent to four of the UF fractal logical operators arranged in a DF tetrahedron configuration. Furthermore as this is a stabilizer group element, these four fractal logical are equivalent to the identity in $\mc P_\ell(CC)$. Such fractal SS operators also allow us to understand the additional fractal constraints which are implied by the existence of the fractal logical operators. If we represent a fractal logical operator by a single base-point, then a UF fractal SS operator is equivalent to four logical operators whose base-points form an elementary DF tetrahedron. Again, we can combine SS fractals in an analogous DF fractal such that the four logical operator base-points are moved along the corners of a fractal and eventually annihilate with one another. The resulting product of SS fractals, which are themselves products of stabilizers must be equivalent to the identity, and thus this collection of stabilizers is a constraint which enforces the fractal hopping of the excitations.

At this time, we make no attempt at counting the number of independent logical operators, i.e. calculate the GSD for this model, as the exact number is not important for the remaining discussion and may be a complicated function of $L$. We only note  that $\log_2$GSD scales as $\sim L$ which is a common feature of fractonic models. The string operators imply that $R_{CC}= L$ which implies this model contains LRE.

\begin{figure}

\centering

\includegraphics[scale=.22]{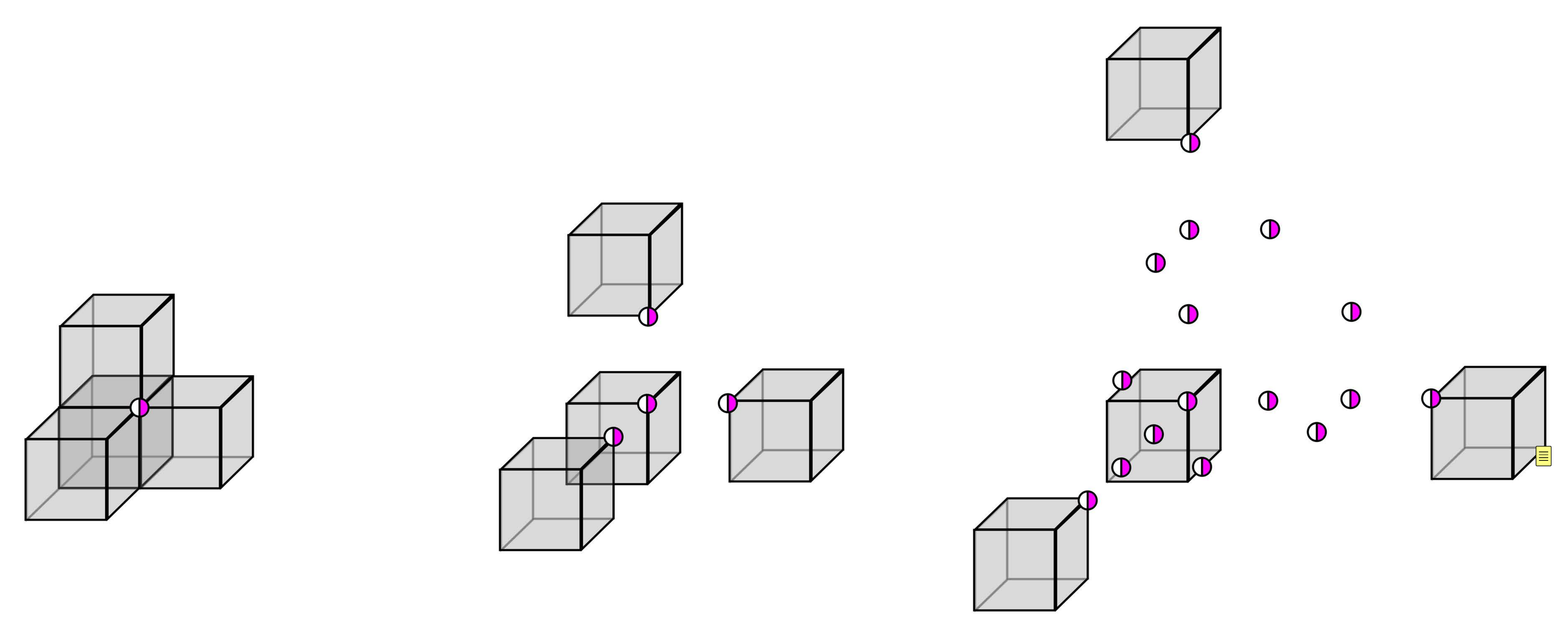}

\caption{Depiction of the process used to build the fractal logical operators.}\label{fig:fracbuild}

\end{figure}

\subsubsection{Base Model}\label{sec:ccbase}

\begin{figure}

\centering

\includegraphics[scale=.4]{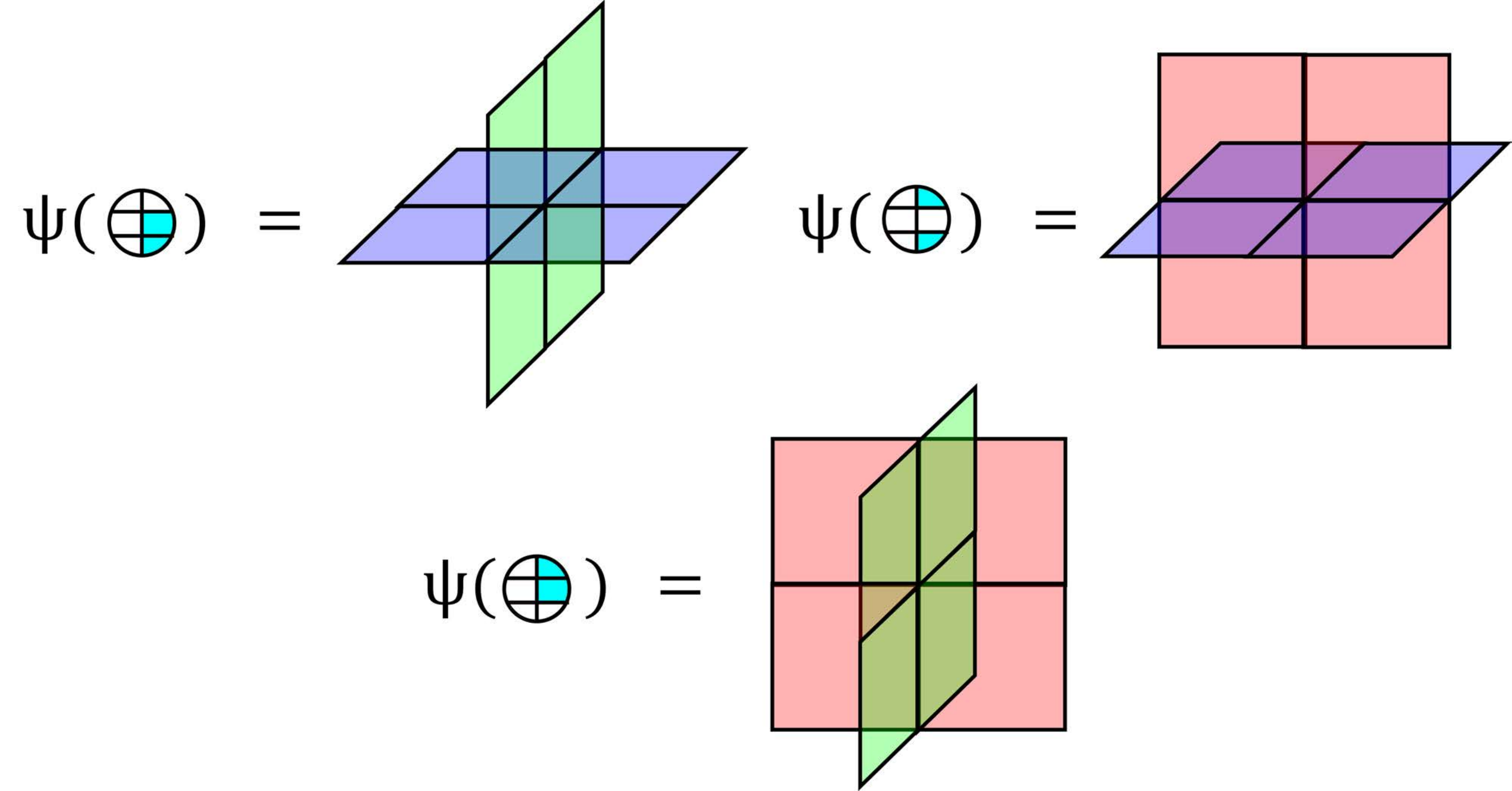}

\caption{X-type loop configurations generated by the Z-type members of $S_{c\hyph \triangle}$. The X-type members generate analogous loop configurations.}\label{fig:2loop}

\end{figure}

Once we combine the three stacks of cluster models, the resulting Hilbert space can be characterized as  $N=6L^3$ qubit where each vertex represents a six qubit unit cell. It is best to further break this up into two unit cells of three qubits. For ease of notation, we then define the following three-qubit operators,
\begin{subequations}\label{eq:3qdef}
\begin{align}
(xxx)^0 =& x^1 x^2 x^3,\\
(xxx)^1 =& \id^1 x^2 x^3, \\
(xxx)^2=& x^1 \id^2 x^3, \\
(xxx)^3=& x^1 x^2 \id^3, \\
(zzz)^0 =& z^1 z^2 z^3,\\
(zzz)^1 =& \id^1 z^2 z^3, \\
(zzz)^2=& z^1 \id^2 z^3, \\
(zzz)^3=& z^1 z^2 \id^3. 
\end{align}
\end{subequations}
Note that $ (xxx)^1(xxx)^2(xxx)^3 =\id$, $(zzz)^1(zzz)^2(zzz)^3 =\id$, and the commutation relations can be summarized as $\lambda\left((xxx)^i,(zzz)^j\right) =1$ for $i\neq j$ and $i,j= 1,2,3$, $\lambda\left( (xxx)^0, (zzz)^0\right) =1$ and all other pairs commute.

For reasons we discuss below, there are multiple possible base models, but we focus on one as it is the most ``local''. The stabilizers are given by
\begin{align}
S_{c \hyph\triangle}= \{ (xxx)^i_v\otimes \id_v, \id_v \otimes (zzz)_v^i: v \text{ vertices and } i=1,2,3 \}.
\end{align}       
Just as with the X-cube example, this perturbation model is trivial with $R_{c\hyph\triangle} =1$  and simple product eigenstates in the local vertex Hilbert spaces. The name $c\hyph\triangle$ is explained below.

\subsubsection{Perturbation Analysis}

To understand the structure of the stacking, we organize the layers so the $x_v^1x_v^2$  base points of $c_v^X$ in each layer meets at the ``origin point'' of the elementary cube. This maintains a three-fold rotation symmetry about the $[111]$ axis. As a consequence, the $z_v^1z_v^2$ base points of $c_v^Z$ in each layer meet at the origin antipode of the elementary cube (see Fig. \ref{fig:ccbuild}). 

Just as with the X-cube example, the perturbation analysis of the cluster cube requires us to determine which product of parent model stabilizers commute with the base model. For $\mc A_{Cl_{\text{stacks}}}=\wp\left(S_{cl}\right)^{3L}$, let $\mc D^X$ be the same subspace generated by the set of six X-type stabilizers which form a cube and likewise for $\mc D^Z$ using Z-type stabilizers. X-type p-strings are then members of $\left(\mc D^X\right)^{\perp_\omega}$ and Z-type p-strings are members of $\left(\mc D^Z\right)^{\perp_\omega}$.  Following the same procedure as X-cube, we consider the space $\im \psi_{\text{parent}}\phi_{\text{base}}$ which is partially generated by the set $\psi_{\text{parent}}\left(   \id_v \otimes (zzz)_v^i\right)$. From the definitions in Eqs.~\eqref{eq:3qdef}, one finds that $\psi_{\text{parent}}\left(   \id_v \otimes (zzz)_v^i\right)$ contains two small X-type p-strings about the two edges coordinated to $v$ and extending in the $\hat e_i$ direction as depicted in Fig.~\ref{fig:2loop}. Likewise, $\psi_{\text{parent}}\left( (xxx)^i_v\otimes \id_v \right)$ contains similar Z-type p-strings. Note these pairs of p-strings are rigid along this line, i.e. no operators of the base model can ``turn'' one p-string relative to the other in any plane. This already suggests that the set of  terms which survive the perturbation, specifically those from $\left( \im \psi_{\text{parent}}\phi_{\text{base}}\right)^{\perp_\omega}$ includes the subspace $\mc D^X \oplus \mc D^Z$. Again, other products can be found in $\left( \im \psi_{\text{parent}}\phi_{\text{base}}\right)^{\perp_\omega}$, even more so than in the X-cube example, and furthermore, we lack any constraints. In particular, $\left( \im \psi_{\text{parent}}\phi_{\text{base}}\right)^{\perp_\omega}$ contains all sets of configurations forming a rigid $d=1$ line in any plane --exactly those forming the subsystem symmetries of the cluster model layers, and so these products survive and are not trivial. Such operators  only appear at $L^{th}$ order in the perturbation and are exponentially suppressed in $L$, however, we show in Section \ref{sec:distCC} that the appearance of such operators ``breaks'' the symmetry of the ground space and fixes which target model ground state we obtain.

So again, the lowest order to survive the perturbation is sixth order and this contributes terms to the effective Hamiltonian given by the product of stabilizers which form a cube and only products of these cubes survive for higher orders, except for the subsystem symmetries which appear at $L^{th}$ order. This alone does not guarantee that the effective Hamiltonian corresponds to the cluster-cube model. If we take the product of cluster model operators which form the cube as in Fig.~\ref{fig:ccbuild}, we find that in terms of the operators defined in Eqs.~\eqref{eq:3qdef}, the resulting product is the cluster cube operators written in terms of the $(xxx)^0$ or$(zzz)^0$ operators times ($+$ in $\mc P$) a triangle operator, $t_c^X$ or $t_c^Z$, which is written in terms of the $(xxx)^i$ or $(zzz)^i$ operators for $i=1,2,3$. Even more specific, they are written in terms of the base model operators which implies $t_c^X, t_c^Z \in \mc G_{\text{base}}$. So as a result of the infinite perturbation which fixes the degrees of freedom for the base model, $t_c^X, t_c^Z \simeq \id_{\mc H}$. This implies our cube operators become that of the cluster cube on effective qubit degrees of freedom defined by the local Pauli algebra generated by $(xxx)^0$ and $(zzz)^0$.

\begin{figure}

\centering

\includegraphics[scale=.4]{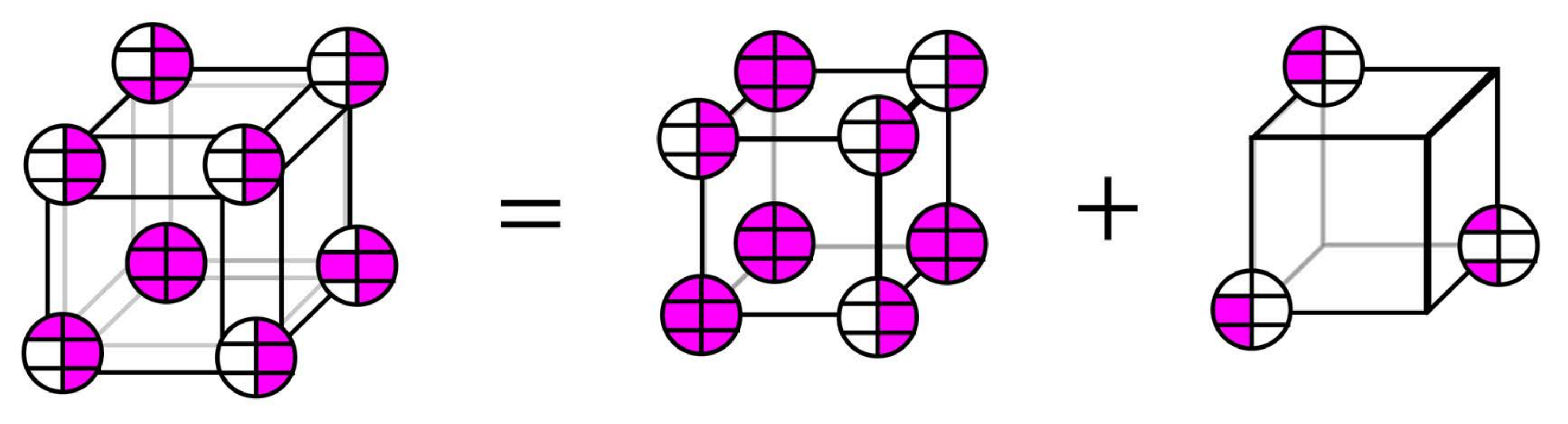}

\caption{Demonstration of how the product of $X$-type cluster model plaquettes forming a cube is broken down to a sum of effective degrees of freedom and members of $G_{c\hyph \triangle}$. }\label{fig:ccbuild}

\end{figure}
 
\subsection{Distillation Example for Type II Fractons: Haah's Cubic Code}

The final case suffices as an example of distillation of Type-II fractons from an SSPT. The parent model is a new SSPT model we deem the quasi-cluster model which is described below and the target model is Haah's cubic code \cite{Haah2011, HaahThesis}. 

\subsubsection{Parent Model}

\begin{figure}

\centering

\includegraphics[scale=.5]{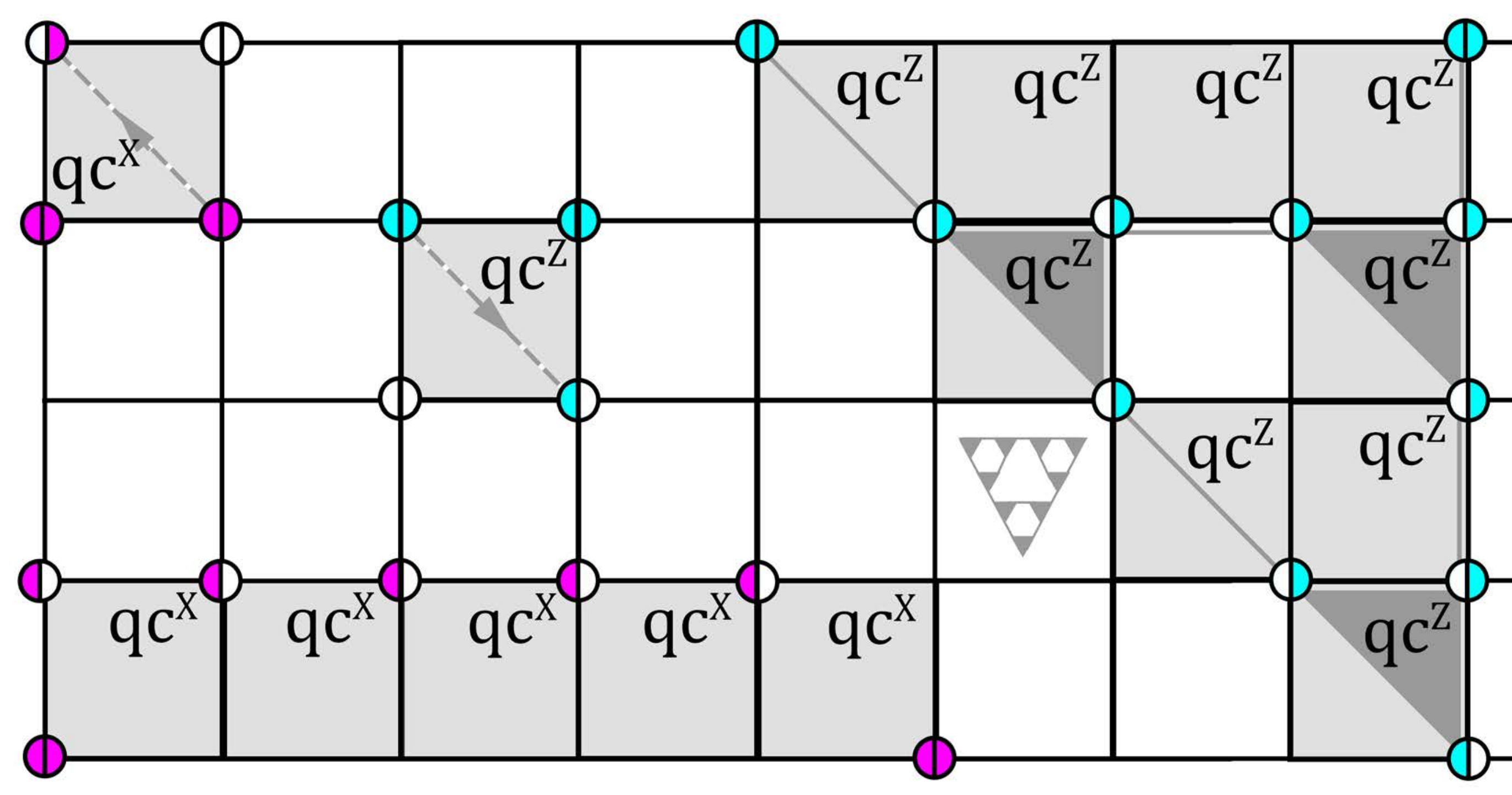}

\caption{Depiction of the stabilizer operators as well as the subsystem symmetries of the quasi-cluster model.}\label{fig:qclusdef}

\end{figure}

The parent model for this example is a $d=2$ SSPT model which we refer to as the quasi-cluster model. To the best of our knowledge, this is the first introduction of this model in the literature. The Hilbert space for the quasi-cluster model is given by $N= 2L^2$ with the stabilizer set given by
\begin{align}
S_{qCl} =\{ qc_v^X, qc_v^Z: v \text{ vertices }\},
\end{align}
which are defined in Fig.~\ref{fig:qclusdef}.\footnote{One can equivalently define this model by reflection of the stabilizers about the main diagonal. Either version can be used for this example.}  We apply the same analysis as with the cluster model, whereby we consider acting with $z_v^1 z_v^2$ for any $v$. This excites only $qc_v^X$. Likewise, acting with $x_v^1 x_v^2$ excites only $qc_v^Z$. By the same logic as used for the cluster model, this implies $d_\ell=0$, $R_{qCl} =0$ and thus this model is SRE. Furthermore, we consider the $d=1$ subsystem symmetries given by 
\begin{subequations}\label{eq:ssdc1}
\begin{align}
s_{e_2}^X = \prod_{e_1} (\id\, x)_{(e_1,e_2)}= \prod_{e_1} qc^X_{(e_1,e_2)}\in \mc G_{qCl}, \\
s_{e_2}^Z = \prod_{e_1} (z \, \id)_{(e_1,e_2)}=\prod_{e_1} qc^Z_{(e_1,e_2)} \in \mc G_{qCl}.
\end{align}
\end{subequations}
We again follow the same logic by considering restricting to $\mc R' \simeq \mc R/ \mc G_{qCl}$, where $\mc R$ is all Pauli operators which commute with the subsystem symmetries. $\mc R'$ can be described as all operators generated by $z_v^1 \id_v^2$ and $\id_v^1 x_v^2$ which forms a maximal, mutually commuting subset of $\mc P$ and just as before, modding out by $\mc R'$ implies our subsystem symmetries are effectively constraints and our stabilizer code becomes two independent stacks of classical Ising chains in the $\hat e_2$ direction.

There is another subsystem symmetry which, although not as clean, essentially has the same properties. For the $qc_v^X$ terms, one recognizes that if we ignore the first qubit at every vertex, these terms have the form of the Newman-Moore model \cite{Newman1999}. This is a classical model of $L^2$ qubits (or classical spins) arranged on the triangular lattice with one qubit  per vertex and the stabilizer set given by
\begin{align}
S_{NM}= \{ \prod_{v \in t}z_v : t \text{  upward triangle } \}.
\end{align}
In our case, we have just skewed the lattice so that the elementary triangles form right angles. For an $L= 2^a$ sized lattice, the product of triangle stabilizers in the form of an $a^{th}$-generation Sierpinski fractal only has support on the three corners of the fractal which now overlap on a single point. This almost suffices to form a constraint and even though it does not do so exactly, it does enforce a local version of Theorem~\ref{prop:stabcon}. This results in the usual fractonic properties of the Newman-Moore model which requires that excitation must be created locally in threes and can only be moved along the ends of a fractal. In the case of the quasi-cluster model, we can form the same fractal to obtain the subsystem symmetries
\begin{subequations}
\begin{align}
s_v^X =\id_v^1 x_v^2 \prod_{v' \in \widetilde{\text{Sierpinski}}_v} x_{v'}^1 \id_{v'}^2 =  \prod_{v' \in\text{Sierpinski}_v} qc_{v'}^X \in \mc G_{qCl}\\
s_v^Z =z_v^1 \id_v^2 \prod_{v' \in \widetilde{\text{Sierpinski}}'_v} \id_{v'}^1 z_{v'}^2 =  \prod_{v' \in\text{Sierpinski}'_v} qc_{v'}^Z \in \mc G_{qCl},
\end{align}
\end{subequations}
where $\text{Sierpinski}_v$, is the usual Sierpinski fractal, and $\widetilde{\text{Sierpinski}}_v$ is the alternative Sierpinski fractal as shown on the right side of  Fig.~\ref{fig:qclusdef}. The primed versions are given by reflecting about the $\hat{e}_1=-\hat{e}_2$ line. Because the base point of  $s_v^X, s_v^Z$ is not a part of the fractal, we cannot apply  the same logic as the other subsystem symmetries. Still, we can restrict to the set of Pauli operators generated by $x_v^1 \id_v^2$ and $\id_v^1 z_v^2$, in which case our stabilizers become two independent copies of the Newman-Moore model and the fractal subsystem maps to the product of Newman-Moore stabilizers and the local fractal charge conservation is enforced.

\subsubsection{Target Model}\label{sec:haahdef}

\begin{figure}

\centering

\includegraphics[scale=.75]{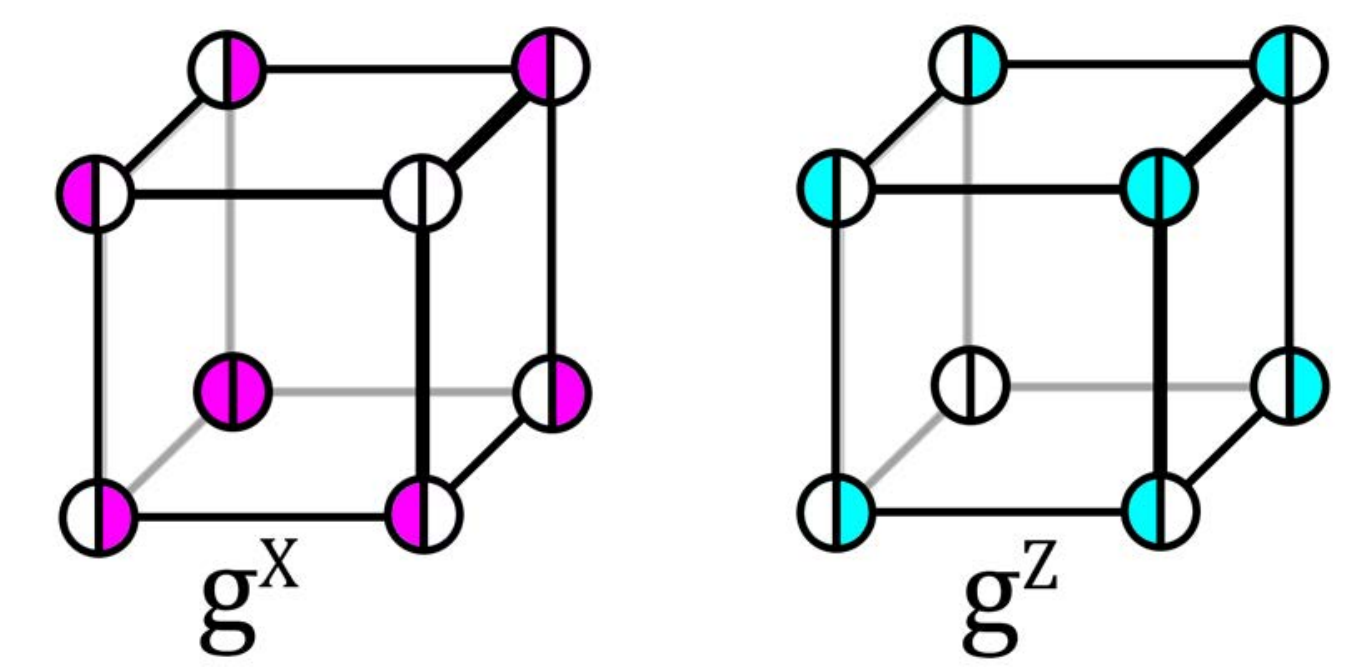}

\caption{Depiction of the stabilizer operators of Haah's cubic code model.}\label{fig:haahdef}

\end{figure}

The target model is Haah's cubic code--or just Haah's code--which is the paradigmatic example of Type-II fractons. The Hilbert space is given by $N=2L^3$ qubit arranged on the cubic lattice such that each vertex represents a two qubit unit cell. The stabilizers are given by 
\begin{align}
S_{\text{Haah}}=\{ g_c^X, g_c^Z: c \text{ cubes } \}, 
\end{align}
where $g_c^X$ and $g_c^Z$ are defined in Fig.~\ref{fig:haahdef}. We primarily study this model by considering the $g_v^X$ operators, but analogous statements hold for the $g_v^Z$ operators by the obvious symmetry between these operators. Applying $ x_v^1 \id_v^2$ generates the same UF tetrahedron excitation pattern as Fig. \ref{fig:fracbuild}, so excitation again move on the ends of a fractal operator and  for $L=2^a$ we find fractal logical operators. Applying $\id_v^1 x_v^2$ instead generates a regular tetrahedron pattern and a similar process results in a different fractal logical operator with a regular tetrahedron geometry. By design, Haah's code contains no string logical operators. To the best of our knowledge, the constraints of Haah's code are not discussed in the literature except for a maximum of 7 found in \cite{Schmitz2019a} which represent all constraints which are periodic in the $[111]$ direction. This clearly cannot be sufficient to characterize all constraint or else excitations would be free to move in the $[111]$ direction, which is not the case. However like the cluster-cube example, we can find a sub-extensive number of constraints that enforce this fractonic mobility. Looking at Fig. \ref{fig:haahdef}, we see that just as with the cluster-cube, we can view $g_v^X$ as the product of two tetrahedron, an UF right-angle tetrahedron formed using $\id_v^1 x_v^2$ operators and a regular tetrahedron using $ x_v^1 \id_v^2$ operators. So for $L=2^a$, we can again multiply the $g_v^X$ operators in a fractal pattern, thereby removing all $x_v^1 \id_v^2$ support, or with a different fractal, all $\id_v^1 x_v^2$ support. By the same arguments as the cluster-cube model, we again find the fractal subsystem symmetry composed of four logical operators, and just as before, these SS can be multiplied in a fractal way such that the resulting product maps to the identity, thus constituting a constraint. The number of independent constraints of this form must scale with the system size. It is known that for $L=2^a$ \cite{Haah2011,HaahThesis}, $d_\ell =4L-2$ which must also be the dimension of the constraint space. 
As there are no string operators for this code, the fractal logical operators  determine the code distance. Counting the number of single qubit Pauli's needed to form these fractals, one finds $R_{\text{Haah}} \leq 4^{(a-1)} =\frac{1}{4} L^2$. Though these fractal logical operators are deformable and this bound may not be tight, it should be clear that $R_{\text{Haah}}\gtrsim L$, so this model contains LRE \cite{Haah2011}.              


\subsubsection{Base Model}

\begin{figure}

\centering

\begin{tabular}{c}
\subfloat[\label{fig:hexdefa}]{\includegraphics[scale=.45]{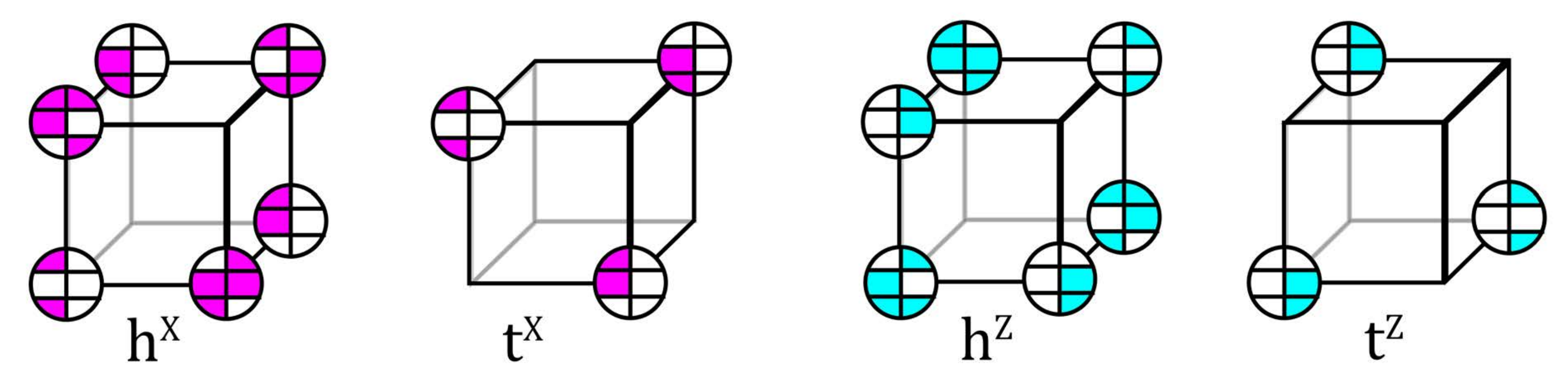}} \\
\subfloat[\label{fig:tritc}]{\includegraphics[scale=.45]{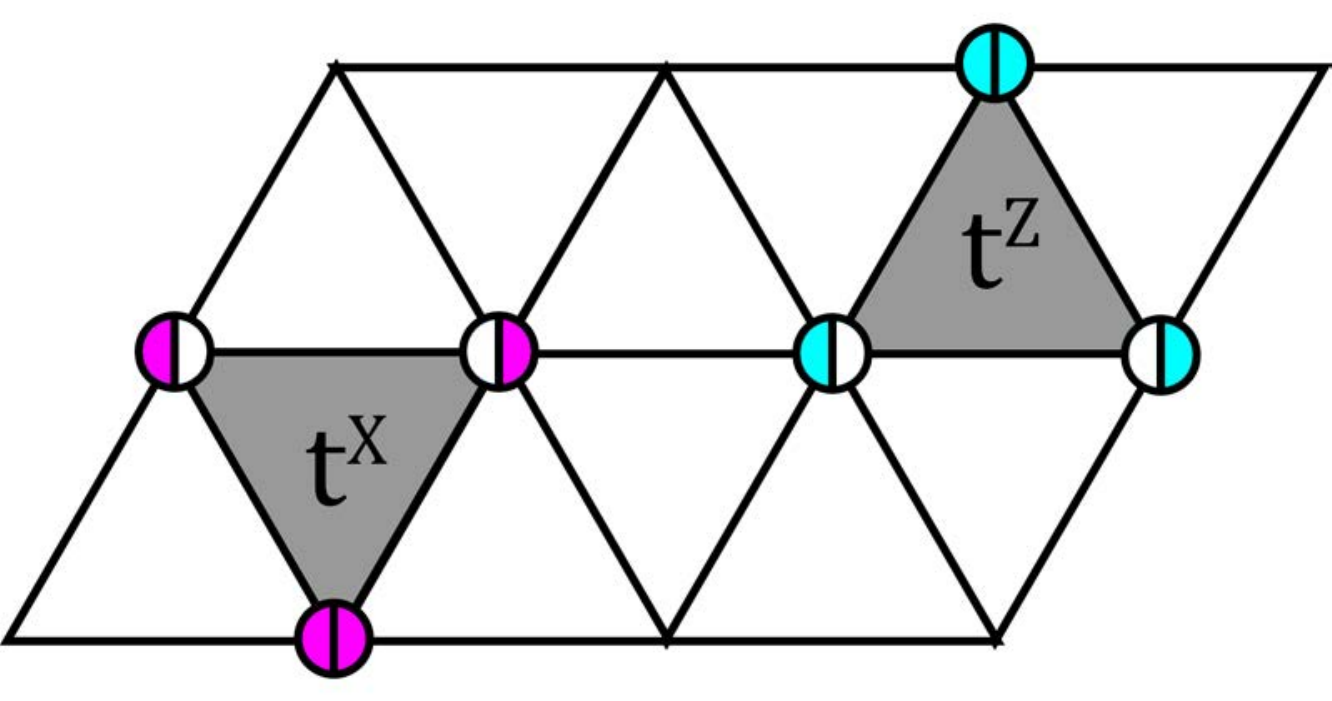}}
\end{tabular}

\caption{(a) Depiction of the stabilizer operators contained in $S_{c\hyph\text{hex}}$. (b) Definition of the stabilizer operators of a version of the toric code on the triangular lattice.}\label{fig:hexdef}

\end{figure}

Just as with the cluster-cube example, we form the three stacks of the quasi-cluster model which yields a Hilbert space of $N=6L^3$ qubits arranged on the cubic lattice such that there are six qubits associated with each vertex. We also divide each of these unit cells into two three qubit unit cells. The stabilizer set of the base model is
\begin{align}
S_{c \hyph \text{hex}} = \{h_c^{X}, t_c^{X}, h_c^{Z}, t_c^{Z} : c \text{ cubes }\},
\end{align}
where $h_c^{X}, t_c^{X}, h_c^{Z}, t_c^{Z}$ are defined in Fig.~\ref{fig:hexdefa}. We refer to $h_c^X, h_c^Z$ as hexagon operators, but it may be more appropriate to visualize them as two triangle operators similar to $t_c^X, t_c^Z$ stacked in the $[111]$ direction. We recognize that these are rather complicated operators and arguably more so that the target model. This is a consequence of the construction of Section \ref{sec:sretolre} which only guarantees the perturbation model is no ``less local'' than the target model. However, we are allowed a great deal of freedom, even removing the condition that it is a stabilizer Hamiltonian in order to simply the base model. We discuss this in Section \ref{sec:haahsimple}.

This model contains four layer constraints for every $[111]$ layer, one for each stabilizer type. It is equivalent to stacks of a variant of the $d=2$ toric code on the triangular lattice which is depicted in Fig. \ref{fig:tritc}.  This implies that $R_{c\hyph\text{hex}}=L$ and the model contains LRE.

\subsubsection{Perturbation Analysis}

\begin{figure}

\centering

\includegraphics[scale=.45]{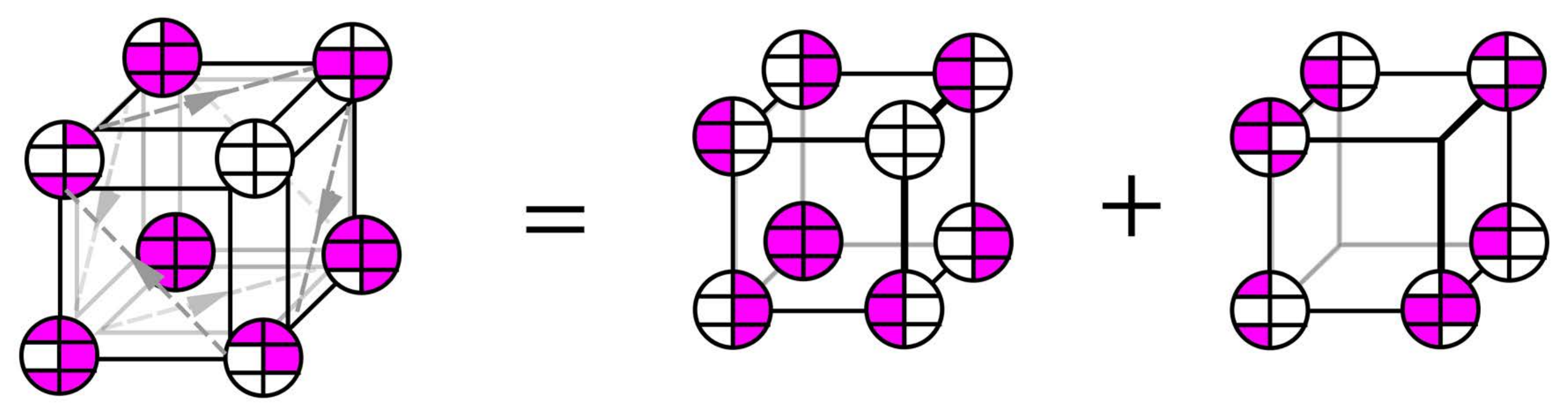}

\caption{Demonstration of how the product of $X$-type quasi-cluster model plaquettes forming a cube is broken down to a sum of effective degrees of freedom and members of $S_{c \hyph \text{hex}}$. A similar relation holds for the analogous $Z$-type operators. }\label{fig:haahbuild}

\end{figure}

To understand the structure of the stacking, we organize the layers so  the $x_v^1x_v^2$ right-angle base-points of $qc_v^X$ in each layer meets at the origin point of the elementary cube and the arrows in Fig. \ref{fig:qclusdef} meet tip-to-tail where stabilizers between different layers meet (see Fig. \ref{fig:haahbuild}). This maintains a three-fold rotation symmetry about the $[111]$ axis. As a consequence, the $z_v^1z_v^2$ right-angle base-points of $c_v^Z$ in each layer meet at the origin antipode of the elementary cube and the direction of the associated arrows is reversed. 

Much of the analysis from the cluster-cube example carries over. In particular, we take the product of the six quasi-cluster plaquettes which form a cube and rewrite this operator by separating  the $(xxx)^0, (zzz)^0$ degrees of freedom from the $(xxx)^j, (zzz)^j$ degrees of freedom as shown in Fig. \ref{fig:haahbuild}. From  this, one can see this product of quasi-cluster operators is equivalent to the product of the $g_c^X, g_c^Z$ operators using the $(xxx)^0, (zzz)^0$ degrees of freedom times ($+$ in $\mc P$) the $h_c^X, h_c^Z$ operators using the $(xxx)^j, (zzz)^j$ degrees of freedom. Since the base model operators are completely  written in the $(xxx)^j, (zzz)^j$ degrees of freedom, then this sixth order perturbation term survives if and only if $h_c^X, h_c^Z$ commutes with the entire base model, which by construction, it does. Furthermore, the strong perturbation is such that $h_c^X \simeq h_c^Z \simeq \id_{\mc H}$, implying these sixth order terms are equivalent to the terms of Haah's code. All that's left is to argue no lower order terms survive the perturbation. As before we consider the subspace of all terms which survive the perturbation $\left( \im \psi_{\text{parent}}\phi_{\text{base}}\right)^{\perp_\omega}$. Members of $ \im \psi_{\text{parent}}\phi_{\text{base}}$ are generated by the p-string configurations shown in Fig. \ref{fig:haahloops}. Due to the complexity of these p-strings, it is not immediately clear what other configurations, besides members of $\mc D^X \oplus \mc D^Z$, are orthogonal to these p-strings, but it is clear that they must be greater than sixth-order and scale with the size of the system. This is discussed in greater detail in Section \ref{sec:disthaah}.

\begin{figure}

\centering

\includegraphics[scale=.3]{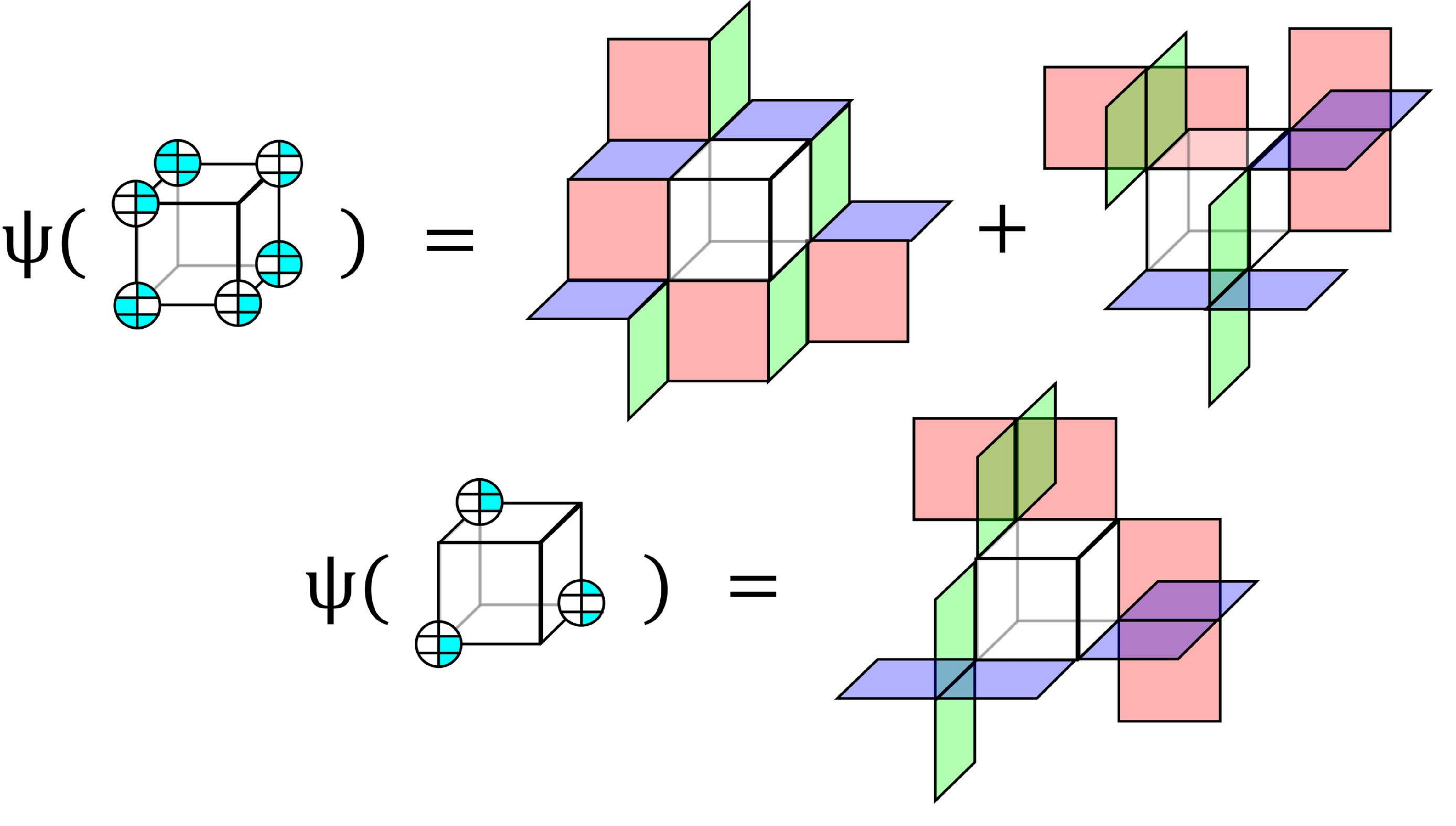}

\caption{X-type loop configurations generated by the Z-type members of $S_{c\hyph \text{hex}}$. The X-type members generate analogous loop configurations.}\label{fig:haahloops}

\end{figure}

\subsection{General Properties  of Distillation}\label{sec:genprop}

Already from the last section,  we see some of the distinctions between condensation and distillation. For the X-cube, specifically for the cube stabilizers, we implicitly defined effective local degrees of freedom given by $z_e^1 z_e^2$. The product of the plaquettes which form a cube can be written exactly in terms of these degrees of freedom with no additional factors. The same is true for the $d=3$ toric code where the product of three vertex terms can be written exactly in terms of the $x_e^1 x_e^2$ degrees of freedom (a similar explanation for the remaining terms of the X-cube and $d=3$ toric code have to wait unit Section~\ref{sec:condXC}). In contrast for the cluster-cube, the product of the cluster model operators forming a cube cannot, simultaneously for all such cubes, be written in terms of some local degrees of freedom such that the results are the cluster-cube. Instead once we choose the $(xxx)^0, (zzz)^0$ degrees of freedom, there is some ``junk'' left over in the form of the operators $t_c^X, t_c^Z \in S_{\triangle}$ which generates a group $\mc G_\triangle$ expressed only in the $(xxx)^i, (zzz)^i$ degrees of freedom. Members of this group mutually commute and so this group represents some degrees of freedom which must be fixed by the infinite-strength perturbation. In principle, one could use $S_\triangle$ as the base model, whereby one would find the terms corresponding to the cluster cube. Still, this does not fix an extensive number of degrees of freedom and we risk other low order terms surviving the perturbation (see Section \ref{sec:subex} for examples of such operators). So, we require the base model to be  a{ \it completion} stabilizer set,  $S_{\triangle} \to S_{c\hyph \triangle}$, such that all member are formed using the $(xxx)^i, (zzz)^i$ degrees of freedom (i.e. they automatically commute with the target model degrees of freedom), $\mc G_\triangle \subseteq \mc G_{c\hyph \triangle}$ and $\dim \mc G_{c\hyph \triangle}$ is of the order of the number of $(xxx)^i, (zzz)^i$ degrees of freedom. There is generally no unique choice for this completion, so we choose the completion such that the members of $S_{c\hyph \triangle}$ are of smaller support, i.e. ``more local'' than $ S_{\triangle}$, where this exists. For the cluster-cube, we can use the fact that $t_c^X$ and $t_c^Z$ commute trivially within the unit cells, allowing us to choose the base model presented in Section~\ref{sec:ccbase}. Finally, since all members $g \in \mc G_{c\hyph\triangle}$ commute with all $\phi(A)\in \phi[ \mc D^X \oplus \mc D^Z]$, we apply Eq.\eqref{eq:stabgauge} to find,
\begin{align}\label{eq:psicomp}
0 = \lambda(\phi_{\text{parent}}(A), g) = \omega(A, \psi_{\text{parent}}(g)),
\end{align}
This implies $\psi_{\text{parent}}(g)$ is always a p-string, regardless of the choice of completion, such that $\left( \im \psi_{\text{parent}}\phi_{c \hyph \triangle}]\right)^{\perp_\omega}\supseteq \mc \mc D^X \oplus \mc D^Z$, i.e. the set of terms which survives the perturbation includes all cube terms and we can generally expect any other surviving terms to be of the order $\sim L$ or greater, albeit this has to be checked. All the same general features carry over for Haah's code and thus these two examples represent a distillation process. Still there are significant differences accounting for the fact that one target model is Type-I and the other is Type-II. We explore there distinctions below.

\section{Role of  Gauge Substructures in the Theory of Condensation and Distillation} \label{sec:part2}

In the following section, we look to explore the similarity and differences between distillation and condensation in a more formal way. To do this, we use the notion of a {\it linear gauge structure} which includes all stabilizer codes. We further introduce here a powerful extension of this idea we deem a {\it gauge substructure} which is used to describe these layer constructions, though the idea is more general than producing 3D fracton models from layers of 2D models. We start by reviewing the notion of a linear gauge structure, then defining a gauge substructure. The layer constructions presented here are shown to be examples of gauge substructures, and we use this language to compare the relevant features. 

The next four sub-sections are considerably more technical, so the uninterested reader may wish to skip to Section \ref{sec:subex}, where the results are applied to the examples and the following four sub-section might be used only for reference. 

\subsection{Definition  of a Linear Gauge Structure}\label{sec:lgsdef}

Ideas and results of this section can be found in Ref. \cite{Schmitz2019a}. Consider two vector spaces, $\mc A$ and $\mc F$, both over some field $\mb F$. We refer to $\mc A$ as the {\it potential space} and $\mc F$ as the {\it field space}. Each is equipped with a  linear, non-degenerate form $\Omega: \mc A \to \mc A^*$ and $\Lambda: \mc F \to \mc F^*$ where in general $\mc V^*$ is the dual space of $\mc V$ or more formally 
\begin{align}
\mc V^* = \{ \left(f:\mc V \to \mb F\right) : f \text { is (anti-)linear and bounded}\}.
\end{align}
Non-degeneracy of $\Omega$ requires that for any $A \in \mc A$,  $\Omega (A) =0^*$ if and only if $A=0$, where $0^*$ is the zero map in $\mc A^*$. We require the same holds for $\Lambda$. Non-degeneracy implies that $\Omega$ and $\Lambda$ are  injective. If they are also surjective, then we say they are invertible.

Let $\phi :\mc A \to \mc F$ and $\psi: \mc F \to \mc A$ be some linear maps. We define a linear gauge structure as follows:

\begin{definition}\label{def:GS}
An $\mb F$-linear gauge structure $GS =\left(( \mc A, \Omega, \phi), (\mc F, \Lambda, \psi)\right)$ satisfies the following:
\begin{enumerate}
\item $\Lambda$ is invertible, and
\item \begin{align}\label{eq:def1}
\phi^\star \Lambda = \Omega \psi,
\end{align}
or the following diagram commutes \footnote{To understand this diagram, we recall that the pullback is an involution for which $\phi^\star: \mc F^* \to \mc A^*$ such that $f \mapsto f\phi$.}:
\begin{equation}
\begin{tikzcd}
\mc F \arrow[r, "\psi"] \arrow[d, "\Lambda"'] & \mc A \arrow[d, "\Omega"] \\
\mc F^* \arrow[r, "\phi^\star"'] & \mc A^*
\end{tikzcd}
\end{equation}
\end{enumerate}
\end{definition}
 We can put \eqref{eq:def1} into the form familiar from Eq. \eqref{eq:stabgauge} by defining $\omega(A,B) =\left( \Omega(B) \right)(A)$ and $\lambda(F, G) = \left(\Lambda(G)\right) (F)$ with which our definition can be written for all $A \in \mc A$ and $F\in \mc F$ as
\begin{align}\label{eq:def2}
\lambda(\phi(A), F) = \omega(A, \psi(F)).
\end{align}
Put this way, we recognize $\psi$ as a generalized adjoint of $\phi$ with respects to $\lambda$ and $\omega$, which we denote as $\psi^\dagger =\phi$. Also as a consequence of non-degeneracy, $\psi$ is unique to $\phi$, though existence is not guaranteed unless $\Omega$ is also invertible. In all cases considered here, $\Omega$ is invertible and the existence of $\psi$ is guaranteed and given by $\psi= \Omega^{-1} \phi^\star\Lambda$. The most important consequence of a gauge structure is the so-called {\it Braiding Law for Excitations} or BrLE rules given by:

\begin{theorem}[BrLE Rules] \label{th:fund}
For any $\mb F$-linear gauge structure \break $GS =\left(( \mc A, \Omega, \phi), (\mc F, \Lambda, \psi)\right)$,  
\begin{subequations}\label{eq:conss}
\begin{align}
(\ker \phi)^{\perp_\Omega} =& \im \psi, \label{eq:cons}\\
(\im \phi)^{\perp_\Lambda} =& \ker \psi \label{eq:consalt},
\end{align}
\end{subequations}
where $(\ker \phi)^{\perp_\Omega}$ is the subspace $\mc B$ for which $\ker \phi$ is the degenerate subspace of $\Omega \,\iota_{\mc B}$ and likewise for $(\im \phi)^{\perp_\Lambda}$.
\end{theorem}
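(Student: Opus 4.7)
The plan is to reduce both identities to the single defining relation $\lambda(\phi(A),F)=\omega(A,\psi(F))$ from Definition~\ref{def:GS}, and to let the non-degeneracy of $\Omega$ and the invertibility of $\Lambda$ convert ``pairs against everything to zero'' statements into genuine vanishing of elements. Three of the four resulting inclusions should then be essentially one-line substitutions, and only one direction of Eq.~(\ref{eq:cons}) will require real work.

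I would first dispose of the three easy inclusions. For Eq.~(\ref{eq:consalt}), one direction is immediate: if $F\in\ker\psi$ then $\lambda(\phi(A),F)=\omega(A,0)=0$ for every $A$, so $F\in(\im\phi)^{\perp_\Lambda}$. For the converse, if $F\in(\im\phi)^{\perp_\Lambda}$ then $\omega(A,\psi(F))=\lambda(\phi(A),F)=0$ for every $A$, and non-degeneracy of $\Omega$ forces $\psi(F)=0$. For Eq.~(\ref{eq:cons}), the inclusion $\im\psi\subseteq(\ker\phi)^{\perp_\Omega}$ is the same trick applied in the first slot: for $K\in\ker\phi$ and any $F$, $\omega(K,\psi(F))=\lambda(\phi(K),F)=\lambda(0,F)=0$.

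The main obstacle is the reverse inclusion $(\ker\phi)^{\perp_\Omega}\subseteq\im\psi$, where invertibility of $\Lambda$ (not merely its non-degeneracy) is genuinely used. My plan is: given $A$ orthogonal to $\ker\phi$, define $\tilde g:\im\phi\to\mb F$ by $\tilde g(\phi(X))=\omega(X,A)$, which is well-defined precisely because $A\in(\ker\phi)^{\perp_\Omega}$ (so $\tilde g$ does not see the ambiguity in choosing a preimage of $\phi(X)$). Extending $\tilde g$ to some $g\in\mc F^*$ (automatic in finite dimensions, requiring a Hahn--Banach-type step otherwise), I would use surjectivity of $\Lambda$ to produce $F$ with $\Lambda(F)=g$, after which the commuting square gives, for every $X\in\mc A$,
\begin{equation*}
\omega(X,\psi(F))=\lambda(\phi(X),F)=g(\phi(X))=\omega(X,A).
\end{equation*}
Injectivity of $\Omega$ then forces $\psi(F)=A$, so $A\in\im\psi$ and the proof is complete. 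The only delicate step I anticipate is the extension of $\tilde g$ from $\im\phi$ to all of $\mc F$; everything else is a direct chase around the square $\phi^\star\Lambda=\Omega\psi$.
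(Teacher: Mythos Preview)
Your argument is correct. The three ``easy'' inclusions are exactly the right one-line consequences of the defining relation $\lambda(\phi(A),F)=\omega(A,\psi(F))$ together with non-degeneracy of $\Omega$, and your treatment of the nontrivial inclusion $(\ker\phi)^{\perp_\Omega}\subseteq\im\psi$ is the standard and correct way to use invertibility (in particular surjectivity) of $\Lambda$: the functional $\tilde g(\phi(X))=\omega(X,A)$ is well-defined precisely by the hypothesis $A\perp\ker\phi$, and once extended and pulled back through $\Lambda^{-1}$ the commuting square forces $\psi(F)=A$ via injectivity of $\Omega$.

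As for comparison: the paper does not actually prove this theorem here---it is imported from Ref.~\cite{Schmitz2019a} (see the opening line of Section~\ref{sec:lgsdef}), so there is no in-text argument to compare against. Your proof is the natural linear-algebra argument and is certainly in the spirit of what the reference would contain. Your one flagged subtlety, the extension of $\tilde g$ from $\im\phi$ to all of $\mc F$, is a non-issue in the paper's setting since every application is to finite-dimensional $\mb F_2$ spaces arising from stabilizer codes on finitely many qubits; your Hahn--Banach remark is the honest caveat for the abstract statement.
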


From Eq.~\eqref{eq:stabgauge}, we see that all stabilizer codes fit the definition of an $\mb F_2$-linear gauge structure and Proposition \ref{prop:stabcon} is just Eq. \ref{eq:cons} as applied to the stabilizer code. 

All stabilizer codes also fit the definition of a {\it symplectic } gauge structure which satisfies 
\begin{align}
\phi^\star \Lambda \phi = 0^*,
\end{align}
or $\lambda(\phi(A), \phi(B))=0$ for all $A, B\in \mc A$. As a consequence, $\im \phi \subseteq \ker \psi$ which allows us to define the gauge homology as
\begin{align}
\mc H(\text{GS})= \ker \psi/\im \phi,
\end{align}
which we recognize as the logical subspace of the stabilizer code represented by GS. As such, we define a {\it trivial} symplectic gauge structure as one such that the following short sequence is exact:
\begin{equation}
\begin{tikzcd}
0 \arrow[r] & \mc A \arrow[r, "\phi"] & \mc F \arrow[r, "\psi"] & \mc A \arrow[r] &0.
\end{tikzcd}
\end{equation}
A trivial symplectic gauge structure is one such that $\mc H(\text{GS})$ as well as the BrLE rules are trivial.

\subsection{Definition of a Gauge Substructure}\label{sec:gs}

Suppose we are given a gauge structure GS$= \left(\left(\mc A, \Omega, \phi\right), \left(\mc F, \Lambda, \psi\right)\right)$ and we wish to restrict the field space to some subspace $\mc R \subseteq \mc F$. That is, we somehow only allow members of $\mc R$ to act on our system. Then our new syndrome map is given by $\psi_{\mc R} =\psi \iota_{\mc R}$, where $ \iota_{\mc R}$ is the inclusion map of $\mc R \to \mc F$. In order to maintain the gauge structure after this restriction, i.e. insure an analogous version of Eq.~\eqref{eq:def1}, we must modify $\phi$ to $\phi_{\mc R}= (\psi \iota_{\mc R})^\dagger= \iota_{\mc R}^\dagger \phi$. That is, we first apply $\phi$ and then apply this so-called adjoint restriction. Just as the inclusion map effectively modifies the domain of a function, the adjoint restriction must modify the co-domain. Thus $\iota_{\mc R}^\dagger : \mc F \to \left[\mc F\right]_{\mc R}$, where $\left[\mc F\right]_{\mc R}$ is the collection of equivalence classes defined for all $F \in \mc F$ as
\begin{align}
[F]_{\mc R}= \{ G\in \mc F: \lambda(G, R)+ \lambda(F, R)=0 \text{ for all } R\in \mc R\}.
\end{align}
It should be clear that $[\mc F]_{\mc R}= \mc F /(\mc R)^{\perp_\Lambda}$, so $\iota_{\mc R}^\dagger$ is a quotient map. We also take for granted that $[\mc F]_{\mc R} \simeq \mc R$. If $\Lambda$ is an inner product, this is obvious as there is a unique decomposition of the space as $\mc F = \mc R \oplus \mc R^{\perp_\Lambda}$. However other cases are not so clear. Appendix \ref{apx:F2vec} discusses this question for $\mb F_2$ following results from Ref. \cite{Bernhard2009}. When there does exist a linear bijective map $\alpha:[\mc F]_{\mc R} \to \mc R$, we note that $\alpha \iota_{\mc R}^\dagger:\mc F \to \mc R$ can be but is not always a projection map (which we denote with $\pi_{\mc R}: \mc F \to \mc R$) and is highly dependent on $\Lambda$ and $\mc R$. For a given $\mc R$, if there exists an $\alpha$ such that $\alpha\iota_{\mc R}^\dagger = \pi_{\mc R}$, then we shall refer to this as a {\it projective} restriction and always use such an $\alpha$ when it exists. If $\Lambda$ is an inner product, all $\mc R$ are projective which is not the case for a symplectic $\Lambda$. An example of a non-projective restriction for stabilizer codes (where $\Lambda$ is symplectic) is the case when $\mc R$ consists of mutually commuting operators. In that case, all $f \in \mc R \subseteq \mc R^{\perp_\Lambda}$  are members of the equivalence class of the identity (the zero element of the vector space), i.e. $f \in [\id_\mc H]$ and so all bijective linear maps $\alpha:[\mc F]_{\mc R} \to \mc R$ must be such that $\alpha\iota^\dagger_{\mc R}(f) \simeq 0$. Clearly, this is not projective. It is very important that all examples from Section~\ref{sec:pert} are projective, as is demonstrated below. The fact that we must modify $\phi \to \phi_{\mc R} =\alpha\iota_{\mc R}^\dagger \phi$  is a reflect of the fact that $\Lambda \iota_{\mc R}$ is no longer a non-degenerate form which was a requirement for a gauge structure. Thus we must modify this to $\Lambda_{\mc R} = \left( \alpha^{-1}\right)^\star \Lambda \iota_{\mc R}$ to recover non-degeneracy and invertibility.\footnote{We have committed an intentional notational mistake here as $\left( \alpha^{-1}\right)^\star \Lambda \iota_{\mc R}$ is technically meaningless as written. We actually mean to replace $\Lambda\iota_{\mc R}$ in this expression with $\tilde \Lambda \iota_{\mc R}: \mc R \to \left([\mc F]_\mc R\right)^*$ such that $\tilde{\lambda}\left([F], G \right)= \lambda(F,G)$, which is singled-valued and well-defined by the definition of $[\mc F]_\mc R$. We continue to use the overloaded notation for $\Lambda$ and $\Omega$ where the actual meaning should be obvious in the context.} 

As we have recovered all the conditions of Definition \ref{def:GS}, we define:
\begin{definition}
The gauge structure $\text{GS}_{\mc R}= \left(\left(\mc A, \Omega, \phi_{\mc R}\right), \left(\mc R, \Lambda_{\mc R}, \psi_{\mc R}\right)\right)$ is referred to as the \emph{ right gauge substructure of GS with respects to the field space restriction to $\mc R$} and the process as the \emph{right restriction of GS to $\mc R$}.
\end{definition}
Likewise, we can also restrict to a subspace $\mc L \subseteq \mc A$ via the same process such that ${}_{\mc L}\phi = \phi \iota_{\mc L}$, ${}_{\mc L}\psi=  \beta \iota_{\mc L}^\dagger \psi$ and  $\Omega_{\mc L} = \left(\beta^{-1}\right)^\star \Omega \iota_{\mc L}$, where $\beta: \mc A/(\mc L)^{\perp_\Omega} \to \mc L$ is a bijective linear map when it exists, and define:
\begin{definition}
 The gauge structure ${}_{\mc L} \text{GS}= \left(\left(\mc L, \Omega_{\mc L}, {}_{\mc L}\phi\right), \left(\mc F, \Lambda,{}_{\mc L}\psi\right)\right)$ is referred to as the \emph{left gauge substructure of GS with respects to the potential space restriction to $\mc L$} and the process as the \emph{left restriction of GS to $\mc L$}.
\end{definition}
Again, this is a projective restriction when $\beta\iota_{\mc L}^\dagger =\pi_{\mc L}$. In the case of $\mb F_2$, Appendix \ref{apx:F2vec}, Proposition \ref{app:proj} gives a necessary and sufficient condition for a projective left restriction which is that $\mc L \cap \mc L^{\perp_\Omega}$ is trivial, i.e. $(\iota_{\mc L})^\star\Omega\iota_{\mc L}$ is non-degenerate as a form in $\mc L$. The left restrictions for our examples are only projective for lattices such that $L$ is odd (see Appendix \ref{apx:F2vec}).

\subsection{Direct Sum of Gauge Structures and Their Relation to Substructures}\label{sec:dsum}

In the category of $\mb F$-linear gauge structures, we define isomorphic equivalence via:

\begin{definition}
 For two $\mb F$-linear gauge structures, $\text{GS}_1=\left(\left(\mc A_1, \Omega_1, \phi_1\right), \left(\mc F_1, \Lambda_1, \psi_1\right)\right)$ and $\text{GS}_2=\left(\left(\mc A_2, \Omega_2, \phi_2\right), \left(\mc F_2, \Lambda_2, \psi_2\right)\right)$, $\text{GS}_1$ is equivalent to $\text{GS}_2$ if and only if there exists linear isomorphisms $\alpha: \mc F_2 \to \mc F_1$ and $\beta:\mc A_2 \to \mc A_1$ such that 
 \begin{enumerate}
 \item $\Omega_2 = \beta^\star \Omega_1 \beta$,
 \item $\Lambda_2 =  \alpha^\star \Lambda_1 \alpha$ and
 \item $\phi_2 = \alpha^{-1} \phi_1 \beta$.
 \end{enumerate}
\end{definition}

As for $\psi_1$ and $\psi_2$, we automatically get
\begin{align}
\phi_2^\star \Lambda_2=  \left(\alpha^{-1} \phi_1 \beta\right)^\star  \alpha^\star \Lambda_1 \alpha= \beta^\star \phi_1^\star\Lambda_1 \alpha= \beta^\star \Omega_1 \psi_1 \alpha= \beta^* \Omega \beta \left(\beta^{-1} \psi_1 \alpha\right)= \Omega_2\left(\beta^{-1} \psi_1 \alpha\right).
\end{align}
So by uniqueness, $\psi_2 = \beta^{-1} \psi_1 \alpha$.

Also in the category of $\mb F$-linear gauge structures, we can define a direct sum:

\begin{definition}
Let GS$= \left(\left(\mc A, \Omega, \phi\right), \left(\mc F, \Lambda, \psi\right)\right)$ and $\text{GS}_{1(2)}= \left(\left(\mc A_{1(2)}, \Omega_{1(2)}, \phi_{1(2)}\right), \left(\mc F_{1(2)}, \Lambda_{1(2)}, \psi_{1(2)}\right)\right)$ be $\mb F$-linear gauge structures. Then GS is the \emph{direct sum} of $\text{GS}_1$ and $\text{GS}_2$, written as $\text{GS} = \text{GS}_1 \oplus \text{GS}_2$, if and only if the following are true:
\begin{enumerate}
\item $\mc A = \mc A_1 \oplus \mc A_2$,
\item $\mc F = \mc F_1 \oplus \mc F_2$,
\item $\Omega = \Omega_1 \oplus \Omega_2$,
\item $\Lambda = \Lambda_1 \oplus \Lambda_2$ and 
\item $\phi = \phi_1 \oplus \phi_2$.
\end{enumerate} 

Furthermore, $\text{GS}_1$ is said to be a \emph{divisor} of GS if and only if there exists a $\text{GS}_2$ such that $\text{GS} \simeq \text{GS}_1 \oplus \text{GS}_2$, in which case we can write $\text{GS}_1 \simeq \text{GS}/\text{GS}_2$. 
\end{definition}
 
 Note as a consequence of this definition, one automatically gets $\psi = \psi_1 \oplus \psi_2$ and that $\iota_{\mc A_2}^\star \Omega \iota_{\mc A_1} =0^*$ and $\iota_{\mc F_2}^\star \Lambda \iota_{\mc F_1} =0^*$.
 
 We now prove some important results.
 
 \begin{proposition}\label{prop:symp}
For $\text{GS} \simeq \text{GS}_1 \oplus \text{GS}_2$, if any two of the gauge structures are symplectic, then the third is also symplectic.
 \end{proposition}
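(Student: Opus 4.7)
The plan is to observe that the symplectic condition $\phi^\star \Lambda \phi = 0^\star$ decomposes cleanly along a direct sum, so that the three-way equivalence among the symplectic conditions for $\text{GS}$, $\text{GS}_1$, and $\text{GS}_2$ becomes a direct block-diagonal statement. First, I would record that the symplectic property is invariant under equivalence of gauge structures: if $\text{GS} \simeq \text{GS}'$ via $\alpha, \beta$ as in the definition of equivalence, then
\begin{align*}
(\phi')^\star \Lambda' \phi' = (\alpha^{-1} \phi \beta)^\star (\alpha^\star \Lambda \alpha)(\alpha^{-1} \phi \beta) = \beta^\star \left(\phi^\star \Lambda \phi\right) \beta,
\end{align*}
and since $\beta$ is a linear isomorphism, this vanishes iff $\phi^\star \Lambda \phi$ does. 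This reduces the problem to the strict direct sum case, where $\mc A = \mc A_1 \oplus \mc A_2$, $\mc F = \mc F_1 \oplus \mc F_2$, $\phi = \phi_1 \oplus \phi_2$, and $\Lambda = \Lambda_1 \oplus \Lambda_2$.

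Next, I would compute the form $\lambda(\phi(A), \phi(B))$ for general $A = A_1 + A_2$ and $B = B_1 + B_2$. Using the definition of the direct sum for $\phi$ together with the remark after the definition that $\iota_{\mc F_2}^\star \Lambda \iota_{\mc F_1} = 0^\star$ (and its symmetric counterpart), the cross terms $\lambda(\phi_1(A_1), \phi_2(B_2))$ and $\lambda(\phi_2(A_2), \phi_1(B_1))$ vanish identically, leaving
\begin{align*}
\lambda(\phi(A), \phi(B)) = \lambda_1(\phi_1(A_1), \phi_1(B_1)) + \lambda_2(\phi_2(A_2), \phi_2(B_2)).
\end{align*}
In other words, $\phi^\star \Lambda \phi = (\phi_1^\star \Lambda_1 \phi_1) \oplus (\phi_2^\star \Lambda_2 \phi_2)$ as a map $\mc A \to \mc A^*$.

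From this block-diagonal identity, the equivalence $\phi^\star \Lambda \phi = 0^\star \iff \phi_1^\star \Lambda_1 \phi_1 = 0^\star \text{ and } \phi_2^\star \Lambda_2 \phi_2 = 0^\star$ is immediate, because each summand takes values in an independent direct summand of $\mc A^*$. The three assertions "any two imply the third" then follow by elementary logic: if both $\text{GS}_1$ and $\text{GS}_2$ are symplectic, both summands vanish and so does their sum; if $\text{GS}$ together with one of the $\text{GS}_i$ is symplectic, the vanishing of the total and of one block forces the vanishing of the remaining block, giving the third.

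I do not expect any significant obstacle. The one place where care is needed is the distinction between equivalence and a literal direct sum, which is handled in the first step via the pullback identity above. Everything else is bookkeeping with the direct-sum formulas already listed in the definition.
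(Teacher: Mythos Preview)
Your proof is correct and follows essentially the same approach as the paper: both exploit the block-diagonal decomposition $\phi^\star \Lambda \phi \simeq (\phi_1^\star \Lambda_1 \phi_1) \oplus (\phi_2^\star \Lambda_2 \phi_2)$ to conclude that the whole vanishes iff both blocks do. Your version is slightly more explicit in separating the equivalence step from the strict direct-sum computation, whereas the paper absorbs this into the $\simeq$ symbol, but the underlying argument is identical.
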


\begin{proof}
Let GS$= \left(\left(\mc A, \Omega, \phi\right), \left(\mc F, \Lambda, \psi\right)\right)$ and $\text{GS}_{1(2)}= \left(\left(\mc A_{1(2)}, \Omega_{1(2)}, \phi_{1(2)}\right), \left(\mc F_{1(2)}, \Lambda_{1(2)}, \psi_{1(2)}\right)\right)$ be $\mb F$-linear gauge structures such that $\text{GS} \simeq \text{GS}_1 \oplus \text{GS}_2$. Suppose $\text{GS}_1$ and $\text{GS}_2$ are symplectic. Then clearly by the definition of the direct sum, GS is symplectic. 

Suppose GS and one of the other two (w.l.o.g. $\text{GS}_1$) are symplectic. This implies
\begin{align}
0^*= \phi^\star \Lambda \phi \simeq \phi_1^\star \Lambda_1 \phi_1 \oplus \phi_2^\star \Lambda_2 \phi_2 \simeq\phi_2^\star \Lambda_2 \phi_2.
\end{align}
Therefore, $\text{GS}_2$ is symplectic. 
\end{proof}

We now come to the defining characteristic which differentiates a restriction being characterized as a condensation or a distillation:

\begin{definition}\label{def:cond}
Let GS$=\left(\left(\mc A, \Omega, \phi\right), \left(\mc F, \Lambda, \psi\right)\right)$ be a symplectic $\mb F$-linear gauge structure and $\mc L\subseteq \mc A$ and $\mc R\subseteq \mc F$ be subspaces. We refer to the restriction ${}_{\mc L}\text{GS}_{\mc R}$ as a \emph{condensation} if and only if ${}_{\mc L}\text{GS}_{\mc R}$ is symplectic and a divisor of GS, i.e. there exists a symplectic $\mb F$-linear gauge structure $\text{GS}_2$ such that ${}_{\mc L}\text{GS}_{\mc R}\simeq \text{GS} / \text{GS}_2$. If ${}_{\mc L}\text{GS}_{\mc R}$ is symplectic and not a divisor of GS, then we refer to the restriction as a \emph{distillation}.
\end{definition}

This says that a restriction of GS from $\mc F \to \mc R$ and $\mc A \to \mc L$ is identical to modding out or ``condensing'' some parts of GS, namely $\text{GS}_2$ which we argue must be $\text{GS}_2 \simeq{}_{\mc L^{\perp_\Omega}}\text{GS}_{\mc R^{\perp_\Lambda}}$. We can use this and Proposition~\ref{prop:symp} to eliminate the possibility that a restriction is a condensation by showing ${}_{\mc L^{\perp_\Omega}}\text{GS}_{\mc R^{\perp_\Lambda}}$ is not symplectic. However, this is only necessary, not sufficient to show a restriction is a condensation. We can give a necessary and sufficient  criterion in the following proposition:

\begin{proposition}\label{subad}
For GS$=\left(\left(\mc A, \Omega, \phi\right), \left(\mc F, \Lambda, \psi\right)\right)$, $\mc L \subseteq \mc A$ and $\mc R \subseteq \mc F$, ${}_{\mc L}\!\text{GS}_{\mc R}$ is a divisor of GS if and only if $\phi[\mc L]$ is isomorphic to a subspace of $\mc R$. 
\end{proposition}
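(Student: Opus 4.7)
The plan is to prove both implications separately, with the forward direction a fairly direct unpacking of the direct-sum definition and the backward direction requiring an explicit construction of a complementary gauge structure.

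For $(\Rightarrow)$, I would assume ${}_{\mc L}\text{GS}_{\mc R}$ is a divisor, so there is a symplectic $\text{GS}_2$ together with isomorphisms $\alpha : \mc R \oplus \mc F_2 \to \mc F$ and $\beta : \mc L \oplus \mc A_2 \to \mc A$ realizing $\text{GS} \simeq {}_{\mc L}\text{GS}_{\mc R} \oplus \text{GS}_2$. The equivalence condition forces $\phi\,\beta = \alpha \circ ({}_{\mc L}\phi_{\mc R} \oplus \phi_2)$. Feeding in vectors $(A,0) \in \mc L \oplus \mc A_2$, the image of $\beta[\mc L \oplus 0]$ under $\phi$ lies in $\alpha[\mc R \oplus 0]$, a subspace of $\mc F$ canonically isomorphic to $\mc R$. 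Since $\beta|_{\mc L \oplus 0}$ is an isomorphism onto its image in $\mc A$, $\phi[\mc L]$ has the same dimension as $\phi[\beta(\mc L \oplus 0)]$ and therefore embeds into $\mc R$.

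For $(\Leftarrow)$, I would take the natural candidate $\text{GS}_2 := {}_{\mc L^{\perp_\Omega}}\text{GS}_{\mc R^{\perp_\Lambda}}$ flagged in the remark preceding the proposition. Under the projectivity conditions one has $\mc A = \mc L \oplus \mc L^{\perp_\Omega}$ and $\mc F = \mc R \oplus \mc R^{\perp_\Lambda}$, and $\Omega$, $\Lambda$ are block-diagonal with respect to these splittings by the very definition of the perpendicular complement. The remaining content of the proof is to show that $\phi$ is also block-diagonal in this decomposition, i.e. $\phi[\mc L] \subseteq \mc R$ and $\phi[\mc L^{\perp_\Omega}] \subseteq \mc R^{\perp_\Lambda}$. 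The hypothesis delivers the first containment after composing $\mc F$ with a suitable isomorphism that positions the abstract embedding $\phi[\mc L] \hookrightarrow \mc R$ literally inside the ambient space.

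The main obstacle will be establishing the second containment $\phi[\mc L^{\perp_\Omega}] \subseteq \mc R^{\perp_\Lambda}$. I would derive it by dualizing through the BrLE identity $\lambda(\phi(A),F) = \omega(A,\psi(F))$ of Theorem \ref{th:fund}: the first containment is equivalent to $\psi[\mc R^{\perp_\Lambda}] \subseteq \mc L^{\perp_\Omega}$, and exchanging the roles of $\phi$ and $\psi$ via the adjoint relation $\psi = \phi^\dagger$ converts this into the desired statement that $\phi$ sends $\mc L^{\perp_\Omega}$ into $\mc R^{\perp_\Lambda}$. Assembling the two block-diagonal pieces of $\phi$ with the block-diagonal forms then yields the direct-sum equivalence $\text{GS} \simeq {}_{\mc L}\text{GS}_{\mc R} \oplus {}_{\mc L^{\perp_\Omega}}\text{GS}_{\mc R^{\perp_\Lambda}}$, exhibiting ${}_{\mc L}\text{GS}_{\mc R}$ as a divisor. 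A secondary subtlety is the non-projective case, in which $\mc L^{\perp_\Omega}$ may fail to be a genuine algebraic complement to $\mc L$; I would handle this by first choosing $\alpha$ and $\beta$ so that the embedded copy of $\phi[\mc L]$ sits cleanly inside $\mc R$ and then extracting complements abstractly, possibly replacing $\text{GS}_2$ by an isomorphic but non-canonical copy rather than the specific perpendicular substructure.
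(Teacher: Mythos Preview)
Your forward direction is essentially the paper's argument; the paper just makes the ``same dimension'' step unnecessary by first normalizing $\beta$ so that $\beta|_{\mc L\oplus 0}=\id_{\mc L}$, which lets it read off $\phi[\mc L]\subseteq \alpha^{-1}[\mc R]$ directly.

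The backward direction has a genuine gap. Your key step is the claim that $\phi[\mc L^{\perp_\Omega}]\subseteq \mc R^{\perp_\Lambda}$ follows from $\phi[\mc L]\subseteq \mc R$ by ``exchanging the roles of $\phi$ and $\psi$.'' But the adjoint relation $\lambda(\phi(A),F)=\omega(A,\psi(F))$ only yields the equivalences
\[
\phi[\mc L]\subseteq \mc R \ \Longleftrightarrow\ \psi[\mc R^{\perp_\Lambda}]\subseteq \mc L^{\perp_\Omega},
\qquad
\phi[\mc L^{\perp_\Omega}]\subseteq \mc R^{\perp_\Lambda}\ \Longleftrightarrow\ \psi[\mc R]\subseteq \mc L.
\]
These are two independent conditions; dualizing the first a second time just returns you to $\phi[\mc L]\subseteq\mc R$, and nothing in the hypothesis gives $\psi[\mc R]\subseteq\mc L$. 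In fact $\phi$ is generally only block \emph{upper-triangular} in the decomposition $\mc A=\mc L\oplus\mc L^{\perp_\Omega}$, $\mc F=\mc R\oplus\mc R^{\perp_\Lambda}$: there is typically a nonzero off-diagonal piece $\pi_{\mc R}\phi\iota_{\mc L^{\perp_\Omega}}$. (This off-diagonal piece is exactly what later in the paper becomes $\mc G_{\text{junk}}$ in the distillation analysis when the hypothesis fails, but even in the condensation examples the literal restriction $\phi|_{\mc L^{\perp_\Omega}}$ does not land in $\mc R^{\perp_\Lambda}$.)

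The paper avoids this obstacle by never asserting block-diagonality of $\phi$. Instead it uses only the upper-triangularity to check that $\phi$ descends to a well-defined map $\tilde\phi_2:\mc A/\mc L\to \mc F/\alpha[\mc R]$, $[A]\mapsto[\phi(A)]$; the condition $\phi[\mc L]\subseteq\alpha[\mc R]$ is precisely what makes this single-valued. It then transports $\tilde\phi_2$ through the identifications $\mc A/\mc L\simeq\mc L^{\perp_\Omega}$ and $\mc F/\alpha[\mc R]\simeq\mc R^{\perp_\Lambda}$ to obtain $\phi_2$. The resulting $\phi_2$ is \emph{not} the restriction $\phi|_{\mc L^{\perp_\Omega}}$ but rather $\pi_{\mc R^{\perp_\Lambda}}\phi\iota_{\mc L^{\perp_\Omega}}$, and the equivalence $\phi\simeq{}_{\mc L}\phi_{\mc R}\oplus\phi_2$ holds only after absorbing the off-diagonal block into the isomorphisms $\alpha,\beta$. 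Your plan can be repaired by replacing the block-diagonality claim with this quotient construction.
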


\begin{proof}
Suppose ${}_{\mc L}\text{GS}_{\mc R}$ is a divisor of GS. Then by definition, there exists a \break $\text{GS}_{2}= \left(\left(\mc A_2, \Omega_{2}, \phi_{2}\right), \left(\mc F_2, \Lambda_{2}, \psi_{2}\right)\right)$ such that $\text{GS} \simeq {}_{\mc L}\text{GS}_{\mc R} \oplus \text{GS}_2$. Let $\alpha : \mc F \to  \mc R \oplus \mc F_2$ and $\beta: \mc A \to \mc L \oplus \mc A_2$ be the maps which define equality. As $\mc L \subseteq \mc A\simeq \mc L \oplus \mc A/\mc L$, we can assume w.l.o.g. that $\beta\simeq \id_{\mc L} \oplus \beta'$ for some linear bijective $\beta': \mc A/ \mc L \to \mc A_2$. So $\phi[\mc L] = \alpha^{-1}({}_{\mc L} \phi_{\mc R} \oplus \phi_2) \beta [\mc L] \simeq \alpha^{-1}\left[\im {}_{\mc L} \phi_{\mc R}\right] \subseteq \alpha^{-1}[\mc R] \simeq \mc R$.

Now suppose that $\phi[\mc L] \subseteq \alpha[\mc R]$ for some linear isomorphism $\alpha: \mc R \oplus \mc F_2 \to \mc F$ and some appropriate space $\mc F_2$. Trivially, one has that $\mc A \simeq \mc L \oplus \mc A/\mc L$. Furthermore, since $\mc L \simeq \mc A/ \mc L^{\perp_\Omega}$, then $\mc A \simeq \mc L^{\perp_\Omega} \oplus \mc A/\mc L^{\perp_\Omega} \simeq \mc L^{\perp_\Omega} \oplus \mc L$. This implies $\mc A/ \mc L \simeq \mc L^{\perp_\Omega}$. Likewise, we conclude that $\mc F/\alpha[\mc R] \simeq \mc F/\mc R \simeq \mc R^{\perp_\Lambda}$. Thus we define $\Omega_2 \simeq \iota_{\mc L^{\perp_\Omega}}^\star \Omega \iota_{\mc L^{\perp_\Omega}}$ and $\Lambda_2 \simeq \iota_{\mc R^{\perp_\Lambda}}^\star \Lambda \iota_{\mc R^{\perp_\Lambda}}$. Therefore, we have $\Omega \simeq \Omega_{\mc L} \oplus \Omega_2$ and $\Lambda \simeq \Lambda_{\mc R} \oplus \Lambda_2$. Note that the above equivalences implicitly define isomorphisms used to map $ \mc L \oplus \mc A/\mc L \to \mc L \oplus \mc L^{\perp_{\Omega}}$ and $\mc R \oplus \mc F/\alpha[\mc R] \to \mc R \oplus \mc R^{\perp_\Lambda}$. Thus if we construct a map $\tilde \phi_2: \mc A/ \mc L \to \mc F/ \alpha[\mc R]$, we define $\phi_2$ such that $\tilde \phi_2 \to \phi_2$ under these isomorphisms. For all $A,B \in \mc A$ and using the appropriate equivalence classed for  $\mc A/ \mc L$ and $\mc F/ \alpha[\mc R]$, note that if $[A] =[B]$ then $B \in [A]$, which implies $\phi(B) \in \phi\left[[A]\right]= \phi(A) + \phi[\mc L] \subseteq \phi(A) +\alpha[\mc R] = [\phi(A)]$ where we have used our hypothesis. Thus $[A] =[B]$ implies $[\phi(A)] = [\phi(B)]$. So we can define $\tilde \phi_2: \mc A/\mc L \to \mc F/\alpha[\mc R]$ as $[A] \mapsto [\phi(A)]$, which implies $\phi \simeq {}_{\mc L}\phi_\mc R \oplus \phi_2$. Therefore if we define $\text{GS}_{2}= \left(\left(\mc L^{\perp_{\Omega}}, \Omega_{2}, \phi_{2}\right), \left(\mc R^{\perp_\Lambda}, \Lambda_{2}, \psi_{2}\right)\right)$, we have that $\text{GS} \simeq {}_{\mc L}\text{GS}_{\mc R} \oplus \text{GS}_2$, which is to say ${}_{\mc L}\text{GS}_{\mc R}$ is a divisor of GS.
\end{proof}

From the proof, one gets the corollary,
\begin{corollary}\label{cor}
If ${}_{\mc L}\text{GS}_{\mc R}$ is a divisor of GS, then $\text{GS}/{}_{\mc L}\!\text{GS}_{\mc R} \simeq {}_{\mc L^{\perp_\Omega}} \! \text{GS}_{\mc R^{\perp_\Lambda}}$ and is unique up to an equivalence transformation.
\end{corollary}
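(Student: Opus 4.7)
The strategy is to extract the witness $\text{GS}_2$ directly from the construction in the proof of Proposition~\ref{subad}, identify it with ${}_{\mc L^{\perp_\Omega}}\text{GS}_{\mc R^{\perp_\Lambda}}$, and then deduce uniqueness from a cancellation argument in the category of $\mb F$-linear gauge structures.

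First, for the identification step, I would revisit the second half of the proof of Proposition~\ref{subad}. There $\text{GS}_2$ was built with potential space $\mc L^{\perp_\Omega}$ (using $\mc A/\mc L \simeq \mc L^{\perp_\Omega}$), field space $\mc R^{\perp_\Lambda}$ (using $\mc F/\alpha[\mc R] \simeq \mc R^{\perp_\Lambda}$), forms $\Omega_2 = \iota_{\mc L^{\perp_\Omega}}^\star \Omega\, \iota_{\mc L^{\perp_\Omega}}$, $\Lambda_2 = \iota_{\mc R^{\perp_\Lambda}}^\star \Lambda\, \iota_{\mc R^{\perp_\Lambda}}$, and $\phi_2 : [A] \mapsto [\phi(A)]$. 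These forms and this stabilizer map are exactly those defining ${}_{\mc L^{\perp_\Omega}}\text{GS}_{\mc R^{\perp_\Lambda}}$: the adjoint restriction that produces ${}_{\mc L^{\perp_\Omega}}\phi_{\mc R^{\perp_\Lambda}}$ is precisely the quotient-followed-by-isomorphism map used in the construction of $\phi_2$. Non-degeneracy of $\Omega_2$ and $\Lambda_2$ is automatic from the decomposition, which simultaneously certifies that the restrictions $\mc L^{\perp_\Omega}$ and $\mc R^{\perp_\Lambda}$ are projective in the sense required for ${}_{\mc L^{\perp_\Omega}}\text{GS}_{\mc R^{\perp_\Lambda}}$ to be a well-defined gauge substructure. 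This gives $\text{GS}/{}_{\mc L}\text{GS}_{\mc R} \simeq {}_{\mc L^{\perp_\Omega}}\text{GS}_{\mc R^{\perp_\Lambda}}$.

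For uniqueness, suppose two gauge structures $\text{GS}_2$ and $\text{GS}_2'$ both satisfy $\text{GS} \simeq {}_{\mc L}\text{GS}_{\mc R} \oplus \text{GS}_2 \simeq {}_{\mc L}\text{GS}_{\mc R} \oplus \text{GS}_2'$. At the level of underlying vector spaces one immediately gets $\mc A_2 \simeq \mc A/\mc L \simeq \mc A_2'$ and $\mc F_2 \simeq \mc F/\mc R \simeq \mc F_2'$ (after identifying $\alpha[\mc R]$ with $\mc R$). The forms $\Omega_2, \Omega_2'$ are both the induced form on this quotient inherited from $\Omega$, and likewise for $\Lambda_2, \Lambda_2'$; composing the two chains of isomorphisms furnishes compatible $\alpha$ and $\beta$ witnessing equivalence, and one verifies $\phi_2 = \alpha^{-1}\phi_2'\beta$ because both $\phi_2, \phi_2'$ arise as the common induced map $[A] \mapsto [\phi(A)]$ on the quotient.

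The main obstacle I anticipate is the bookkeeping around the projective restriction: the definition of ${}_{\mc L^{\perp_\Omega}}\text{GS}_{\mc R^{\perp_\Lambda}}$ uses explicit bijections $\alpha, \beta$ from the equivalence classes $[\mc F]_{\mc R^{\perp_\Lambda}}$ and $\mc A/(\mc L^{\perp_\Omega})^{\perp_\Omega}$ into the subspaces themselves, while the witness from Proposition~\ref{subad} uses quotients by $\mc L$ and $\alpha[\mc R]$. The subtle point is showing these two quotients yield canonically equivalent gauge structures, i.e.\ that $(\mc L^{\perp_\Omega})^{\perp_\Omega} = \mc L$ in the relevant sense and that the induced bijections intertwine $\phi_2$ with ${}_{\mc L^{\perp_\Omega}}\phi_{\mc R^{\perp_\Lambda}}$. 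Once this is in place, both the existence of the claimed form of $\text{GS}_2$ and its uniqueness up to equivalence follow cleanly.
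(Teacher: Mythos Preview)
Your proposal is correct and matches the paper's approach exactly: the paper's own justification is the single line ``From the proof, one gets the corollary,'' meaning the witness $\text{GS}_2$ built in the second half of the proof of Proposition~\ref{subad} is literally $\left(\left(\mc L^{\perp_\Omega}, \Omega_2, \phi_2\right), \left(\mc R^{\perp_\Lambda}, \Lambda_2, \psi_2\right)\right)$ with the forms and $\phi_2$ you describe. Your explicit uniqueness argument and the bookkeeping remark about $(\mc L^{\perp_\Omega})^{\perp_\Omega}=\mc L$ go beyond what the paper supplies, but they are the right checks to make and are consistent with how the paper handles projectivity elsewhere.
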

$\text{GS} \simeq {}_{\mc L}\!\text{GS}_{\mc R} \oplus {}_{\mc L^{\perp_\Omega}} \! \text{GS}_{\mc R^{\perp_\Lambda}}$ is a statement that one can get ${}_{\mc L}\!\text{GS}_{\mc R}$ by starting with GS and then condensing ${}_{\mc L^{\perp_\Omega}} \! \text{GS}_{\mc R^{\perp_\Lambda}}$ in the usual sense of the term. That is, $\mc R^{\perp_\Omega} = [\id]_{\mc R}$ i.e. we are forcing no excitations of ${}_{\mc L^{\perp_\Omega}} \! \text{GS}_{\mc R^{\perp_\Lambda}}$ to form by forcing any operator which generates them to effectively act as the identity. The condition for condensation is just the statement that there a consistent sense in which we achieve the condensation of some excitations by suppressing some set of physical operators/ degrees of freedom. On the other hand for distillation as we will see, the excitations we ``distill'' are not uniquely associated to the operators which we suppress.

When ${}_{\mc L} \text{GS}_{\mc R}$ is a divisor of $\text{GS}$, one trivially has that 
\begin{proposition}\label{inher}
For GS$=\left(\left(\mc A, \Omega, \phi\right), \left(\mc F, \Lambda, \psi\right)\right)$, $\mc L \subseteq \mc A$ and $\mc R \subseteq \mc F$, if GS and ${}_{\mc L}\text{GS}_{\mc R}$ are symplectic and ${}_{\mc L}\text{GS}_{\mc R}$ is a divisor of GS, then %
\begin{align}
\mc H(\text{GS}) \simeq \mc H({}_{\mc L}\!\text{GS}_{\mc R}) \oplus \mc H( {}_{\mc L^{\perp_\Omega}} \! \text{GS}_{\mc R^{\perp_\Lambda}})
\end{align}
\end{proposition}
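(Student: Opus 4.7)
The plan is to reduce the statement to a general fact about direct sums of symplectic gauge structures and then apply this to the splitting provided by the hypothesis. First, since ${}_{\mc L}\text{GS}_{\mc R}$ is a divisor of GS, Corollary~\ref{cor} supplies a gauge structure $\text{GS}_2 \simeq {}_{\mc L^{\perp_\Omega}}\text{GS}_{\mc R^{\perp_\Lambda}}$ with $\text{GS} \simeq {}_{\mc L}\text{GS}_{\mc R} \oplus \text{GS}_2$. Since GS and ${}_{\mc L}\text{GS}_{\mc R}$ are symplectic, Proposition~\ref{prop:symp} forces $\text{GS}_2$ to be symplectic as well, so that $\mc H(\text{GS}_2)$ is well-defined. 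It therefore suffices to prove the following general lemma: if $\text{GS} \simeq \text{GS}_1 \oplus \text{GS}_2$ with all three symplectic, then $\mc H(\text{GS}) \simeq \mc H(\text{GS}_1) \oplus \mc H(\text{GS}_2)$.

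To establish this lemma, I would first observe that equivalence of gauge structures induces a natural isomorphism on gauge homologies: if $\alpha, \beta$ witness the equivalence of two gauge structures, then $\alpha$ restricts to a bijection $\ker \psi_2 \to \ker \psi_1$ carrying $\im \phi_2$ onto $\im \phi_1$, because $\phi_2 = \alpha^{-1}\phi_1 \beta$ and $\psi_2 = \beta^{-1}\psi_1\alpha$ and $\alpha,\beta$ are isomorphisms. So I may replace GS by the literal direct sum $\text{GS}_1 \oplus \text{GS}_2$ throughout. Then I would use the direct-sum identities $\phi = \phi_1 \oplus \phi_2$ and $\psi = \psi_1 \oplus \psi_2$ to compute
\begin{align}
\ker \psi &= \ker \psi_1 \oplus \ker \psi_2, \\
\im \phi &= \im \phi_1 \oplus \im \phi_2.
\end{align}
Both identities follow from the fact that the block-diagonal structure of $\phi$ and $\psi$ forces any vector in a kernel (resp.\ image) to have each component separately in the corresponding kernel (resp.\ image).

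The final step is the standard quotient identity for direct sums of subspaces: if $\mc U_i \subseteq \mc V_i$ for $i=1,2$, then
\begin{align}
(\mc V_1 \oplus \mc V_2)/(\mc U_1 \oplus \mc U_2) \simeq (\mc V_1/\mc U_1) \oplus (\mc V_2/\mc U_2),
\end{align}
via the map $[(v_1,v_2)] \mapsto ([v_1],[v_2])$, whose well-definedness, linearity, and bijectivity are routine. Applying this with $\mc V_i = \ker \psi_i$ and $\mc U_i = \im \phi_i$ yields $\mc H(\text{GS}) \simeq \mc H(\text{GS}_1) \oplus \mc H(\text{GS}_2)$, and substituting back $\text{GS}_1 = {}_{\mc L}\text{GS}_{\mc R}$ and $\text{GS}_2 \simeq {}_{\mc L^{\perp_\Omega}}\text{GS}_{\mc R^{\perp_\Lambda}}$ completes the proof.

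The main obstacle is really just bookkeeping rather than conceptual: one must be careful that the inclusion $\im \phi \subseteq \ker \psi$ (which follows from symplecticity and is what makes $\mc H$ a quotient of vector spaces at all) respects the decomposition, and that the equivalence isomorphisms $\alpha,\beta$ commute appropriately with the structure maps $\phi$ and $\psi$ so as to descend to the quotients. Neither presents any real difficulty once symplecticity of all three pieces is in hand, which is exactly what Proposition~\ref{prop:symp} guarantees.
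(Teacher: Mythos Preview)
Your proposal is correct and is exactly the elementary unpacking that the paper has in mind: the paper does not give a proof at all, simply introducing the proposition with ``one trivially has that,'' and your argument via Corollary~\ref{cor}, Proposition~\ref{prop:symp}, and the block-diagonal computation of $\ker\psi$, $\im\phi$, and the quotient is precisely the routine verification underlying that word ``trivially.''
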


In particular, the dimension of $\mc H(\text{GS})$ must be equal to the sum of the dimensions of $\mc H({}_{\mc L}\!\text{GS}_{\mc R})$ and $\mc H( {}_{\mc L^{\perp_\Omega}} \! \text{GS}_{\mc R^{\perp_\Lambda}})$. So if we find that $\dim \mc H(\text{GS})< \dim \mc H({}_{\mc L} \text{GS}_{\mc R})$, we can conclude that ${}_{\mc L}\text{GS}_{\mc R}$ is not a divisor of $\text{GS}$. So by this alone, we see that the cluster-cube and Haah's code examples must be a distillation as we ``distill'' LRE from SRE.

\subsection{General Properties of Distilling LRE from SRE}\label{sec:sretolre}

In this section, we tabulate some of the general features of distilling LRE from SRE (SRE $\to$ LRE). SRE $\to$ LRE is given by the following scenario: we start with some trivial symplectic gauge structure, GS, which we again refer to as the parent model, where trivial is as defined in Section~\ref{sec:gs}. We then consider a ``local'' right restriction. By local, we mean that we can partition the qubits (or more  generally, local degrees of freedom) into unit cells, and we restrict the local Pauli spaces in each unit cell to a subspace such that $\mc R = \bigoplus_i \mc R_i \subseteq \mc F = \bigoplus_i \mc F_i$. We assume that this right restriction is projective. We then look to find a left restriction $\mc L$ which recovers the symplectic structure of the resulting gauge substructure, which we refer to as the target gauge structure. Again, we assume this is a projective left restriction though this assumption is relaxed in our examples for even $L$. We also want this left restriction to be as ``big as possible'' in some sense. So we require the left restriction to be {\it maximal} according to the following definition:
\begin{definition}\label{def:max}
A left restriction $\mc L$ is maximal relative to a right restriction $\mc R$ if and only if for all $A \in \mc A$, if $\phi(A) \in \mc R$, then $A \in \mc L$.
\end{definition}
This requirement prevents ``trivial'' logical operators. That is as the right restriction is projective, ${}_{\mc L}\phi_{\mc R}= \pi_{\mc R} \phi \iota_{\mc L}$. So for all $B\in \mc L$ and $A \in \mc A$, if $\phi(A)\in \mc R$, we have 
\begin{align}
\lambda(\phi(A), {}_{\mc L} \phi_{\mc R} (B)) = \lambda( \phi(A), \pi_{\mc R} \phi(B)) = \lambda(\phi(A), \phi(B)) = 0,
\end{align}
 where we have used $\lambda(\phi(A), \pi_{\mc R^{\perp_\Lambda}} \phi(B))=0$ since $\phi(A) \in \mc R$, a trick we use repeatedly below. So if $A \notin \mc L$, then $A$ trivially represents a logical operator and we do not want this to be the source of our LRE. For our examples, we must relax this condition a bit, but as we discuss in Appendix \ref{apx:F2vec}, this does not change any of the results of this section when applied to those examples.  

Note there is no guarantee this process results in LRE, i.e. a non-trivial gauge structure, though as we've seen, this is the case for our examples. As stated above, if we achieve LRE, the process is necessarily a distillation and not a condensation.

The general form of the right restrictions we consider here is the {\it canonical local restriction} (CLR) defined for a unit cell of $n$ qubits as

\begin{align}
\mc R^n_{\text{CLR}}= \{ \id_{\mb C_2^{\otimes n}},\, \prod_i^n x_i,\, \prod_i^n y_i,\, \prod_i^n z_i\}.
\end{align}
The equivalence classes of $[\mc F]_{\mc R_{\text{CLR}}}$ have the convenient general form
\begin{align} \label{eq:CLReqcl}
[P]= \{ \prod_i^n q_i: \sum_i \text{type}(q_i)= P\},
\end{align}
where $P=\{\id, X,Y,Z\}$ is a  Pauli type,  $q_i \in \{ \id_i, x_i, y_i, z_i\}$ for qubit $i$ and the sum over type is a representation of the single qubit algebra. To see this is the case, consider that $[\id]$ is the set of all Pauli operators which commute with $\mc R_{\text{CLR}}$. If one considers $\lambda (p, \prod_i^n x_i)$ for any n-qubit Pauli $p$, this is the same as the modulo 2 Hamming weight of the $\{0,1\}^ n$ representation of all Z-type operators making up $p$. Likewise is true for $\lambda (p, \prod_i^n z_i)$ in terms of the X-type operators making up $p$. Thus a Pauli is in $[\id]$ if and only if its total Hamming weight is $0 \mod 2$ which is equivalent to the sum of its Pauli-types being the identity. The other equivalence classes follow immediately. CLR is projective only if $n$ is odd, in which case, $\mc R^n_{\text{CLR}}$ defines a single effective qubit. For $n=3$, we recognize $\mc R^3_{\text{CLR}}$ as being associated with the $(xxx)^0, (zzz)^0$ degree of freedom.

The primary question we wish to answer is what is the origin of the LRE and non-triviality of the target in terms of the parent? That means, we have to describe both the non-triviality of $\ker {}_{\mc L}\phi_{\mc R}$ and the gauge homology $\mc H\left( {}_{\mc L}\text{GS}_{\mc R}\right)$ via $\ker {}_{\mc L}\psi_{\mc R}$.  Both kernels are generally characterized by the following:
\begin{proposition}\label{sec:genker}
For SRE$\to$LRE,
\begin{subequations}
\begin{align}
\ker {}_{\mc L}\phi_{\mc R} = \phi^{-1} \left[\mc R^{\perp_\Lambda} \right]\cap \mc L, \label{eq:ker1} \\
\ker {}_{\mc L} \psi_{\mc R} = \psi^{-1}\left[\mc L^{\perp_\Omega}\right ]\cap \mc R. \label{eq:ker2}
\end{align}
\end{subequations}
\end{proposition}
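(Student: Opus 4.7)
The plan is to unwind the definitions of the composite maps ${}_{\mc L}\phi_{\mc R}$ and ${}_{\mc L}\psi_{\mc R}$ and read off their kernels directly. Recall from Section~\ref{sec:gs} that they factor as
\begin{align}
{}_{\mc L}\phi_{\mc R} = \alpha\, \iota_{\mc R}^\dagger\, \phi\, \iota_{\mc L}, \qquad {}_{\mc L}\psi_{\mc R} = \beta\, \iota_{\mc L}^\dagger\, \psi\, \iota_{\mc R},
\end{align}
where $\iota_{\mc R}^\dagger:\mc F \to \mc F/\mc R^{\perp_\Lambda}$ and $\iota_{\mc L}^\dagger:\mc A \to \mc A/\mc L^{\perp_\Omega}$ are the canonical quotient maps, and $\alpha,\beta$ are the linear bijections used to identify these quotients with $\mc R$ and $\mc L$ respectively. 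Both bijections exist since the SRE$\to$LRE setup takes both restrictions to be projective.

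The crucial observation is that, because $\alpha$ and $\beta$ are bijections,
\begin{align}
\ker(\alpha\, \iota_{\mc R}^\dagger) = \ker\iota_{\mc R}^\dagger = \mc R^{\perp_\Lambda}, \qquad \ker(\beta\, \iota_{\mc L}^\dagger) = \ker\iota_{\mc L}^\dagger = \mc L^{\perp_\Omega},
\end{align}
where the last equalities are immediate from the definition of the equivalence classes: $\iota_{\mc R}^\dagger(F) = 0$ iff $\lambda(F,R) = 0$ for all $R \in \mc R$, and analogously for $\iota_{\mc L}^\dagger$.

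From here the proof is a one-line bookkeeping computation. For $A$ in the domain $\mc L$ of ${}_{\mc L}\phi_{\mc R}$,
\begin{align}
{}_{\mc L}\phi_{\mc R}(A) = 0 \iff \phi(A) \in \mc R^{\perp_\Lambda} \iff A \in \phi^{-1}[\mc R^{\perp_\Lambda}],
\end{align}
and intersecting with the domain $\mc L$ yields Eq.~\eqref{eq:ker1}. The entirely dual argument applied to $\psi$, $\mc R$, and $\mc L$ yields Eq.~\eqref{eq:ker2}.

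I expect essentially no obstacle: the proposition is a direct consequence of the definitions of the composite restriction maps together with the tautological fact that the quotient maps by $\mc R^{\perp_\Lambda}$ and $\mc L^{\perp_\Omega}$ are built precisely to have those subspaces as kernels. Notably, the argument does not really use projectivity beyond guaranteeing the existence of $\alpha$ and $\beta$; the same identities hold with $\alpha\iota_{\mc R}^\dagger$ and $\beta\iota_{\mc L}^\dagger$ replaced by the quotient maps alone, so the proposition extends to the non-projective case as well.
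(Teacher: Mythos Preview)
Your proof is correct. For part~(a) it is essentially identical to the paper's: both compute $\ker\pi_{\mc R}=\mc R^{\perp_\Lambda}$ (you via $\alpha\iota_{\mc R}^\dagger$ with $\alpha$ injective, the paper via the projective identification ${}_{\mc L}\phi_{\mc R}=\pi_{\mc R}\phi\iota_{\mc L}$) and then pull back through $\phi$.

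For part~(b) there is a mild difference worth noting. The paper does \emph{not} argue symmetrically via $\beta\iota_{\mc L}^\dagger$; instead it invokes the BrLE duality, computing $\lambda({}_{\mc L}\phi_{\mc R}(A),F)=\lambda(\phi(A),F)=\omega(A,\psi(F))$ for $A\in\mc L$, $F\in\mc R$, and reading off $\psi(F)\in\mc L^{\perp_\Omega}$ from there. Your route is more direct and treats (a) and (b) on an equal footing, which is cleaner; the paper's route has the minor advantage that it never touches $\beta$ at all, so it sidesteps any worry about whether the left restriction admits the bijection (relevant since the paper later relaxes projectivity of $\mc L$ for even $L$). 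Your closing remark that projectivity is inessential is correct and slightly sharpens the paper's own comment that ``this result only relies on the right restriction being projective.''
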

\begin{proof}
($\subseteq \text{(a)}$) Let $A \in \ker {}_{\mc L}\phi_{\mc R} \subseteq \mc L$. Then $0= {}_{\mc L}\phi_{\mc R}(A)= \pi_{\mc R} \phi(A)$. This implies $\phi(A) \in \mc R^{\perp_\Lambda}$ and $A \in  \phi^{-1} \left[\mc R^{\perp_\Lambda} \right]\cap \mc L$.

($\supseteq  \text{(a)}$) Now let $A \in  \phi^{-1} \left[\mc R^{\perp_\Lambda} \right]\cap \mc L$. This implies there exists an $F \in \mc R^{\perp_\Lambda}$ such that $\phi(A)= F$. Applying the projector onto $\mc R$ we have that ${}_{\mc L}\phi_{\mc R}(A) =\pi_{\mc R} \phi(A) =\pi_{\mc R}(F) =0$. Therefore $A \in \ker {}_{\mc L}\phi_{\mc R}$.

($\subseteq \text{(b)}$) Let $F \in \ker {}_{\mc L} \psi_{\mc R}\subseteq \mc R$. By the BrLE  rules, this implies that for all $A \in \mc L$, 
\begin{align}\label{eq:temp1}
0=\lambda( {}_{\mc L}\phi_{\mc R}(A), F)= \lambda( \pi_{\mc R}\phi(A), F) =  \lambda( \phi(A), F)= \omega(A, \psi(F)).
\end{align}
Thus, $\psi(F) \in \mc L^{\perp_\Omega}$, and $F \in \psi^{-1}\left[\mc L^{\perp_\Omega}\right ]\cap \mc R$.

($\supseteq \text{(b)}$) Let $F \in  \psi^{-1}\left[\mc L^{\perp_\Omega}\right ]\cap \mc R$. This implies $\psi(F) \in \mc L^{\perp_\Omega}$. One then follows Eq.\eqref{eq:temp1} backward and concludes that $F\in \ker {}_{\mc L} \psi_{\mc R}$ by the BrLE rules.
\end{proof}
For Eq. \eqref{eq:ker1}, constraints in the target model are a consequence of members of $\mc L$ which land in $\mc R^{\perp_\Lambda}$. Now, this result only relies on the right restriction being projective, however, the triviality of the parent model implies for all $A \in \ker {}_{\mc L}\phi_{\mc R}$, $\phi(A)=0$ if and only if $A=0$. So all non-trivial constraint in the target model must come from parts $\mc L$ which map into non-trivial members of $\mc R^{\perp_{\Lambda}}$ (see examples below). So fractonic behavior is a consequence of a sub-extensive number of members of $\mc L$ landing in $\mc R^{\perp_\Lambda}$.

Eq. \eqref{eq:ker2} confirms that all members of $\ker {}_{\mc L}\psi_{\mc R}$ generate excitations in $\mc L^{\perp_\Omega}$, i.e. generate p-strings in our examples. However, Eq. \eqref{eq:ker2} does not satisfactorily address the primary question of SRE$\to$LRE which is the source of the LRE, i.e. the non-trivial gauge homology, in terms of the underlying SRE model. To partially address this, we define the projector $\pi_{\mc R^{\perp_\Lambda}} =\id_{\mc F} - \pi_{\mc R}$. We then note that for all $A,B \in \mc A$,
\begin{align}
0 = \lambda(\phi(A), \phi(B))= \lambda(\pi_{\mc R}\phi(A), \pi_{\mc R} \phi(B)) +  \lambda(\pi_{\mc R^{\perp_{\Lambda}}}\phi(A), \pi_{\mc R^{\perp_\Lambda}} \phi(B)),
\end{align}
which implies
\begin{align}\label{eq:stiltojunk}
\lambda(\pi_{\mc R}\phi(A), \pi_{\mc R} \phi(B)) =   \lambda(\pi_{\mc R^{\perp_{\Lambda}}}\phi(A), \pi_{\mc R^{\perp_\Lambda}} \phi(B)).
\end{align}
When we restrict this relation to $A, B \in \mc L$ where by construction, the left-hand side is zero, we recognize this as the statement from the perturbation analysis that the ``junk'' left over after we separate out the $\mc R$ degrees of freedom forms its own stabilizer group. Thus, we define the junk group (or subspace) as 
\begin{align}
\mc G_{\text{junk}}= \im \pi_{\mc R^{\perp_\Lambda}} \phi \iota_{\mc L},
\end{align}
and $\dim \mc G_{\text{junk}}$ as the {\it distillation cost}. This represents the number of degrees of freedom that we necessarily throw away in order to achieve the distillation of LRE. All members of $\mc L$ which map into $\mc R$ decreases the distillation cost, so if the restriction results in a condensation, then $\mc G_{\text{junk}}$ is trivial and there is no distillation cost. Furthermore, every constraint of the target model increases the size of $\mc G_{\text{junk}}$, so the more fractonic behavior and LRE, the larger the distillation cost. Then a completion of the junk group, $\mc G_{c\hyph\text{junk}}$, is any mutually commuting subset of $\mc G_{\text{junk}}^{\perp_\Lambda}/\mc R$ in $\mc R^{\perp_\lambda}$. \footnote{we actually mean the perpendicular compliment with respects to $\Lambda_{\mc R^{\perp_\Lambda}}$, i.e. $\mc G_{\text{junk}}^{\perp_{\Lambda_{\mc R^{\perp_\Lambda}}}} \simeq \mc G_{\text{junk}}^{\perp_\Lambda}/\mc R$. We use the second notation for readability.}

An important subspace of $\mc A$ for SRE$\to$LRE is $\mc A_{\text{distill}}=\psi\left[\im{}_{\mc L} \phi_{\mc R}\right] \subseteq \mc L^{\perp_\Omega}$, which by Eq. \eqref{eq:stiltojunk} is also $\mc A_{\text{distill}} = \psi[\mc G_{\text{junk}}]$. This set allows us to partially characterize the target gauge homology via the following:
\begin{proposition}\label{prop:LREsource}
For SRE$\to$LRE,
\begin{align}
 \ker {}_{\mc L} \psi_{\mc R}\supseteq \pi_{\mc R}\phi\left[ \left(\mc A_{\text{\emph{distill}}}\right)^{\perp_\Omega}\right].
\end{align}
Furthermore, $F \in \pi_{\mc R}\phi\left[ \left(\mc A_{\text{\emph{distill}}}\right)^{\perp_\Omega}\right]$ if and only if there exists an $F^\perp \in \mc R^{\perp_\Lambda}$ such that $F+ F^\perp \in \ker \psi$ or equivalently $\psi(F)= \psi(F^\perp)$.
\end{proposition}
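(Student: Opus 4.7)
The plan is to prove both the inclusion and the biconditional by combining two tools repeatedly: the BrLE rule $\lambda(\phi(A),F) = \omega(A,\psi(F))$ and the decomposition $\phi(A) = \pi_{\mc R}\phi(A) + \pi_{\mc R^{\perp_\Lambda}}\phi(A)$ which is valid whenever the right restriction is projective. The other structural input is the triviality of the parent: exactness of $0 \to \mc A \to \mc F \to \mc A \to 0$ gives $\ker \psi = \im \phi$, which is what produces a preimage $A$ from the hypothesis $F+F^\perp \in \ker\psi$. A recurring manipulation is that any $\lambda$-pairing between a vector in $\mc R$ and one in $\mc R^{\perp_\Lambda}$ vanishes, so I can freely interchange $\phi(B)$ with $\pi_{\mc R}\phi(B)$ or $\pi_{\mc R^{\perp_\Lambda}}\phi(B)$ against the appropriate partner.

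For the inclusion, take $F = \pi_{\mc R}\phi(A)$ with $A \in (\mc A_{\text{distill}})^{\perp_\Omega}$; by Proposition~\ref{sec:genker} the task reduces to showing $\psi(F) \in \mc L^{\perp_\Omega}$. For any $B \in \mc L$,
\begin{align*}
\omega(B,\psi(F)) = \lambda(\phi(B),\pi_{\mc R}\phi(A)) = \lambda(\pi_{\mc R}\phi(B),\pi_{\mc R}\phi(A)) = \omega\left(A,\psi(\pi_{\mc R}\phi(B))\right),
\end{align*}
using BrLE at the ends and dropping the $\pi_{\mc R^{\perp_\Lambda}}$ component of $\phi(B)$ in the middle. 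Since $\psi(\pi_{\mc R}\phi(B)) = \psi({}_{\mc L}\phi_{\mc R}(B)) \in \mc A_{\text{distill}}$, the hypothesis on $A$ forces the expression to vanish.

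For the ``only if'' direction of the biconditional, set $F^\perp := \pi_{\mc R^{\perp_\Lambda}}\phi(A)$, so $F + F^\perp = \phi(A) \in \im\phi \subseteq \ker\psi$ by the symplectic property. For the ``if'' direction, triviality of the parent yields $A$ with $\phi(A) = F + F^\perp$, and projecting gives $F = \pi_{\mc R}\phi(A)$; what remains is to show $A \in (\mc A_{\text{distill}})^{\perp_\Omega}$. For $B \in \mc L$,
\begin{align*}
\omega(A,\psi(\pi_{\mc R}\phi(B))) = \lambda(F+F^\perp,\pi_{\mc R}\phi(B)) = \lambda(F,\phi(B)) = \omega(B,\psi(F)),
\end{align*}
where the first equality is BrLE together with $\phi(A) = F + F^\perp$, the second drops $F^\perp$ and promotes $\pi_{\mc R}\phi(B)$ back to $\phi(B)$ (each by orthogonality of $\mc R$ and $\mc R^{\perp_\Lambda}$), and the third is BrLE again. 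Since the proposition is implicitly phrased inside $\ker {}_{\mc L}\psi_{\mc R}$ -- the subspace already shown to contain the image -- Proposition~\ref{sec:genker} gives $\psi(F) \in \mc L^{\perp_\Omega}$, so the expression is zero.

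The main obstacle is the ``if'' direction: the bare condition $F+F^\perp \in \ker\psi$ does not by itself single out an $A$ lying in $(\mc A_{\text{distill}})^{\perp_\Omega}$, since different representatives differing by elements of $\phi^{-1}[\mc R^{\perp_\Lambda}]$ produce the same $F$. The clean resolution, exploited above, is to recognise the biconditional as a characterisation of the image inside $\ker {}_{\mc L}\psi_{\mc R}$, which supplies the missing ingredient $\psi(F) \in \mc L^{\perp_\Omega}$; equivalently, one can use $\psi(F) = \psi(F^\perp)$ and route the calculation through the alternative description $\mc A_{\text{distill}} = \psi[\mc G_{\text{junk}}]$ noted just before the proposition.
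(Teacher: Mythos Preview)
Your proof is correct and follows essentially the same route as the paper. The only cosmetic difference is that you filter everything through Proposition~\ref{sec:genker} (reducing membership in $\ker {}_{\mc L}\psi_{\mc R}$ to the condition $\psi(F)\in\mc L^{\perp_\Omega}$), whereas the paper works directly with the pairing $\lambda(F,{}_{\mc L}\phi_{\mc R}(A))$; the underlying BrLE/projection manipulations are identical, and you correctly identify that the ``if'' direction is read inside $\ker {}_{\mc L}\psi_{\mc R}$, exactly as the paper does.
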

\begin{proof}
($\supseteq$) Let $F \in \pi_{\mc R}\phi\left[ \left(\mc A_{\text{distill}}\right)^{\perp_\Omega}\right]$ which implies there exists a  $D \in \left(\mc A_{\text{distill}}\right)^{\perp_\Omega}$ such that $\pi_{\mc R} \phi(D) = F$. Let $A \in \mc L$ and consider
\begin{align}
\lambda\left(F, {}_{\mc L} \phi_{\mc R}(A))\right) =& \lambda\left(\pi_{\mc R}\phi(D), {}_{\mc L} \phi_{\mc R}(A))\right) = \lambda\left(\pi_{\mc R}\phi(D), \pi_{\mc R} \phi(A))\right)\nonumber \\
=&  \lambda\left(\phi(D), \pi_{\mc R} \phi(A))\right) = \omega\left(D,  \psi \pi_{\mc R} \phi(A)\right) \nonumber \\
=& 0,
\end{align}
where in the second line we use $\phi(D)= \pi_{\mc R}(D) + \pi_{\mc R^{\perp_\Lambda}}(D)$, and the third line is a consequence of the hypothesis for $D$ and the fact that $ \psi\pi_{\mc R} \phi(A) \in \mc A_{\text{distill}}$. Therefore, $F \in \ker {}_{\mc L} \psi_{\mc R}$.

As for the second claim, Let $F \in \ker {}_{\mc L} \psi_{\mc R} \subseteq \mc R$. Now suppose there exists an $F^\perp \in \mc R^{\perp_\Lambda}$, such that $F+F^\perp \in \ker \psi$. Since the parent model is trivial, this implies there exists a $D \in \mc A$ such that $F+F^\perp = \phi(D)$ and $\pi_{\mc R} \phi(D) =F$. Now let $B \in \mc A_{\text{distill}}$ which implies there exists a $G \in \im {}_{\mc L}\phi_{\mc R}$ such that $\psi(G)=B$. Furthermore, there exists an $A \in \mc L$ such that $G =\pi_{\mc R}\phi(A)$. Now consider  
\begin{align}
\omega(D, B)=& \omega(D, \psi(G)) = \lambda( \phi(D), G)= \lambda(\pi_{\mc R}\phi(D),  \pi_{\mc R}\phi(A)) =\lambda( F, {}_{\mc L}\phi_{\mc R}(A)) \nonumber \\
 =& 0,
\end{align}
where we have again used  Eqs.~\eqref{eq:def2} and \eqref{eq:stiltojunk}, $\phi(D)= \pi_{\mc R}(D) + \pi_{\mc R^{\perp_\Lambda}}(D)$,  and the hypothesis for $F$. As this is true for all such $B$, this implies $D \in \left(\mc A_{\text{distill}}\right)^{\perp_\Omega}$ and thus  $F= \pi_{\mc R} \phi(D) \in \pi_{\mc R}\phi\left[ \left(\mc A_{\text{distill}}\right)^{\perp_\Omega}\right]$. As the converse is trivially true by the BrLE rules, we therefore have that $F \in \ker {}_{\mc L} \psi_{\mc R}\cap \pi_{\mc R}\phi\left[ \left(\mc A_{\text{distill}}\right)^{\perp_\Omega}\right]=  \pi_{\mc R}\phi\left[ \left(\mc A_{\text{distill}}\right)^{\perp_\Omega}\right]$ if and only if there exists an $F^\perp \in \mc R^{\perp_\Lambda}$ such that $F+ F^\perp \in \ker \psi$ which is trivially equivalent to $\psi(F)=\psi(F^\perp)$.
\end{proof}

As $\mc A_{\text{distill}}\subseteq \mc L^{\perp_\Omega}$, then $\mc L \subseteq \left(\mc A_{\text{distill}}\right)^{\perp_\Omega}$. We can see that any member $A_\ell \in  \left(\mc A_{\text{distill}}\right)^{\perp_\Omega}/\mc L$ maps onto a logical operator under the action of $\pi_{\mc R} \phi$ (where it may map onto the trivial logical operator if it maps to zero). For our examples, such operators correspond to subsystem symmetries. In fact, this suffices as a definition of a subsystem symmetry for our purposes, and so we refer to such operators as {\it distilled SS logical operators}.  Note that for all $D\in \mc A$ such that $ \phi(D) \in \mc R^{\perp_\Lambda}$, the symplectic nature of the parent implies that $\phi(D) \in \mc G_{\text{junk}}^{\perp_\Lambda}/ \mc R$ as well as $D \in \psi[\mc G_{\text{junk}}]^{\perp_\Omega}=\left(\mc A_{\text{distill}}\right)^{\perp_\Omega} $. When $D\in \mc L$, again these represent constraints of the target model. When $D\notin \mc L$, we refer to these as {\it null distilled SS configurations}.

Proposition \ref{prop:LREsource} does not cover all logical operators  as some are {\it accidental} in the sense that they do not descend from some subsystem symmetries. Such an accidental logical operator corresponds an operator in $\mc R$ which maps onto a member $\mc L^{\perp_\Omega}$ under $\psi$ whose pre-image contains no member of $\mc R^{\perp_\Lambda}$. More concretely, 
\begin{definition}
$F \in \ker {}_{\mc L}\psi_{\mc R}$ is \emph{accidental} if and only if $\psi(F) \notin \psi[\mc G_{\text{\emph{junk}}}^{\perp_\Lambda}/\mc R]$. \footnote{Using $\mc G_{\text{junk}}^{\perp_\Lambda}/\mc R$ instead of all $\mc R^{\perp_\Lambda}$ is no restriction as for all $F \in \ker {}_{\mc L}\psi_{\mc R}$, $\psi(F) \in \psi[\mc G_{\text{junk}}^{\perp_\Lambda}/\mc R]$ if and only if  $\psi(F) \in \psi[\mc R^{\perp_\Lambda}]$.}
\end{definition}
 That is for the case of our examples, these operators in $\mc R$ just happen to generate p-strings which no operator in $\mc G_{\text{junk}}^{\perp_\Lambda}/ \mc R$ also generates.

We show below that both accidental and distilled SS logical operators can be found in our examples. However, we can also show that whenever we have an accidental logical operator, it anti-commutes with at least one distilled SS operator. This implies two things: the first is that for SRE$\to$LRE to distill any LRE, it must distill some subsystem symmetries. Second, the ground space of the target is spanned by the simultaneous eigenvectors of some set of only distilled SS operators. 

To prove this claim, we start with a lemma:
\begin{lemma}\label{lemmajunk}
For SRE$\to$LRE, $\psi\left[\mc G_{\text{\emph{junk}}}^{\perp_\Lambda}/ \mc R\right] \cap \psi\left[\mc G_{\text{\emph{junk}}}^{\perp_\Lambda}/\mc R\right]^{\perp_\Omega}$ is trivial.
\end{lemma}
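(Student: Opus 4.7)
The plan is to exploit three ingredients: the BrLE rules (Theorem \ref{th:fund}), the projective left-restriction condition $\mc L \cap \mc L^{\perp_\Omega} = \{0\}$ (Appendix \ref{apx:F2vec}, Proposition \ref{app:proj}), and the maximality of $\mc L$ (Definition \ref{def:max}). Writing $\mc M := \mc G_{\text{junk}}^{\perp_\Lambda}/\mc R$ for brevity, viewed concretely (per the footnote convention) as the subspace of $\mc R^{\perp_\Lambda}$ consisting of vectors $\Lambda$-orthogonal to $\mc G_{\text{junk}}$, I would establish the two inclusions
\begin{equation*}
\psi[\mc M] \subseteq \mc L^{\perp_\Omega} \quad \text{and} \quad \psi[\mc M]^{\perp_\Omega} \subseteq \mc L,
\end{equation*}
so that their intersection is trapped inside $\mc L \cap \mc L^{\perp_\Omega}$, which vanishes.

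For the first inclusion, take $A \in \mc L$ and $F \in \mc M$ and compute $\omega(A, \psi(F)) = \lambda(\phi(A), F)$ via Eq.~\eqref{eq:def2}. Then decompose $\phi(A) = \pi_{\mc R}\phi(A) + \pi_{\mc R^{\perp_\Lambda}}\phi(A)$, which is available because the right restriction is projective. The first summand lies in $\mc R$ and so pairs trivially with $F \in \mc R^{\perp_\Lambda}$; the second is precisely an element of $\mc G_{\text{junk}}$ by definition, and pairs trivially with $F \in \mc G_{\text{junk}}^{\perp_\Lambda}$. Hence $\omega(A, \psi(F)) = 0$, giving $\psi(F) \in \mc L^{\perp_\Omega}$.

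For the second inclusion, take $D \in \psi[\mc M]^{\perp_\Omega}$. Then $\omega(D, \psi(F')) = \lambda(\phi(D), F') = 0$ for all $F' \in \mc M$, so $\phi(D) \in \mc M^{\perp_\Lambda}$. Since $\Lambda$ is non-degenerate on the finite-dimensional $\mc F$,
\begin{equation*}
\mc M^{\perp_\Lambda} = \left(\mc G_{\text{junk}}^{\perp_\Lambda} \cap \mc R^{\perp_\Lambda}\right)^{\perp_\Lambda} = \mc G_{\text{junk}} + \mc R,
\end{equation*}
so $\phi(D) = G + F_R$ with $G \in \mc G_{\text{junk}}$ and $F_R \in \mc R$. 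By definition of $\mc G_{\text{junk}}$ there is an $A \in \mc L$ with $G = \pi_{\mc R^{\perp_\Lambda}}\phi(A) = \phi(A) + \pi_{\mc R}\phi(A)$, whence $\phi(D + A) = F_R + \pi_{\mc R}\phi(A) \in \mc R$. Maximality of $\mc L$ forces $D + A \in \mc L$, so $D \in \mc L$.

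The main obstacle is conceptual rather than computational: one must parse $\mc G_{\text{junk}}^{\perp_\Lambda}/\mc R$ as a subspace of $\mc R^{\perp_\Lambda}$ as in the footnote, so that the duality identity $\mc M^{\perp_\Lambda} = \mc G_{\text{junk}} + \mc R$ is genuinely meaningful inside $\mc F$, and it is essential that maximality of $\mc L$ (not merely $\mc L \subseteq \mc A$) lets one lift $\phi(D + A) \in \mc R$ back into $\mc L$ itself. Without this, the argument would only place $D$ somewhere in $\phi^{-1}[\mc R]$ and the intersection would fail to close up inside $\mc L \cap \mc L^{\perp_\Omega}$.
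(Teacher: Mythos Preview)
Your proof is correct and follows essentially the same line as the paper's: both use the gauge-structure relation $\omega(A,\psi(F))=\lambda(\phi(A),F)$ together with the definition of $\mc G_{\text{junk}}$ to force the relevant element into $\mc L$ via maximality, then invoke $\psi[\mc M]\subseteq\mc L^{\perp_\Omega}$ and projectivity of $\mc L$ to conclude. The only cosmetic difference is packaging: you prove the two inclusions $\psi[\mc M]\subseteq\mc L^{\perp_\Omega}$ and $\psi[\mc M]^{\perp_\Omega}\subseteq\mc L$ separately and use the duality $\mc M^{\perp_\Lambda}=\mc G_{\text{junk}}+\mc R$ in all of $\mc F$, whereas the paper works with a single element in the intersection and computes the analogous complement inside $\mc R^{\perp_\Lambda}$.
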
 

\begin{proof}
Let $A \in \psi\left[\mc G_{\text{junk}}^{\perp_\Lambda}/\mc R\right] \cap \psi\left[\mc G_{\text{junk}}^{\perp_\Lambda}/\mc R\right]^{\perp_\Omega}$, which implies for all $B \in \psi\left[\mc G_{\text{junk}}^{\perp_\Lambda}/ \mc R\right] $, $\omega(A,B)=0$.  By definition, there exists an $F,G \in \mc G_{\text{junk}}^{\perp_\Lambda}/\mc R$ such that $A = \psi(F)$ and $B= \psi(G)$. Thus we have 
\begin{align}
0= \omega(A,B) = \omega(\psi(F), \psi(G))= \lambda( \phi\psi(F), G) = \lambda(\pi_{\mc R^{\perp_\Lambda}} \phi \psi(F), G).
\end{align}
 As this is true for all $G \in \mc G_{\text{junk}}^{\perp_\Lambda}/ \mc R$, this implies $\pi_{\mc R^{\perp_\Lambda}}\phi \psi(F) = \pi_{\mc R^{\perp_\Lambda}}\phi(A) \in  \left(\mc G_{\text{junk}}^{\perp_\Lambda} / \mc R\right)^{\perp_\Lambda} /\mc R =  \mc G_{\text{junk}}$ as $\mc G_{\text{junk}}$ is a mutually commuting set in $\mc R^{\perp_\Lambda}$ (see Appendix \ref{apx:F2vec}). By definition of $\mc G_{\text{junk}}$, this implies that there exists a $B \in \mc L$, such that $\pi_{\mc R^{\perp_\Lambda}} \phi(A) = \pi_{\mc R^{\perp_\Lambda}} \phi(B)$ or $\pi_{\mc R^{\perp_\Lambda}} \phi(A+B) =0$. We thus have $\phi(A+B) \in \mc R$ which by the maximal condition implies $A+B \in \mc L$ and $A \in \mc L$.  But  as all of  $\psi\left[\mc G_{\text{junk}}^{\perp_\Lambda}/ \mc R\right]\subseteq \mc L^{\perp_\Omega}$, we have that $A \in \mc L \cap \mc L^{\perp_\Omega}$ which because $\mc L$ is projective, implies $A=0$. Therefore, $ \psi\left[\mc G_{\text{junk}}^{\perp_\Lambda}/\mc R\right] \cap \psi\left[\mc G_{\text{junk}}^{\perp_\Lambda}/\mc R\right]^{\perp_\Omega}$ is trivial.
\end{proof}

By Proposition \ref{app:proj} in Appendix \ref{apx:F2vec}, we have thus shown that $\psi\left[\mc G_{\text{junk}}^{\perp_\Lambda}/ \mc R\right]$ is a projective space.  As a corollary, if $A \notin \psi\left[\mc G_{\text{junk}}^{\perp_\Lambda}/\mc R\right]$ then there exists a $D \in \psi\left[\mc G_{\text{junk}}^{\perp_\Lambda}/\mc R\right]^{\perp_\Omega}$ such that $\omega(A,D)\neq 0$, for if this were not true, then $A \in \left(\psi\left[\mc G_{\text{junk}}^{\perp_\Lambda}/ \mc R\right]^{\perp_\Omega}\right)^{\perp_\Omega}= \psi\left[\mc G_{\text{junk}}^{\perp_\Lambda}/\mc R\right]$, by the discussion in Appendix \ref{apx:F2vec}. As this is a contradiction, the corollary holds. We use Lemma \ref{lemmajunk} and its corollary to prove the following proposition:

\begin{proposition}\label{prop:acc}
For SRE$\to$LRE, $F\in \ker {}_{\mc L} \psi_{\mc R}$ is accidental, if and only if there exists a distilled SS operator $G \in \pi_{\mc R} \phi\left[\psi\left[\mc G_{\text{\emph{junk}}}^{\perp_\Lambda}/\mc R\right]^{\perp_\Omega}\right] \subseteq \pi_{\mc R} \phi\left[\mc A_{\text{\emph{distill}}}^{\perp_\Omega}\right]$ such that $\lambda(F,G) \neq 0$. 
\end{proposition}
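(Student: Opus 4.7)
The plan is to use the corollary of Lemma~\ref{lemmajunk} as the main tool: since $\psi[\mc G_{\text{junk}}^{\perp_\Lambda}/\mc R]$ is projective in $\mc L^{\perp_\Omega}$, for any $A \notin \psi[\mc G_{\text{junk}}^{\perp_\Lambda}/\mc R]$ one can find a witness $D \in \psi[\mc G_{\text{junk}}^{\perp_\Lambda}/\mc R]^{\perp_\Omega}$ with $\omega(A,D) \neq 0$. I will transport this ``witness'' from the potential space into the field space via the map $D \mapsto \pi_{\mc R}\phi(D)$. Before starting the proof proper, I would observe the inclusion stated inside the proposition: because $\mc G_{\text{junk}}$ is mutually commuting it lies inside its own $\Lambda$-perpendicular within $\mc R^{\perp_\Lambda}$, i.e. $\mc G_{\text{junk}} \subseteq \mc G_{\text{junk}}^{\perp_\Lambda}/\mc R$, so $\mc A_{\text{distill}} = \psi[\mc G_{\text{junk}}] \subseteq \psi[\mc G_{\text{junk}}^{\perp_\Lambda}/\mc R]$ and therefore $\psi[\mc G_{\text{junk}}^{\perp_\Lambda}/\mc R]^{\perp_\Omega} \subseteq \mc A_{\text{distill}}^{\perp_\Omega}$. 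Combined with Proposition~\ref{prop:LREsource}, this confirms that any such $\pi_{\mc R}\phi(D)$ is indeed a distilled SS operator.

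For the forward direction, I would take $F$ accidental, apply the corollary to $A=\psi(F)$ to obtain $D \in \psi[\mc G_{\text{junk}}^{\perp_\Lambda}/\mc R]^{\perp_\Omega}$ with $\omega(\psi(F),D)\neq 0$, and set $G = \pi_{\mc R}\phi(D)$. The remaining task is the identity $\lambda(F,G) = \omega(\psi(F),D)$. This reduces to splitting $\phi(D) = \pi_{\mc R}\phi(D) + \pi_{\mc R^{\perp_\Lambda}}\phi(D)$: since $F \in \mc R$ the second summand contributes $\lambda(F, \pi_{\mc R^{\perp_\Lambda}}\phi(D)) = 0$ by the definition of $\mc R^{\perp_\Lambda}$, and the BrLE rule \eqref{eq:def2} applied to $\phi(D)$ and $F$ gives $\lambda(F,\phi(D)) = \omega(\psi(F), D)$. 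Thus $\lambda(F,G) \neq 0$, as required.

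The converse direction I expect to follow the same computation in reverse. Suppose $G = \pi_{\mc R}\phi(D)$ for some $D\in \psi[\mc G_{\text{junk}}^{\perp_\Lambda}/\mc R]^{\perp_\Omega}$ and $\lambda(F,G)\neq 0$. If $F$ were not accidental, then $\psi(F)\in \psi[\mc G_{\text{junk}}^{\perp_\Lambda}/\mc R]$, and the same identity would force $\lambda(F,G) = \omega(\psi(F), D) = 0$, a contradiction. Hence $F$ must be accidental.

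The main obstacle is not logical but conceptual: making sure the ``witness transport'' $D \mapsto \pi_{\mc R}\phi(D)$ really lands in the right subspace of field operators and that the identity $\lambda(F,\pi_{\mc R}\phi(D)) = \omega(\psi(F),D)$ is used with the right projection. The projection $\pi_{\mc R^{\perp_\Lambda}}$ kills anything paired against $F\in \mc R$, so the computation cleanly pushes through, but one must be careful that the restriction $\mc R$ is projective (which is part of the SRE$\to$LRE hypothesis) so that $\pi_{\mc R}$ and $\pi_{\mc R^{\perp_\Lambda}}$ are well-defined complementary projections on $\mc F$. Once this is in place, both implications of the biconditional fall out of the single identity $\lambda(F,\pi_{\mc R}\phi(D)) = \omega(\psi(F),D)$ combined with Lemma~\ref{lemmajunk}'s corollary.
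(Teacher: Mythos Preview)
Your proposal is correct and follows essentially the same route as the paper: both directions hinge on the single identity $\lambda(F,\pi_{\mc R}\phi(D)) = \omega(\psi(F),D)$ obtained by splitting $\phi(D)$ with the complementary projectors, together with the corollary to Lemma~\ref{lemmajunk}. Your converse is phrased as a contrapositive (if $\psi(F)\in\psi[\mc G_{\text{junk}}^{\perp_\Lambda}/\mc R]$ then $\omega(\psi(F),D)=0$ directly from the definition of $\perp_\Omega$), which is in fact slightly cleaner than the paper's invocation of the double-perp equality, but the content is the same.
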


\begin{proof}

($\Leftarrow$) Let $F \in \ker {}_{\mc L} \psi_{\mc R}$ and suppose there exists a $G \in \pi_{\mc R}\phi\left[\psi\left[\mc G_{\text{junk}}^{\perp_\Lambda}/ \mc R\right]^{\perp_\Omega}\right]$ such that $\lambda(F,G) \neq 0$. This implies there exists a $D \in \psi\left[\mc G_{\text{junk}}^{\perp_\Lambda}/\mc R\right]^{\perp_\Omega}$ such that $G = \pi_{\mc R}\phi(D)$ and that
\begin{align}
0 \neq \lambda(F, G)= \lambda(F, \pi_{\mc R} \phi(D))= \omega(\psi(F), D).
\end{align}
This implies that $\psi(F) \notin \left(\psi\left[\mc G_{\text{junk}}^{\perp_\Lambda}/\mc R\right]^{\perp_\Omega}\right)^{\perp_\Omega} = \psi\left[\mc G_{\text{junk}}^{\perp_\Lambda}/\mc R \right]$, using Lemma \ref{lemmajunk}. Therefore by definition, $F$ is accidental.

($\Rightarrow$)Let $F\in \ker {}_{\mc L} \psi_{\mc R}$ be accidental. By definition, this implies $\psi(F) \notin \psi\left[\mc G_{\text{junk}}^{\perp_\Lambda}/ \mc R\right]$. By the corollary to Lemma \ref{lemmajunk}, this implies there exists a $ D \in \psi\left[\mc G_{\text{junk}}^{\perp_\Lambda}/ \mc R\right]^{\perp_\Omega} \subseteq \psi\left[\mc G_{\text{junk}}\right]^{\perp_\Omega} = \mc A_{\text{distill}}^{\perp_\Omega}$, such that 
\begin{align}
0 \neq \omega(D,\psi(F))= \lambda( \phi(D), F) = \lambda(\pi_{\mc R} \phi(D), F).  
\end{align}
As $D \notin \mc L$ by the hypothesis that $F \in \ker {}_{\mc L} \psi_{\mc R}$, we therefore have that there exists a distilled SS operator $G=\pi_{\mc R} \phi(D)\in \pi_{\mc R} \phi\left[\psi\left[\mc G_{\text{junk}}^{\perp_\Lambda}/\mc R\right]^{\perp_\Omega}\right] \subseteq \pi_{\mc R} \phi\left[\mc A_{\text{distill}}^{\perp_\Omega}\right]$ such that $\lambda(F,G) \neq 0$.
\end{proof}

As a last point, we want to understand how the choice of completion of the junk group $\mc G_{c\hyph\text{junk}}$ determines the ground state which results from the perturbation. In the case of condensation, the logical operators of the parent are essentially those of the target and co-target, so if one starts in one ground state and condenses via the strong perturbation, the resulting target ground state should coincide with the parent ground state. For the case of SRE$\to$LRE, however, there is only one parent ground state, and yet the target has many ground states. It is reasonable to think that which ground state we end up in is dependent on our choice of completion. If we let $p_{c\hyph\text{junk}}^{(0)}$ be the projection operator formed from the members of $\mc G_{c\hyph\text{junk}}$ and Eq. \ref{eq:proj}, the ground state is (see Appendix \ref{apx:PT}),
\begin{align}\label{eq:newgs}
\ket{\psi_{\text{parent}}(0)} \to  p_{c\hyph\text{junk}}^{(0)}\ket{\psi_{\text{parent}}(0)} = \ket{\psi_{\mc L} (0)} \otimes \left(p^{(0)}_{c\hyph\text{junk}} \ket{\psi_{\mc L^{\perp_\Omega}}(0)}\right),
\end{align}
where we have split the parent ground state into the product state between $\mc L$ and $\mc L^{\perp_\Omega}$. \break $p^{(0)}_{c\hyph\text{junk}} \ket{\psi_{\mc L^{\perp_\Omega}}(0)}$ represents an equal superposition over a subspace of p-string in the case of our examples. Now we consider acting on this state with some logical operator $F \in \ker {}_{\mc L} \psi_{\mc R}$ to determine the ground state. As $\psi(F)\in \mc L^{\perp_\Omega}$, this only acts non-trivially on the second factor. It also commutes with all of $\mc G_{c\hyph\text{junk}}$ and thus it commutes with $p_{c\hyph\text{junk}}^{(0)}$. So the action of the logical operator on our ground state is given by  $p^{(0)}_{c\hyph\text{junk}} \ket{\psi_{\mc L^{\perp_\Omega}}(F)}$. This is equal to our original state, i.e. the ground state is a $+1$ eigenstate of $F$ if and only if $\psi(F) \in \psi[\mc G_{c\hyph\text{junk}}]$. This relates back to distilled SS operators in  $\psi[\mc G_{c\hyph \text{junk}}]^{\perp_\Omega} \subseteq \mc A_{\text{distill}}^{\perp_\Omega}$ via the following proposition:
\begin{proposition}\label{prop:loops}
for SRE$\to$LRE,  if $ \mc G_{c\hyph \text{\emph{junk}}} =\left( \mc G_{c\hyph \text{\emph{junk}}}\right)^{\perp_\Lambda}/ \mc R$ i.e. it is a maximal mutually commuting set in $\mc R^{\perp_\Lambda}$, then for all $ F \in \mc R$, $F \in \pi_{\mc R} \phi\left[\psi[\mc G_{c\hyph \text{\emph{junk}}}]^{\perp_\Omega}\right]$ if and only if $\psi(F) \in \psi[\mc G_{c\hyph\text{\emph{junk}}}]$.
\end{proposition}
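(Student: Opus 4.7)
The plan is to unfold the condition $F \in \pi_{\mc R}\phi\left[\psi[\mc G_{c\hyph\text{junk}}]^{\perp_\Omega}\right]$ via the defining braiding identity \eqref{eq:def2} and leverage two structural ingredients: triviality of the parent gauge structure (so that $\im\phi = \ker\psi$ and $\phi$ is injective) and the stated maximality of $\mc G_{c\hyph\text{junk}}$ as a mutually commuting subspace of $\mc R^{\perp_\Lambda}$. This mirrors the style of Proposition \ref{prop:LREsource}, but exploits that $\mc G_{c\hyph\text{junk}}$ is its own $\Lambda$-perpendicular complement in $\mc R^{\perp_\Lambda}$ rather than just a subset.

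For the forward direction, I would start with $D \in \psi[\mc G_{c\hyph\text{junk}}]^{\perp_\Omega}$ satisfying $F = \pi_{\mc R}\phi(D)$ and decompose $\phi(D) = F + J$, where $J := \pi_{\mc R^{\perp_\Lambda}}\phi(D) \in \mc R^{\perp_\Lambda}$. Since $\phi(D) \in \im\phi = \ker\psi$, this immediately gives $\psi(F) = \psi(J)$, so it suffices to show $J \in \mc G_{c\hyph\text{junk}}$. For any $g \in \mc G_{c\hyph\text{junk}}$, the hypothesis $\omega(D, \psi(g)) = 0$ passes through \eqref{eq:def2} to yield $\lambda(\phi(D), g) = 0$; combined with $\lambda(F, g) = 0$ (since $F \in \mc R$ and $g \in \mc R^{\perp_\Lambda}$), this forces $\lambda(J, g) = 0$. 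Thus $J$ lies in the $\Lambda$-perpendicular complement of $\mc G_{c\hyph\text{junk}}$ inside $\mc R^{\perp_\Lambda}$, which by the maximality hypothesis is $\mc G_{c\hyph\text{junk}}$ itself, giving $\psi(F) = \psi(J) \in \psi[\mc G_{c\hyph\text{junk}}]$.

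The reverse direction is shorter and largely symmetric. Given $K \in \mc G_{c\hyph\text{junk}}$ with $\psi(F) = \psi(K)$, the triviality of the parent produces a $D \in \mc A$ such that $\phi(D) = F + K$. Applying $\pi_{\mc R}$ and using that $\pi_{\mc R}$ annihilates $\mc R^{\perp_\Lambda}$ yields $F = \pi_{\mc R}\phi(D)$, so only $D \in \psi[\mc G_{c\hyph\text{junk}}]^{\perp_\Omega}$ remains to be verified. For any $g \in \mc G_{c\hyph\text{junk}}$, \eqref{eq:def2} gives
\begin{align}
\omega(D, \psi(g)) = \lambda(\phi(D), g) = \lambda(F, g) + \lambda(K, g) = 0,
\end{align}
where the first term vanishes since $F \in \mc R$ and $g \in \mc R^{\perp_\Lambda}$, and the second by mutual commutativity of $\mc G_{c\hyph\text{junk}}$.

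The main technical obstacle is the slightly subtle quotient-by-$\mc R$ hidden in the statement $\mc G_{c\hyph\text{junk}} = (\mc G_{c\hyph\text{junk}})^{\perp_\Lambda}/\mc R$, which a priori only lets one conclude $J \in \mc G_{c\hyph\text{junk}} + (\mc R \cap \mc R^{\perp_\Lambda})$ and not $J \in \mc G_{c\hyph\text{junk}}$ outright. To close this gap one appeals to projectivity of the right restriction $\mc R$ (standing assumption for SRE$\to$LRE), which by Proposition \ref{app:proj} of Appendix \ref{apx:F2vec} is equivalent to $\mc R \cap \mc R^{\perp_\Lambda} = \{0\}$; the same fact underwrites $\pi_{\mc R}|_{\mc R^{\perp_\Lambda}} = 0$ used in the reverse direction. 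Modulo this bookkeeping, the proof is a direct manipulation of \eqref{eq:def2} and the triviality of the parent.
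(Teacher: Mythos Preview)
Your proof is correct and follows essentially the same route as the paper's own argument: both directions proceed by decomposing $\phi(D)$ (respectively $\phi(A)$) into its $\mc R$ and $\mc R^{\perp_\Lambda}$ parts, pushing the orthogonality hypothesis through the braiding relation \eqref{eq:def2}, and invoking triviality of the parent together with the maximality of $\mc G_{c\hyph\text{junk}}$. Your explicit handling of the quotient-by-$\mc R$ subtlety via projectivity of $\mc R$ is a nice clarification; the paper addresses the same point implicitly through a footnote explaining that $(\mc G_{c\hyph\text{junk}})^{\perp_\Lambda}/\mc R$ is shorthand for the perpendicular complement taken within $\mc R^{\perp_\Lambda}$, so that $J \in \mc R^{\perp_\Lambda}$ already lands in $\mc G_{c\hyph\text{junk}}$ directly.
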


\begin{proof}
($\Rightarrow$) Let $ \mc G_{c\hyph \text{junk}} =\left( \mc G_{c\hyph \text{junk}}\right)^{\perp_\Lambda}/ \mc R$ and $F \in \pi_{\mc R} \phi\left[\psi[\mc G_{c\hyph \text{junk}}]^{\perp_\Omega}\right] \subseteq \mc R$. This implies there exists a $D \in \psi[\mc G_{c\hyph \text{junk}}]^{\perp_\Omega}$  such that $F= \pi_{\mc R} \phi(D)$. Furthermore for all $G \in \mc G_{c\hyph\text{junk}}$
\begin{align}
0 = \omega( D, \psi(G)) = \lambda(\phi(D), G)= \lambda(\pi_{\mc R^{\perp_\Lambda}} \phi(D),G ).
\end{align}
This implies $\pi_{\mc R^{\perp_\Lambda}} \phi(D) \in\left( \mc G_{c\hyph\text{junk}}\right)^{\perp_\Lambda}/\mc R= \mc G_{c\hyph\text{junk}}$ by our hypothesis. Thus $\psi(F) = \psi(\pi_{\mc R} \phi(D)) = \psi(\pi_{\mc R^{\perp_\Lambda}} \phi(D)) \in \psi[\mc G_{c\hyph\text{junk}}]$.

($\Leftarrow$) Let  $ \mc G_{c\hyph \text{junk}} \subseteq \left( \mc G_{c\hyph \text{junk}}\right)^{\perp_\Lambda}/\mc R$ and $F \in \mc R$ such that $\psi(F) \in \psi[\mc G_{c\hyph\text{junk}}]$. This implies there exists an $F^{\perp} \in \mc G_{c\hyph\text{junk}}\subseteq \mc R^{\perp_\Lambda}$ such that $\psi(F + F^{\perp}) =0$. By the triviality of the parent model, this implies there exists an $A \in \mc A$ such that $F+ F^\perp = \phi(A)$ and $ \pi_{\mc R^{\perp_\Lambda}}\phi(A) = F^\perp$. Now let $D \in \psi[\mc G_{c\hyph\text{junk}}]$, which implies there exists a $G \in \mc G_{c\hyph\text{junk}}$ such that $D= \psi(G)$. Then consider
\begin{align}
\omega(A,D)= \omega(A, \psi(G))= \lambda(\phi(A), G)= \lambda( \pi_{\mc R^{\perp_\Lambda}}\phi(A), G)= \lambda(F^\perp, G) =0,
\end{align}
where the last equality uses our hypothesis. As this is true for all such $D$, we have that $A \in  \psi[\mc G_{c\hyph\text{junk}}]^{\perp_\Omega}$ and therefore, $F = \pi_{\mc R} \phi(A) \in  \pi_{\mc R} \phi\left[\psi[\mc G_{c\hyph \text{junk}}]^{\perp_\Omega}\right]$.
\end{proof}

Note that the proof of  $\Leftarrow$ only requires that $ \mc G_{c\hyph \text{junk}}$ is mutually commuting in $\mc R^{\perp_\Lambda}$ and so this direction of the conditional is true even if  $\mc G_{c\hyph \text{junk}}$ is not a maximal mutually commuting set in $\mc R^{\perp_\Lambda}$. 

We now want to make the claim that if $ \mc G_{c\hyph \text{junk}}$ is a maximal mutually commuting set in $\mc R^{\perp_\Lambda}$, then the resulting ground state in Eq.~\eqref{eq:newgs} is completely specified as the simultaneous $+1$ eigenstate of all members of  $\pi_{\mc R} \phi\left[\psi[\mc G_{c\hyph \text{junk}}]^{\perp_\Omega}\right]$, i.e. it is the singular ground state characterized as the eigenstate of the distilled SS operators in this set. Proposition ~\ref{prop:loops} already tells us that the ground state is an eigenstate of all $\pi_{\mc R} \phi\left[\psi[\mc G_{c\hyph \text{junk}}]^{\perp_\Omega}\right]$, so all we need is that this set is itself a maximal mutually commuting set in $\mc R$.  Before doing so, we must prove the following lemma:
\begin{lemma}\label{perpperp}
For SRE$\to$LRE, if the completion can be characterized as a gauge structure $GS'= \left((\mc A', \phi', \Omega'), (\mc F, \psi', \Lambda)\right)$, such that $\im \phi' = \mc G_{c\hyph\emph{\text{junk}}}$, then $\left(\psi[\mc G_{c\hyph \text{\emph{junk}}}]^{\perp_\Omega}\right)^{\perp_\Omega} = \psi[\mc G_{c\hyph \text{\emph{junk}}}]$.
\end{lemma}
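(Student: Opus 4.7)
The plan is to build an auxiliary $\mb F_2$-linear gauge structure $\widetilde{\mathrm{GS}}$ whose potential map has image exactly $V := \psi[\mc G_{c\hyph\text{junk}}]$ and then read off the identity $(V^{\perp_\Omega})^{\perp_\Omega} = V$ from two applications of the BrLE rules of Theorem~\ref{th:fund}. The hypothesis that the completion comes from a gauge structure $GS' = ((\mc A', \Omega', \phi'),(\mc F, \Lambda, \psi'))$ with $\im \phi' = \mc G_{c\hyph\text{junk}}$ is precisely what makes this construction possible: without it, $\mc G_{c\hyph\text{junk}}$ carries no intrinsic adjoint structure that one could feed into Theorem~\ref{th:fund}.

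First I would set $\tilde\phi := \psi\phi' : \mc A' \to \mc A$ and $\tilde\psi := \psi'\phi : \mc A \to \mc A'$, noting that $\im \tilde\phi = \psi[\im \phi'] = \psi[\mc G_{c\hyph\text{junk}}] = V$. Composing the gauge relations of $GS$ and $GS'$ gives, for all $A \in \mc A$ and $A' \in \mc A'$,
\begin{equation}
\omega(A, \tilde\phi(A')) \;=\; \omega(A, \psi\phi'(A')) \;=\; \lambda(\phi(A),\phi'(A')) \;=\; \omega'(A',\psi'\phi(A)) \;=\; \omega'(A',\tilde\psi(A)).
\end{equation}
Using the symmetry of $\omega$ over $\mb F_2$ this is exactly the identity~\eqref{eq:def2} of Definition~\ref{def:GS}, so $\widetilde{\mathrm{GS}} := ((\mc A', \Omega', \tilde\phi),(\mc A, \Omega, \tilde\psi))$ is a genuine $\mb F_2$-linear gauge structure, with $\Omega$ playing the role of the invertible field form (permissible because $\Omega$ is invertible for all models considered in Section~\ref{sec:lgsdef}).

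Next I would apply Eq.~\eqref{eq:consalt} to $\widetilde{\mathrm{GS}}$, giving $V^{\perp_\Omega} = (\im \tilde\phi)^{\perp_\Omega} = \ker \tilde\psi$. Then, exploiting the corresponding symmetry of $\omega'$, I would invoke the dual gauge structure $\widetilde{\mathrm{GS}}^{*} := ((\mc A, \Omega, \tilde\psi),(\mc A', \Omega', \tilde\phi))$, whose gauge relation is the identity above read with the two arguments swapped. Applying Eq.~\eqref{eq:cons} to $\widetilde{\mathrm{GS}}^{*}$ yields $(\ker \tilde\psi)^{\perp_\Omega} = \im \tilde\phi = V$. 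Chaining these identifications gives
\begin{equation}
(V^{\perp_\Omega})^{\perp_\Omega} = (\ker \tilde\psi)^{\perp_\Omega} = V,
\end{equation}
which is the statement of the lemma.

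The main obstacle I expect is verifying that $\widetilde{\mathrm{GS}}$ and its dual are bona fide gauge structures. This reduces to two technical checks: that $\Omega$ and $\Omega'$ are each invertible (the first by the standing assumption of Section~\ref{sec:lgsdef}, the second from the non-degeneracy of $\Omega'$ supplied by $GS'$ together with finite dimensionality), and that the derived gauge relation is symmetric enough that the two spaces really can be swapped when passing to $\widetilde{\mathrm{GS}}^{*}$. Once these points are settled, the remainder is a direct double invocation of Theorem~\ref{th:fund}, with no element-chasing as was needed in the proof of Lemma~\ref{lemmajunk}.
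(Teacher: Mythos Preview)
Your proposal is correct and follows essentially the same route as the paper: the paper also builds the composite gauge structure $\text{GS}\circ\text{GS}'=((\mc A',\Omega',\psi\phi'),(\mc A,\Omega,\psi'\phi))$, notes it is \emph{transposable} (your ``dual'' $\widetilde{\mathrm{GS}}^{*}$), and combines the two BrLE rules $(\im\psi\phi')^{\perp_\Omega}=\ker\psi'\phi$ and $(\ker\psi'\phi)^{\perp_\Omega}=\im\psi\phi'$ to conclude. The only cosmetic difference is naming: what you call $\widetilde{\mathrm{GS}}$ and $\widetilde{\mathrm{GS}}^{*}$ the paper calls the composite and its transpose.
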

\begin{proof}
Suppose there exists a gauge structure $GS'= \left((\mc A', \phi', \Omega'), (\mc F, \psi', \Lambda)\right)$, such that $\im \phi' = \mc G_{c\hyph\text{junk}}$. This allows us to generate a {\it composite} gauge structure, $\text{GS}\circ \text{ GS}'=\left( (\mc A', \psi\phi', \Omega'), (\mc A, \psi' \phi, \Omega)\right)$. This fits the definition of a gauge structure as $\Omega$ and $\Omega'$ are invertible, and for all $A \in \mc A$ and $A' \in \mc A'$, 
\begin{align}
\omega(\psi \phi'(A'), A) = \lambda(\phi'(A), \phi(A))= \omega'(A', \psi' \phi(A)),
\end{align}
which we recognize as the second condition on a gauge structure. We also recognize that the composite is also {\it transposable}, i.e. $\text{GS}'\circ \text{ GS}=\left( (\mc A, \psi'\phi, \Omega), (\mc A', \psi \phi', \Omega')\right)$ is also a gauge structure.  This implies the BrLE rules hold for both $\text{GS}\circ \text{ GS}'$ and $\text{GS}'\circ \text{ GS}$. Of the four, we focus on the two BrLE rules,
\begin{subequations}
\begin{align}
(\im \psi \phi')^{\perp_\Omega}= \ker \psi' \phi \\
(\ker \psi' \phi)^{\perp_\Omega} = \im \psi \phi'.
\end{align}
\end{subequations}
Therefore when combined, we have 
\begin{align}
 \left(\psi [\mc G_{c \hyph\text{junk}}]^{\perp_\Omega}\right)^{\perp_\Omega} =\left( (\im \psi \phi')^{\perp_\Omega} \right)^{\perp_\Omega} = \im \psi \phi' = \psi [\mc G_{c \hyph\text{junk}}].
\end{align}
\end{proof}
In all our examples, this is the case as the completion can also be described as a stabilizer code. Thus in general,  $ \left(\psi [\mc G_{c \hyph\text{junk}}]^{\perp_\Omega}\right)^{\perp_\Omega} =\psi [\mc G_{c \hyph\text{junk}}]$ holds. This is but one use for composite and transposable gauge structures which we look to explore more in future work.

With Lemma \ref{perpperp}, we now prove the following:
\begin{proposition}\label{prop:comp}
For SRE$\to$LRE,  if $ \mc G_{c\hyph \text{\emph{junk}}} =\left( \mc G_{c\hyph \text{\emph{junk}}}\right)^{\perp_\Lambda}/\mc R$ i.e. it is a maximal mutually commuting set in $\mc R^{\perp_\Lambda}$ and corresponds to a gauge structure completion, then \break $ \pi_{\mc R} \phi\left[\psi[\mc G_{c\hyph \text{\emph{junk}}}]^{\perp_\Omega}\right] \subseteq \ker {}_{\mc L}\psi_{\mc R}$ is a maximal mutually commuting set in $\mc R$.
\end{proposition}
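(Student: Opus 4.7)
The plan is to verify three properties of $S := \pi_{\mc R}\phi\left[\psi[\mc G_{c\hyph\text{junk}}]^{\perp_\Omega}\right]$ in order: containment in $\ker {}_{\mc L}\psi_{\mc R}$, mutual commutativity, and maximality within $\mc R$. The first two will follow by essentially recycling the bookkeeping already done in the proof of Proposition \ref{prop:loops}; the real work lies in maximality, which is where Lemma \ref{perpperp} and the completion hypothesis must be invoked.

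First I would establish the auxiliary inclusion $\psi[\mc G_{c\hyph\text{junk}}] \subseteq \mc L^{\perp_\Omega}$. Given $G \in \mc G_{c\hyph\text{junk}} \subseteq \mc R^{\perp_\Lambda}$ and $A \in \mc L$, I would compute $\omega(A, \psi(G)) = \lambda(\phi(A), G)$ and split $\phi(A) = \pi_{\mc R}\phi(A) + \pi_{\mc R^{\perp_\Lambda}}\phi(A)$; the first term pairs trivially with $G \in \mc R^{\perp_\Lambda}$, and the second lies in $\mc G_{\text{junk}}$, so the pairing vanishes by definition of the completion. Combined with Proposition \ref{prop:loops}, which identifies $F \in S$ with $\psi(F) \in \psi[\mc G_{c\hyph\text{junk}}]$, Proposition \ref{sec:genker}(b) yields $S \subseteq \ker {}_{\mc L}\psi_{\mc R}$.

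Next I would handle mutual commutativity. For $F = \pi_{\mc R}\phi(D)$ and $F' = \pi_{\mc R}\phi(D')$ with $D, D' \in \psi[\mc G_{c\hyph\text{junk}}]^{\perp_\Omega}$, Eq.~\eqref{eq:stiltojunk} converts $\lambda(F, F')$ into $\lambda(\pi_{\mc R^{\perp_\Lambda}}\phi(D), \pi_{\mc R^{\perp_\Lambda}}\phi(D'))$. To conclude that this is zero, I would recopy the argument from Proposition \ref{prop:loops}($\Rightarrow$): for every $G \in \mc G_{c\hyph\text{junk}}$, $\lambda(\pi_{\mc R^{\perp_\Lambda}}\phi(D), G) = \lambda(\phi(D), G) = \omega(D, \psi(G)) = 0$ by the hypothesis on $D$. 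The maximality assumption $\mc G_{c\hyph\text{junk}} = (\mc G_{c\hyph\text{junk}})^{\perp_\Lambda}/\mc R$ then forces $\pi_{\mc R^{\perp_\Lambda}}\phi(D) \in \mc G_{c\hyph\text{junk}}$, and likewise for $D'$. Mutual commutativity of $\mc G_{c\hyph\text{junk}}$ finishes the step.

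For maximality, suppose $F' \in \mc R$ satisfies $\lambda(F', F) = 0$ for every $F \in S$. Picking an arbitrary $D \in \psi[\mc G_{c\hyph\text{junk}}]^{\perp_\Omega}$ and setting $F = \pi_{\mc R}\phi(D)$, I would compute
\begin{align}
0 = \lambda(F', \pi_{\mc R}\phi(D)) = \lambda(F', \phi(D)) = \omega(\psi(F'), D),
\end{align}
where the middle equality uses $F' \in \mc R$ to kill the pairing with $\pi_{\mc R^{\perp_\Lambda}}\phi(D)$. Since $D$ is arbitrary in $\psi[\mc G_{c\hyph\text{junk}}]^{\perp_\Omega}$, this says $\psi(F') \in \left(\psi[\mc G_{c\hyph\text{junk}}]^{\perp_\Omega}\right)^{\perp_\Omega}$. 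Invoking Lemma \ref{perpperp}, which is exactly where the completion-as-gauge-structure hypothesis gets used, collapses this to $\psi[\mc G_{c\hyph\text{junk}}]$, and then Proposition \ref{prop:loops} places $F'$ back in $S$. The main obstacle I anticipate is this very last move: without the completion hypothesis one has only $\psi[\mc G_{c\hyph\text{junk}}] \subseteq (\psi[\mc G_{c\hyph\text{junk}}]^{\perp_\Omega})^{\perp_\Omega}$, and the double-perp may strictly enlarge the space, in which case $\psi(F')$ could sit in a non-trivial orbit of logical equivalence classes that are not realized by any $G \in \mc G_{c\hyph\text{junk}}$, breaking maximality. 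So the load-bearing lemma is Lemma \ref{perpperp}, and everything else is structural rewriting of the defining formulas.
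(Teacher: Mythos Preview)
Your argument is correct and, for the maximality step, essentially identical to the paper's: you take $F'\in\mc R$ commuting with all of $S$, pair it against $\pi_{\mc R}\phi(D)$ for arbitrary $D\in\psi[\mc G_{c\hyph\text{junk}}]^{\perp_\Omega}$ to get $\psi(F')\in\bigl(\psi[\mc G_{c\hyph\text{junk}}]^{\perp_\Omega}\bigr)^{\perp_\Omega}$, invoke Lemma~\ref{perpperp} to collapse the double-perp, and finish with Proposition~\ref{prop:loops}. The paper does exactly this (with some redundant unpacking of the $\Leftarrow$ direction of Proposition~\ref{prop:loops} spelled out again in-line).

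The one difference is that the paper's proof addresses \emph{only} maximality, leaving the inclusion $S\subseteq\ker{}_{\mc L}\psi_{\mc R}$ and mutual commutativity of $S$ implicit. Your proposal fills both in explicitly: the inclusion via Proposition~\ref{prop:loops} plus Proposition~\ref{sec:genker}(b), and mutual commutativity via Eq.~\eqref{eq:stiltojunk} together with the observation (recycled from the $\Rightarrow$ direction of Proposition~\ref{prop:loops}) that $\pi_{\mc R^{\perp_\Lambda}}\phi(D)\in\mc G_{c\hyph\text{junk}}$ whenever $D\in\psi[\mc G_{c\hyph\text{junk}}]^{\perp_\Omega}$. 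Both additions are sound, and in fact the inclusion can be shortened slightly by noting $\psi[\mc G_{c\hyph\text{junk}}]^{\perp_\Omega}\subseteq\mc A_{\text{distill}}^{\perp_\Omega}$ (since $\mc G_{\text{junk}}\subseteq\mc G_{c\hyph\text{junk}}$) and quoting Proposition~\ref{prop:LREsource} directly.
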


\begin{proof}
 Let $\mc G_{c\hyph \text{junk}} =\left( \mc G_{c\hyph \text{junk}}\right)^{\perp_\Lambda}/\mc R$ correspond to a gauge structure completion. Furthermore, let $G \in \mc R$ be such that for all $F \in \pi_{\mc R} \phi\left[\psi[\mc G_{c\hyph \text{junk}}]^{\perp_\Omega}\right]$, $\lambda(F,G)=0$. Let $D \in \psi[\mc G_{c\hyph \text{junk}}]^{\perp_\Omega}$ which implies
\begin{align}
\omega(\psi(G), D) = \omega(G, \phi(D))= \lambda(G, \pi_{\mc R} \phi(D)) =0.
\end{align}
As $\pi_{\mc R} \phi(D) \in\pi_{\mc R} \phi\left[\psi[\mc G_{c\hyph \text{junk}}]^{\perp_\Omega}\right]$. This implies $\psi(G) \in\left(\psi[\mc G_{c\hyph \text{junk}}]^{\perp_\Omega}\right)^{\perp_\Omega} = \psi[\mc G_{c\hyph \text{junk}}]$ by Lemma \ref{perpperp}, which implies there exists a $G^{\perp} \in \mc G_{c\hyph\text{junk}}$ such that $\psi(G + G^{\perp}) =0$. As the parent model is trivial, this implies there exists an $A\in \mc A$ such that $\phi(A)= G+ G^{\perp}$. Note that $\pi_{\mc R} \phi(A) = G$ and $\pi_{\mc R^{\perp_\Lambda}} \phi(A) = G^\perp$. This implies
\begin{align}
\psi(G) = \psi \pi_{\mc R} \phi(A) = \psi \pi_{\mc R^{\perp_\Lambda}} \phi(A) = \psi(G^\perp) \in \psi[\mc G_{c\hyph \text{junk}}].
\end{align}
by Proposition~\ref{prop:loops}, this implies $G \in \pi_{\mc R} \phi\left[\psi[\mc G_{c\hyph \text{junk}}]^{\perp_\Omega}\right]$ and therefore $\pi_{\mc R} \phi\left[\psi[\mc G_{c\hyph \text{junk}}]^{\perp_\Omega}\right]$ is a maximal mutually commuting set.
\end{proof}

\subsection{Understand the Examples from the Gauge Substructure Perspective}\label{sec:subex}

The results of the last four sections are technical but we can boil them down to the following: For the case of condensation, we can distinguish an example as being a condensation by finding that all stabilizer products representing the target model land in a specific subspace of the Pauli space representing a reduced number of local degrees of freedom. This implies that a co-target exists, and all LRE of the target is inherited from the parent.
 
For the case of distillation, specifically SRE$\to$LRE, most of the LRE found in the target model descends or is distilled from subsystem symmetries of the parent (though they may not always be confined to any single layer). Specifically, these subsystem symmetries are those which are orthogonal to (overlap an even number of times with) the p-strings generated by the product of operators which form the target stabilizers. After projection, some LRE may not come from this process i.e. are accidental, but there must always be a sufficient amount of distilled SS LRE to specify a stabilizer basis for the ground space. As for the resulting ground state using a general completion, one considers all p-strings which the terms of the completion generate. Then all the SS which are orthogonal to (overlap an even number of times with) those p-strings, once promoted to operators in $\mc R$, are the distilled SS logical operators which determine the resulting ground state. Specifically, the ground state is the simultaneous $+1$ eigenstate of those operators. Furthermore if we think of this as a kind of state preparation, Proposition \ref{prop:LREsource} tells us that no completion can be used for state preparation of an eigenstate of an accidental logical operator.

\subsubsection{Condensation Example: X-cube}\label{sec:condXC}

In the perturbation analysis of  $(TC_2)_{\text{layers}} \to XC$ in Section \ref{sec:pert1}, we claimed that obtaining the X-cube model from stacks of $d=2$ toric code is a condensation process rather than a distillation. To prove this claim according to Def. \ref{def:cond}, we must show that Proposition~\ref{subad} is satisfied. Our unit cell is the two qubits associated with each edge of the cubic lattice. We do not use CLR as this is not projective for a two qubit unit cell, but instead we restrict these local unit cells to 
\begin{align}
\mc R_{\text{local}}= \{ \id_{\mb C_2^{\otimes 2}}, z^1 z^2\} \oplus \{ \id_{\mb C_2^{\otimes 2}}, x^1 \id^2\}= \mc R_{\text{local}}^Z \oplus \mc R_{\text{local}}^X,
\end{align}
so that $\mc R = \bigoplus_e (\mc R_{\text{local}})_e= \bigoplus_e (\mc R^Z_{\text{local}})_e \oplus \bigoplus_e (\mc R^X_{\text{local}})_e$. The choice of $x^1 \id^2$ as opposed to  $ \id^1 x^2$  is arbitrary. We now consider the left restriction. We already know that we can take $\mc D^Z$ as the Z-type left restriction. Analogous to the ``divergence''-like subspace for the Z-type restriction, we shall consider the ``curl''-like subspace for the X-type restriction defined by $\overline{\mc C}^X= \wp(S_{TC_2}^X)^{3L^3}/\overline{\mc D}^X$, i.e. the space of X-type operators mod those from $\overline{\mc D}^X$. We have already discussed how $\phi[\mc D^Z] \subseteq \bigoplus_e (\mc R^Z_{\text{local}})_e$. As for $\overline{\mc C}^X$, we note that $\phi[\overline{\mc D}^X] \subseteq \bigoplus_e \left(\{ \id_{\mb C_2^{\otimes 2}}, x_1 x_2\}\right)_e = \bigoplus_e (\tilde{\mc R}^X_{\text{local}})_e$. So for all $A^X +\overline{\mc D}^X \in \overline{\mc C}^X$, we have
\begin{align}
 \phi[A^X +\mc D^X]=& \phi(A^X) +\phi[\mc D^X]\nonumber \\
\subseteq& \phi(A^X) +  \bigoplus_e (\tilde{\mc R}^X_{\text{local}})_e \in \mc P^X/ \left( \bigoplus_e (\tilde{\mc R}^X_{\text{local}})_e\right) \simeq  \bigoplus_e (\mc R^X_{\text{local}})_e.
\end{align}
So all of $\overline{\mc C}^X$ maps under the image of $\phi$ into a subspace that is isomorphic to $ \bigoplus_e (\mc R^X_{\text{local}})_e$ in a way that is reminiscent of the proof of Proposition~\ref{subad}. The actual image is contained in $\mc P^X/ \left( \bigotimes_e (\tilde{\mc R}^X_{\text{local}})_e\right)$ which we recognize as the equivalence classes for which $x_e^1 \simeq x_e^2$. This is why the exact choice of $\mc R_{\text{local}}^X$ does not matter.

So in total, we have decomposed the space $\mc A_{TC_2 \text{ layer}}$ according to
\begin{align}
\mc A_{TC_2 \text{ layer}} \simeq& \mc D^Z \oplus \mc A^Z/\mc D^Z \oplus \overline{\mc D}^X \oplus \mc A^X/ \overline{\mc D}^X\nonumber \\
 \simeq& \left(\mc D^Z \oplus \overline{\mc C}^X \right) \oplus \left( \overline{\mc D}^X \oplus \mc C^Z\right) \simeq \mc A_{XC} \oplus \mc A_{TC_3},
\end{align}
up to redundancies due to local constraints and where $\mc C^Z = \wp(S_{TC_2}^Z)^{3L^3}/\mc D^Z$. A similar decomposition exists for $\mc F_{TC_2 \text { layers}} \simeq \mc F_{XC} \oplus \mc F_{TC_3}$. This completes the proof of Eq.~\eqref{eq:tcdecomp}. So in terms of the perturbation analysis, the Hilbert space of the $TC_2$ layers were already such that $\mc H_{TC_2 \text{ layers}} \simeq \mc H_{XC} \oplus \mc H_{TC_3}$ and any stabilizer state decomposes as $\ket{\psi_{TC_2 \text{ layers}}(f)} \simeq \ket{ \psi_{XC} (f_\mc R)} \otimes \ket{\psi_{TC_3}(f_{\mc R^{\perp_\Lambda}})}$. The only distinction is the energy of such a state is more naturally written in terms of $\psi_{TC_2 \text{ layers}}(f)$ i.e. the $TC_2$ excitations rather than some complicated combination of the XC and $TC_3$ excitations. But importantly, the ground spaces coincide. The effect of the infinite perturbation by the base model is such that the energetics of the $XC$ are decoupled from that of $TC_3$ via deconfinement of loop excitations such that the energy of an open p-string is no-longer proportional to the length, but rather is a constant due to the open ends. Fractons are then a biproduct of the interplay of deconfinement and non-trivial braiding/conservation laws in the underlying layers  (again, see Ref. \cite{Ma2017, Vijay2017}).  This is given by the fact that for the ground space
\begin{align}
\ket{\psi_{TC_2 \text{ layers}}(0)} \to \ket{ \psi_{XC} (0_{\mc R})} \otimes \left( p^{(0)}_{ZZ} \ket{\psi_{TC_3}(0_{\mc R^{\perp_\Lambda}})}\right),
\end{align}
where $ p^{(0)}_{ZZ} \ket{\psi_{TC_3}(0_{\mc R^{\perp_\Lambda}})}$ represents the condensation of all p-strings not encircling a cycle of the 3-torus. So in total, a condensation does not change our perspective on the Hilbert space, i.e. how we factor it into distinct excitations, but rather changes the energetics of those excitations. 

\subsubsection{Type-I Distillation Example: Cluster-cube}\label{sec:distCC}

Here we describe the SRE$\to$LRE process which takes the $d=2$ cluster model into the cluster-cube model. In this case $\mc L= \mc D^X \oplus \mc D^Z$ is the left restriction and $\mc R= \bigoplus_{v}\left(\mc R_{\text{CLR}}^3 \oplus \mc R_{\text{CLR}}^3\right)_v$ is the right restriction, i.e. we apply two copies of three qubit CLR corresponding one qubit for each direction, for the six qubits associated to each vertex $v$. We have already shown the cluster-cube example is not a condensation and thus a distillation by definition. However, we now also have the tools to understand the source of the fractonic behavior and LRE. On both accounts, we can see that the Type-I behavior descends directly from 1D subsystem symmetries in the layer planes. Operators for two different completions, $S_{c\hyph \triangle}$ and $S_{c'\hyph \triangle}$ are shown in Fig. \ref{fig:ccbuild2}.

\begin{figure}

\centering

\includegraphics[scale=.3]{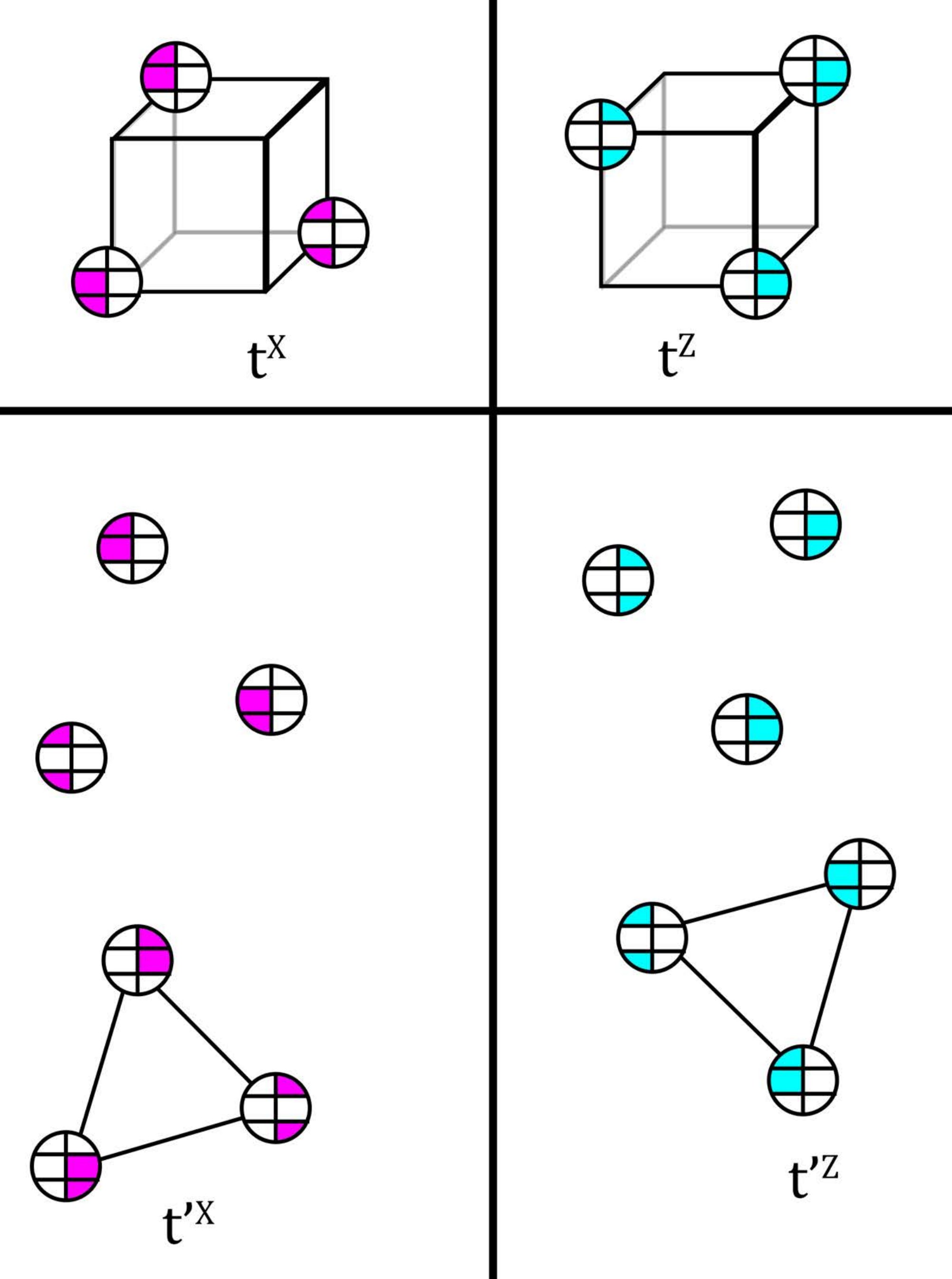}

\caption{(top) Members $t^X, t^Z \in S_{\triangle}$. (bottom) Operators for the two primary completions, Including the $t'^X, t'^Z \in S_{c'\hyph \triangle}$.}\label{fig:ccbuild2}
\end{figure}

\begin{figure}

\centering

\includegraphics[scale=.2]{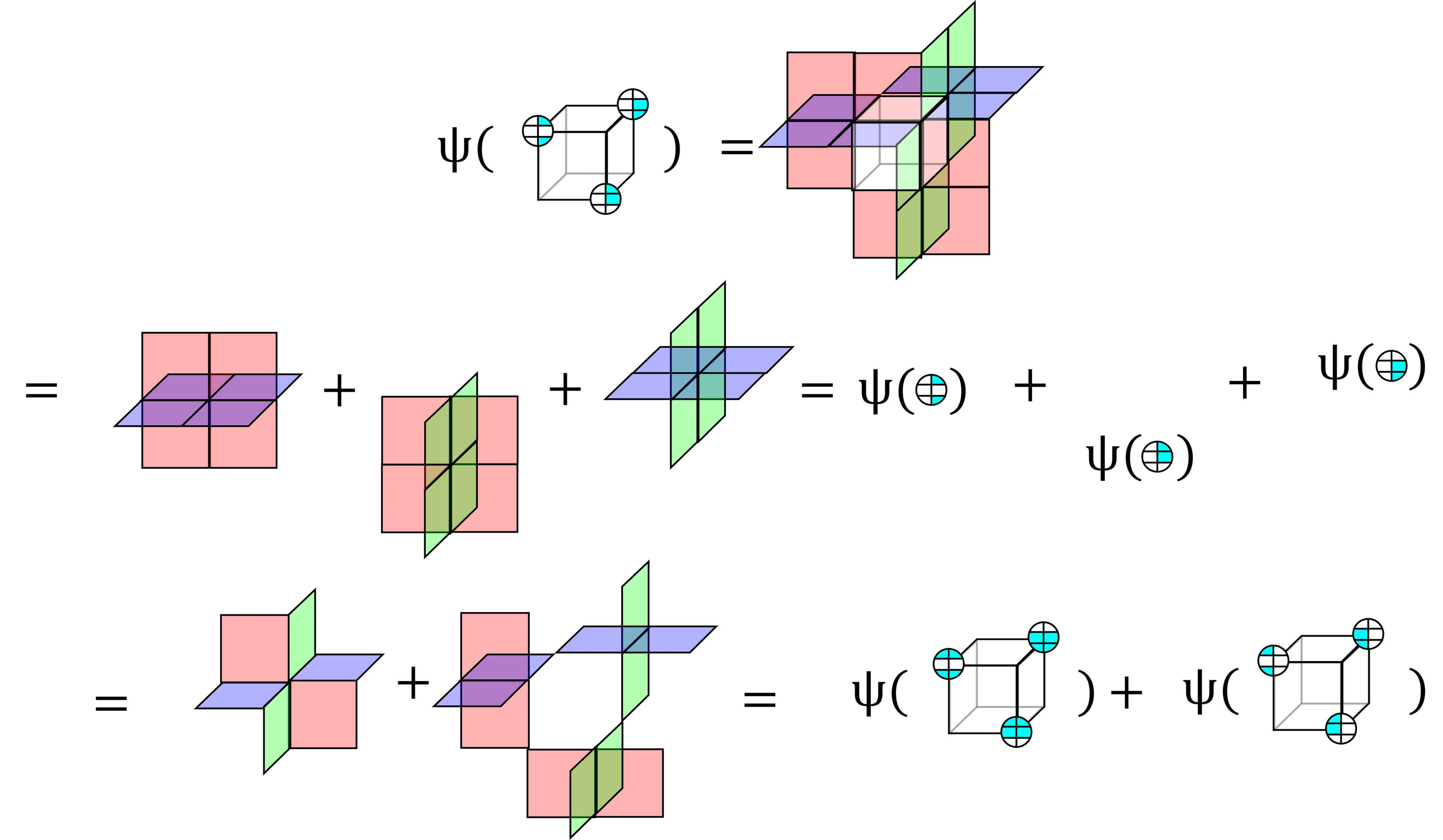}

\caption{X-type p-string configuration generated by a Z-type member of $\mc G_{\text{junk}}$ which is then broken down into the sum of X-type p-strings generated by members of $S_{c \hyph \triangle}$ and $S_{c'\hyph \triangle}$. The X-type members of $\mc G_{\text{junk}}$ generate analogous Z-type p-string configurations.}\label{fig:ccdistloops}

\end{figure}

\begin{figure}

\centering

\begin{tabular}{c c}
\subfloat[$\mc A_{c \hyph \triangle}$ \label{fig:ccSSovera}]{\includegraphics[scale=.15]{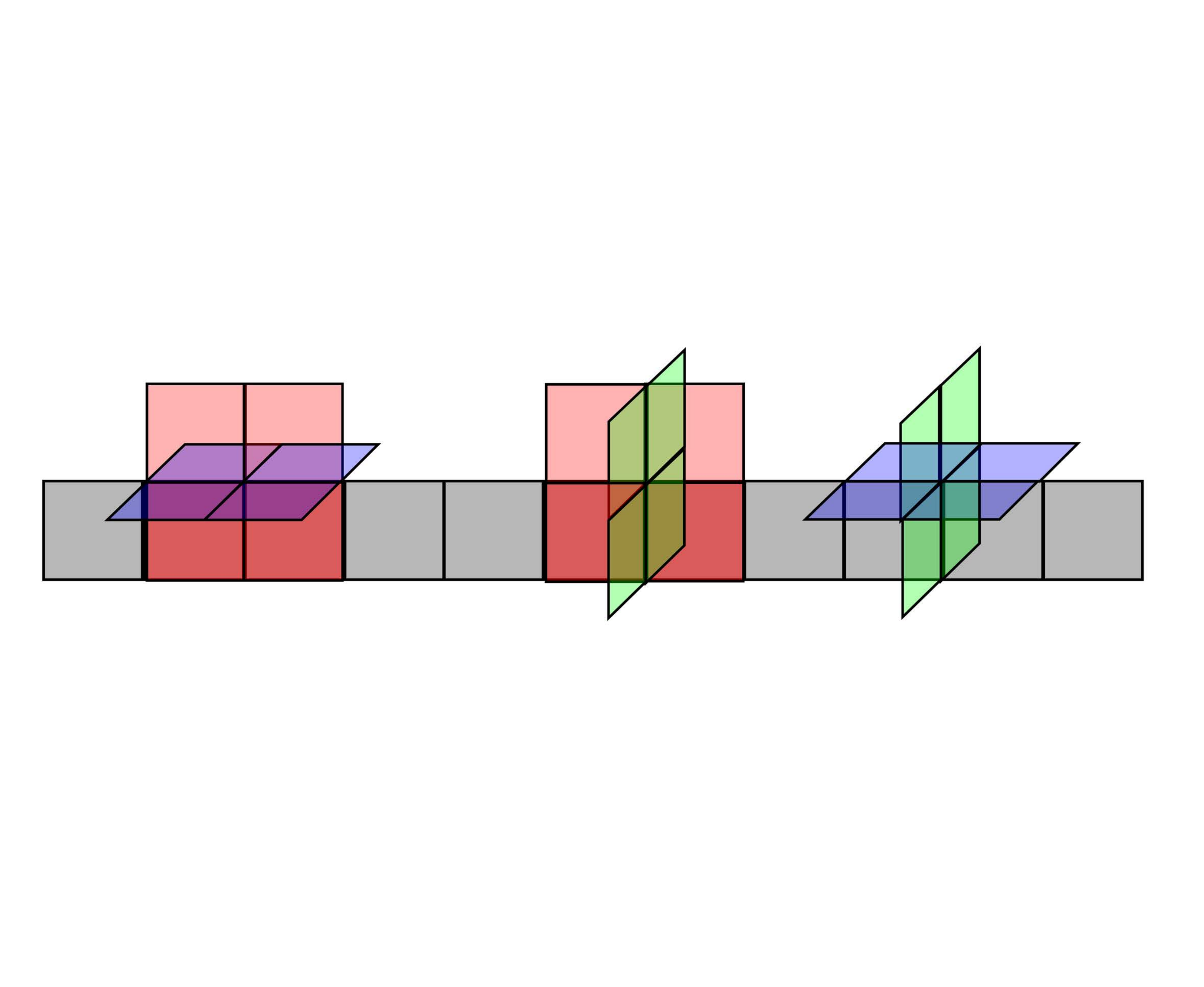}}
\subfloat[$\mc A_{c' \hyph \triangle}$ \label{fig:ccSSoverb}]{\includegraphics[scale=.15]{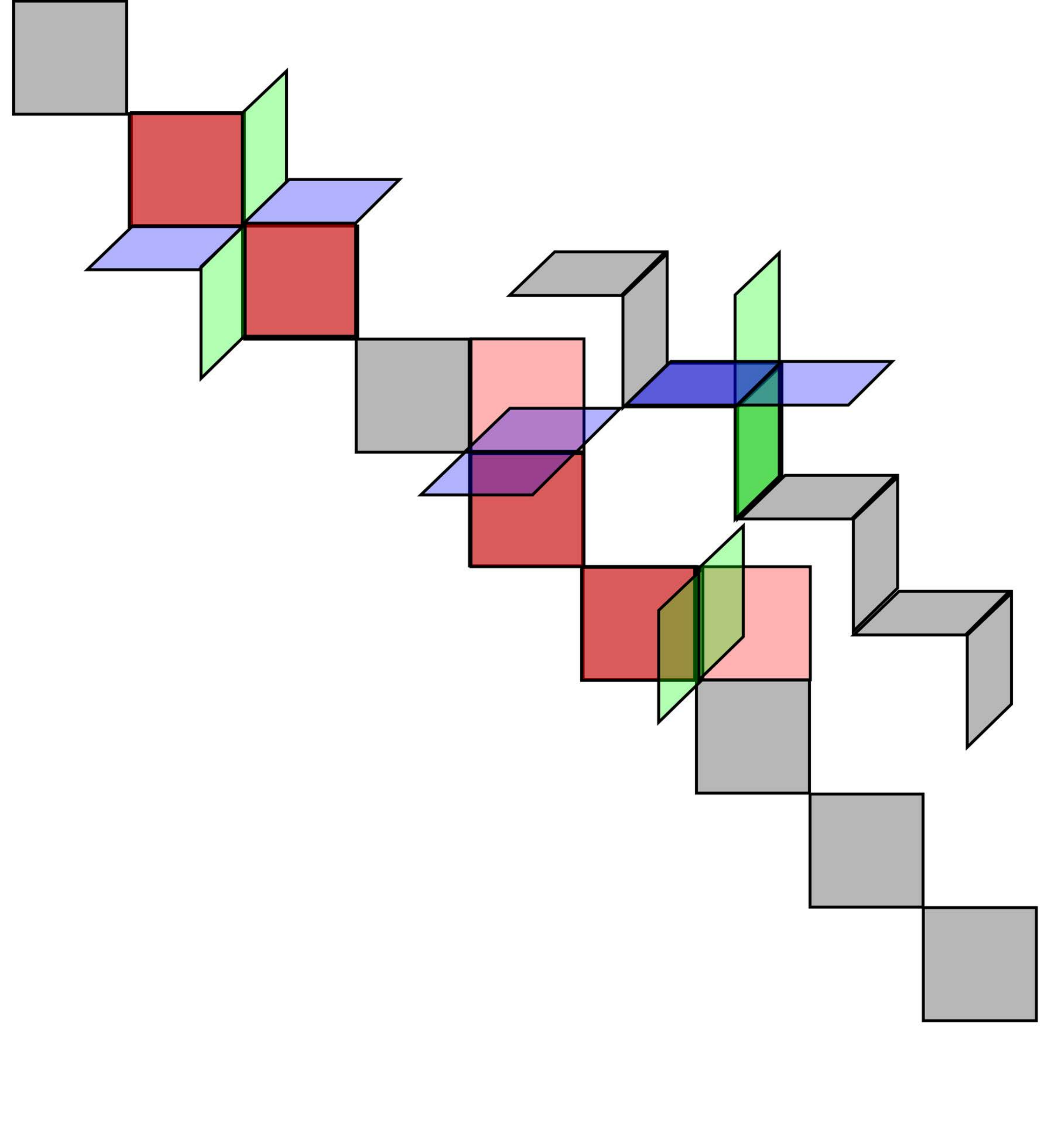}}
\end{tabular}

\caption{Demonstration how the p-string configurations of each completion overlaps an even number of times with the subsystem symmetries (gray). Thus these subsystem symmetries are promoted to logical operators under the map $\pi_{\mc R}\phi$.}\label{fig:ccSSover}

\end{figure}

We start by consider members of $\mc A_{\text{distill}}= \mc A_{\triangle}$ as generated by the p-string configuration in Fig. \ref{fig:ccdistloops}, which is also split into the sum of simpler p-string configurations, two different ways. Splitting it this way is not arbitrary, but corresponds to our two different completions of $\mc G_{\text{junk}}=\mc G_\triangle$. The first corresponds to $S_{c \hyph \triangle}$ as discussed in Section~\ref{sec:ccbase} and generates a maximal mutually commuting subspace in $\mc R^{\perp_\Lambda}$. The second is an alternative completion $S_{c'\hyph \triangle} \supseteq S_{\triangle}$ which includes the original triangle operators as well as another set of triangle operators, $t'^X, t'^Z$ which collectively forms two stacks of our triangular lattice $TC_2$ variant in the $[111]$ direction as can be seen in Fig. \ref{fig:ccbuild2}. Because this set is topologically ordered and has non-trivial logical operators, the subspace it generates in $\mc R^{\perp_\Lambda}$ is not maximal mutually commuting without including some of the $TC_2$  string logical operators. However, such strings are irrelevant as discussed below.

In the case of $\psi[\mc G_{c\hyph\triangle}] =\mc A_{c\hyph\triangle}$  (Fig. \ref{fig:ccSSovera}), we have already discussed how $ (\mc A_{c\hyph\triangle})^{\perp_\Omega}$ contains the $d=1$ subsystem symmetries from Eqs. \eqref{eq:sscluster} in all three directions, which are then promoted to string logical operators under the action of $\pi_\mc R \phi$. As $\mc G_{c\hyph \triangle}$ is maximal mutually commuting, by Proposition \ref{prop:loops}, we know there exists a p-string in $ \mc A_{c\hyph \triangle}$ corresponding to this distilled SS operator under the action of $\psi$. This looks like four p-strings wrapping the torus in one direction and is generated by side-by-side pairs of small p-strings as shown in Fig. \ref{fig:ccstrloop}.  Furthermore, these distilled SS operators are not unique once we mod out $\mc D^X \oplus \mc D^Z$ from $(\mc A_{c\hyph\triangle})^{\perp_\Omega}$. In particular, one can put four subsystem symmetries together to form a product of cubes along a line. Under the action of $\phi$,  the intersection of subsystems along two hinges maps into $\mc R^{\perp_\Lambda}$ and are thus killed by $\pi_{\mc R}$, while the other two hinges become the pair of strings. As discussed before, these can be spread apart and eventually cancel out to form a constraint which is responsible for fractonic behavior. As we can see, this planar constraint is represented in $\mc R^{\perp_\Lambda}$ by the layers of subsystem symmetries formed along the ``stair-stepping'' hinges as demonstrated in Fig. \ref{fig:ccSStoconst}.

We now consider $\psi[\mc G_{c'\hyph\triangle}] =\mc A_{c'\hyph\triangle}$  (Fig. \ref{fig:ccSSoverb}). In this case, one can see that $\left(\mc A_{c'\hyph\triangle}\right)^{\perp_\Omega}$ contains the sum of  plaquettes in a plane which are connected corner-to-corner and extend diagonally as well as the diagonal ``stair-stepping'' sum. The in-plane diagonal subsystem symmetry is then promoted to another distilled SS operators that anti-commutes with the distilled SS logical operator of Eqs. \eqref{eq:sscluster} perpendicular to the plane from the other subsystem symmetry. In the other case, the stair-stepping sum of plaquettes maps into $\mc R^{\perp_\Lambda}$ and is thus a null distilled SS configuration. We can also add two back-to-back diagonal subsystem symmetries and two back-to-back stair-stepping subsystem symmetries to form a stair-stepping sum of cubes, and thus two back-to-back distilled SS logical operator are equivalent under the action of $\pi_{\mc R} \phi$ once we mod out $\mc D^X \oplus \mc D^Z$ from $(\mc A_{c'\hyph\triangle})^{\perp_\Omega}$.  Analogous to Fig. \ref{fig:ccSStoconst}, stacking these stair-stepping cubes along a diagonal plane is equivalent to moving these diagonal distilled SS operators apart to  cancel out and create the same planar constraint. We can also find the member in $\mc A_{c'\hyph\triangle}$ which this distilled SS maps to under $\psi$ a la Proposition \ref{prop:loops} as shown in Fig. \ref{fig:ccstrloop2}. This ends up being true even though $\mc G_{c'\hyph\triangle}$ is not maximal. Interestingly, the p-strings generated by the $TC_2$  sting logical operators, are exactly the stair-stepping subsystem symmetries which are mapped to zero under $\pi_{\mc R} \phi$. This seem to be why Proposition \ref{prop:loops} still holds.  We can also find a local null distilled SS configuration as shown in Fig. \ref{fig:ccnull}. As this belongs to $\mc G_{\triangle}^{\perp_\Lambda}/ \mc R$, we can decompose this into a product of completion operators which is also shown in Fig. \ref{fig:ccnull}. By the scaling of dimensions, such local configurations must exist as the distillation cost is extensively less than the number of discarded degrees of freedom i.e. $\dim \mc G_{\triangle} \sim 2L^3$, whereas $\dim \mc R^{\perp_\Lambda} \sim 4 L^3$. So these operators could be further condensed from this model via Proposition \ref{subad} and would show up as terms in our effective Hamiltonian if we only used $S_\triangle$ for the base model instead of a completion.

\begin{figure}

\centering

\includegraphics[scale=.25]{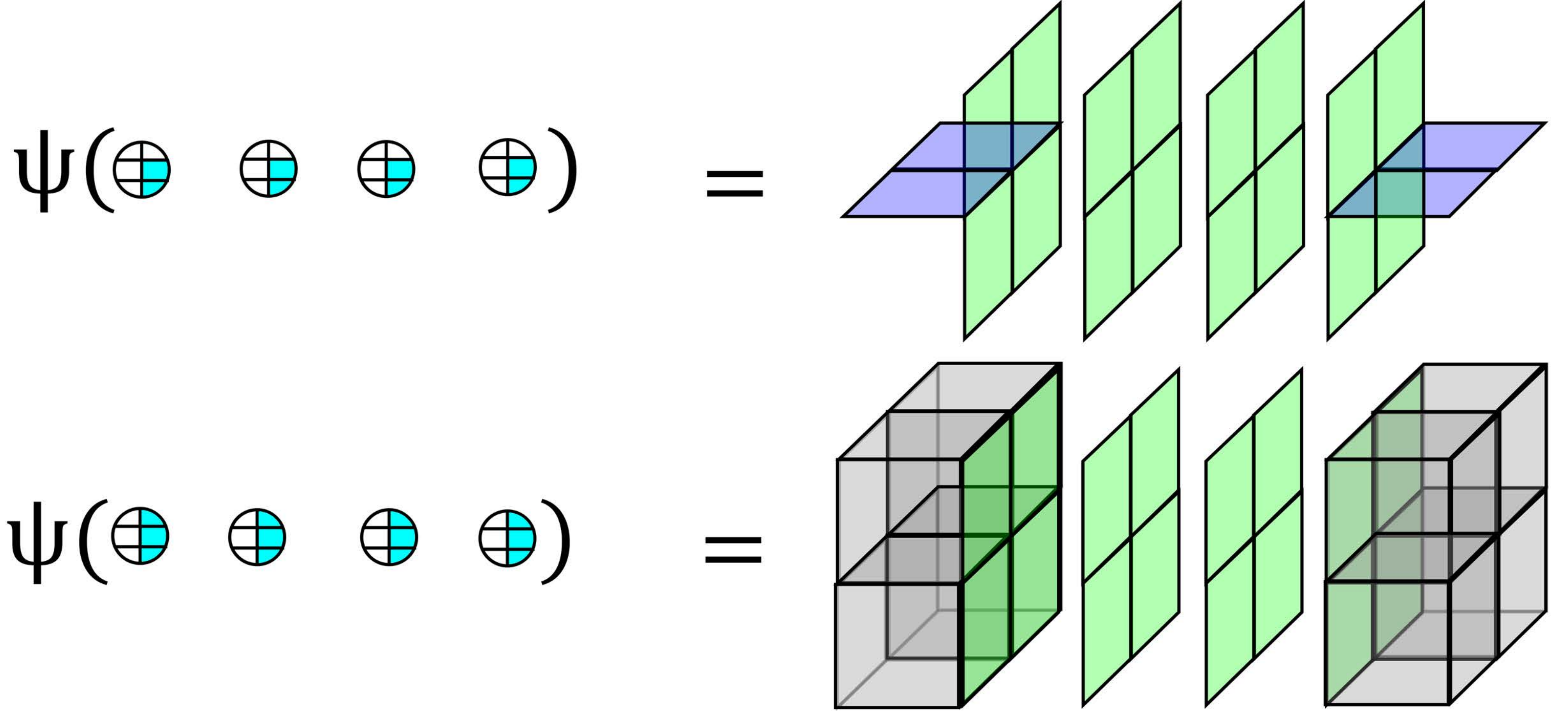}

\caption{(top)X-type p-string configurations generated by a line-like member of $\mc G_{c\hyph \triangle}$. (bottom) Plaquette configuration generated by part of the distilled SS operator. The gray cubes represent the cluster-cube excitation which is reminiscent of the method used in Section \ref{sec:ccdesc} to first describe the logical string operators. The X-type distilled SS operator generates analogous p-string configurations.}\label{fig:ccstrloop}

\end{figure}

\begin{figure}

\centering

\includegraphics[scale=.2]{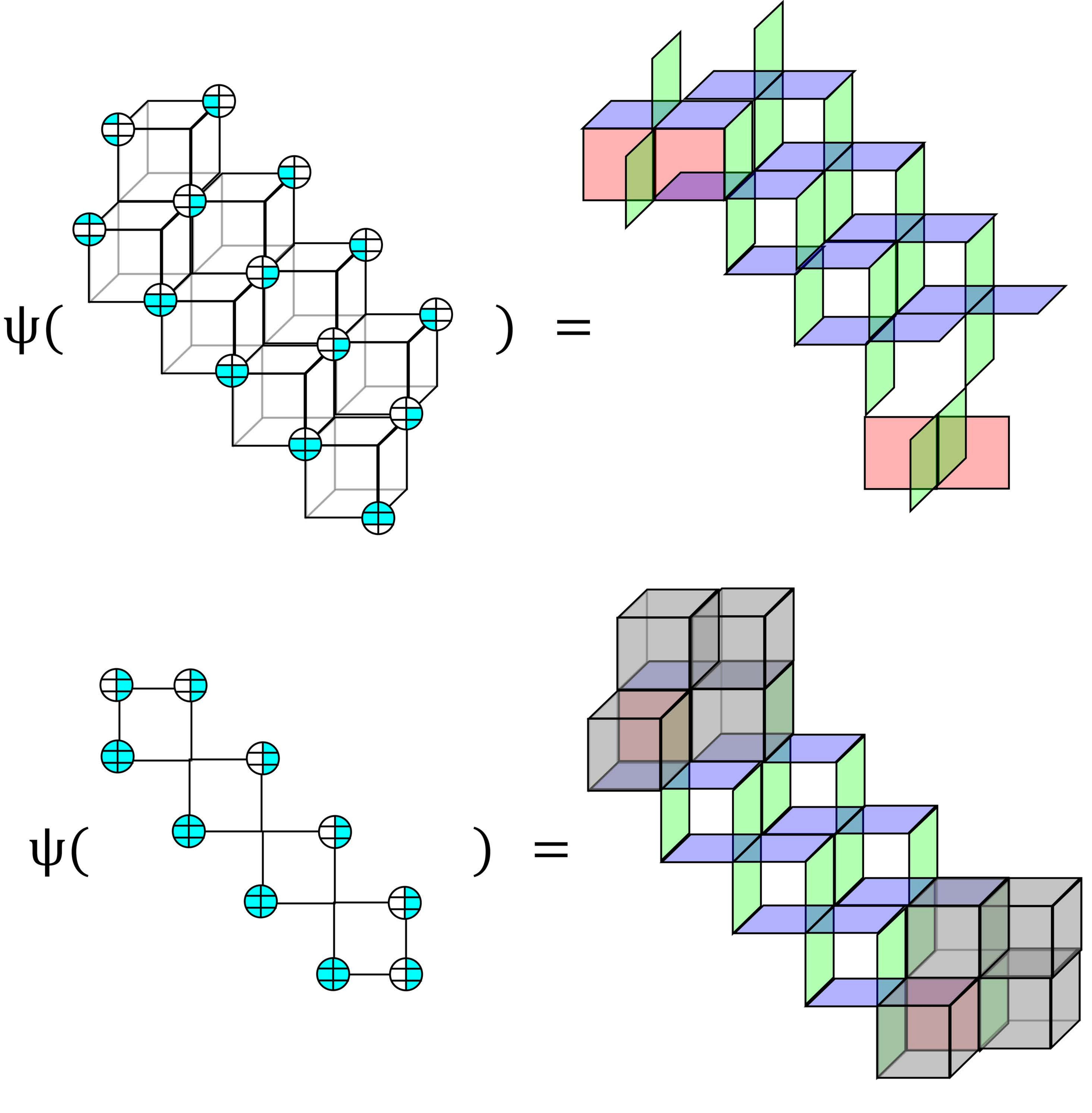}

\caption{(top)X-type p-string configurations generated by a line-like member of $\mc G_{c'\hyph \triangle}$. (bottom) Plaquette configuration generated by part of the diagonal distilled SS operator. The gray cubes represent the cluster-cube excitations generated by this part of the distilled SS operator. The X-type distilled SS operator generates analogous p-string configurations.}\label{fig:ccstrloop2}

\end{figure}

\begin{figure}

\centering

\includegraphics[scale=.15]{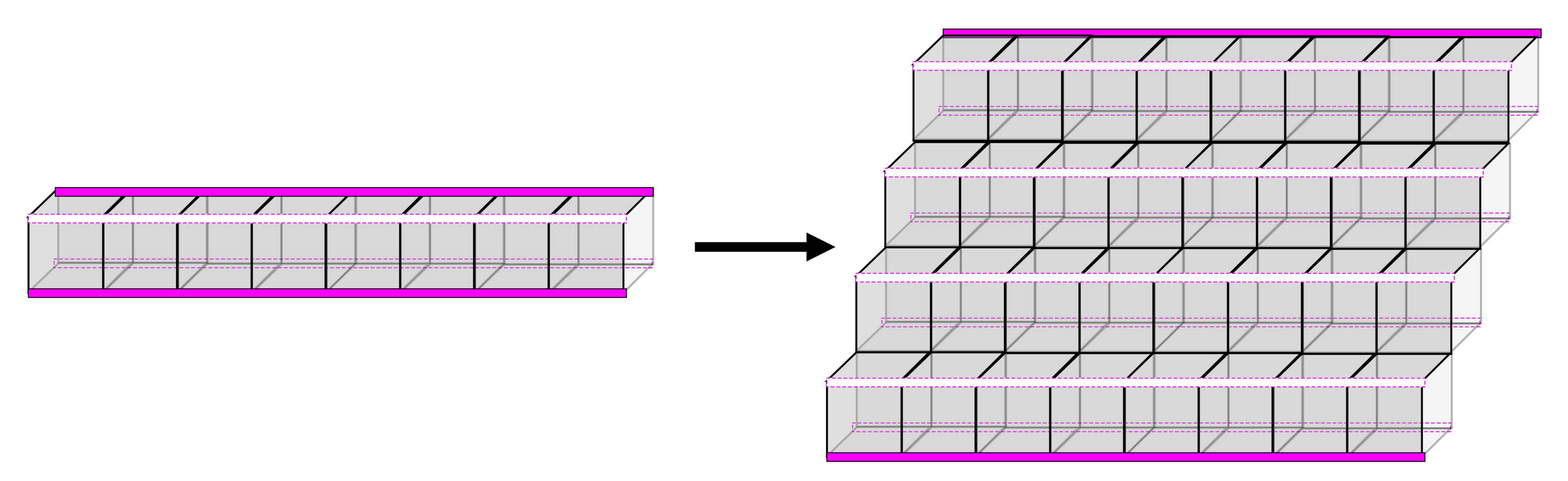}

\caption{Demonstration how distilled SS operators can be combined to form the constraints responsible for fractonic behavior. The solid line represents the X-type support in $\mc R$ whereas the dotted line represents the X-type support in $\mc R^{\perp_\Lambda}$. One can see that if the pattern on the right is continued, all support is contained in $\mc R^{\perp_\Lambda}$.}\label{fig:ccSStoconst}

\end{figure}

\begin{figure}

\centering

\includegraphics[scale=.2]{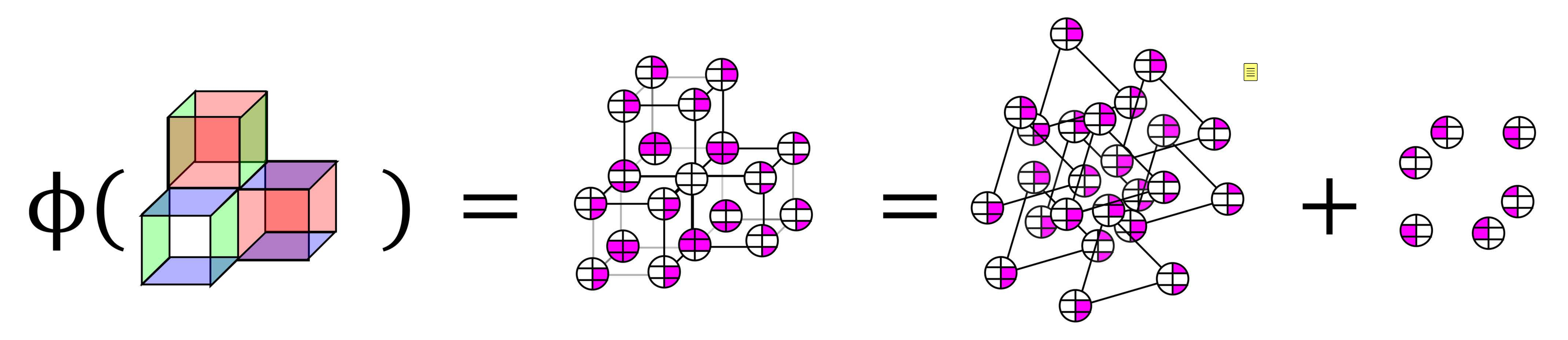}

\caption{Depiction of the null distilled SS member of $\mc A_{\triangle}^{\perp_\Omega}$ and how it maps into $ \mc R^{\perp_\Lambda}$ as well as its decomposition into a sum of $ t'^X, t'^Z$ and single unit cell completion operators.}\label{fig:ccnull}

\end{figure}

\begin{figure}

\centering

\includegraphics[scale=.1]{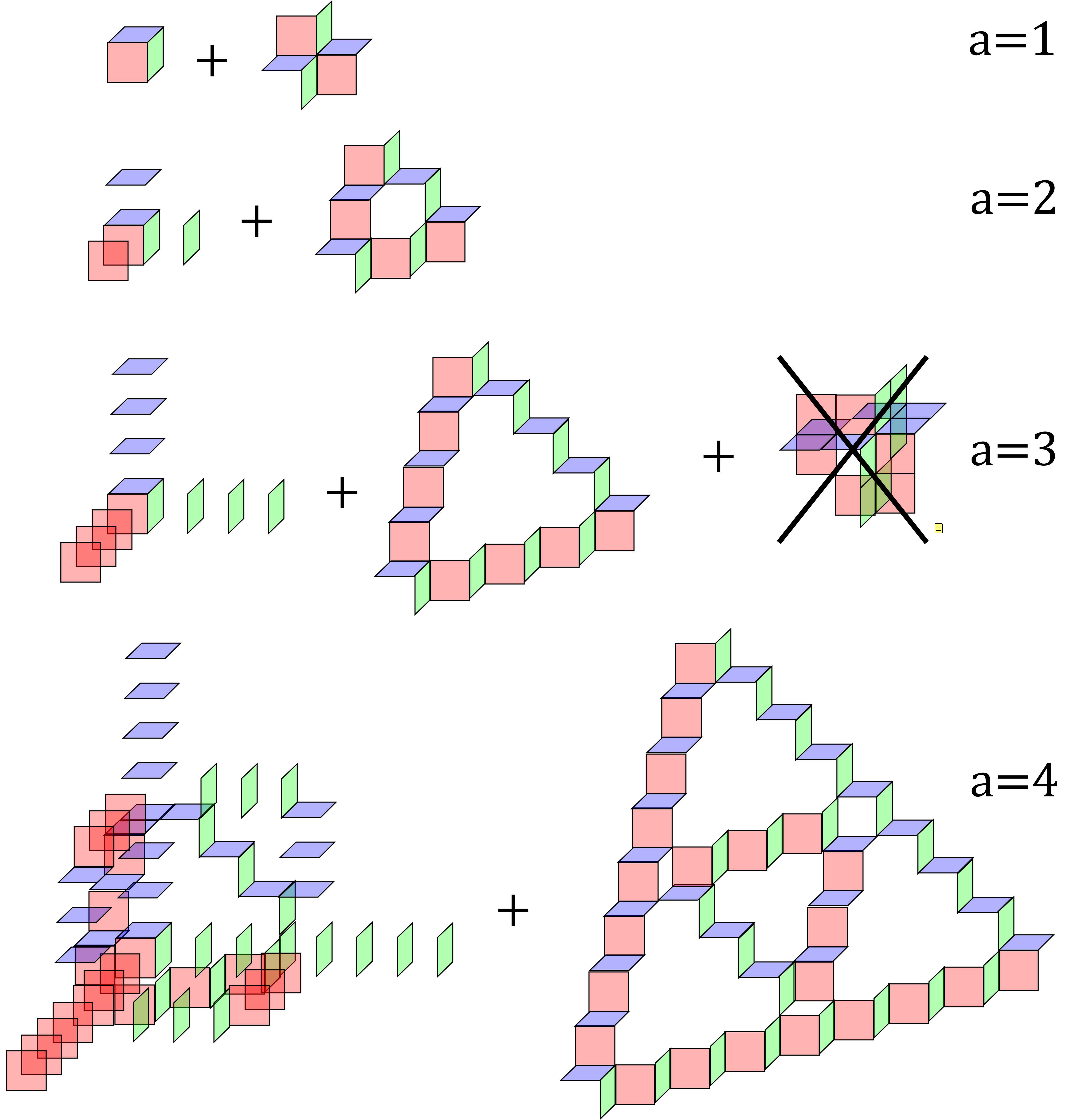}

\caption{Depiction of the p-string configuration generated by the ZZ-type fractal logical operator of the cluster-cube model for the first four generations as indexed by $a$. Note in the third generation, we eliminated the p-string configuration member of $\mc A_{\triangle}$ as we are only concerned with p-strings mod members of this set. }\label{fig:ccfracloop}

\end{figure} 

As for the fractal logical operators, we argue that, singularly, these are accidental. Our argument draws directly on the definition of accidental operators. If one forms the p-string configuration generated by the fractal logical operator, it is given by 3 p-strings wrapping each direction of the 3-torus, which is then dressed by a complicated configuration of locally generated p-strings, i.e. it can be generated by some configuration of the smallest p-strings as shown in Fig. \ref{fig:ccfracloop}. The p-stings which generate $\psi[\mc G_{\triangle}^{\perp_\Lambda}/\mc R]$ in Fig. \ref{fig:ccdistloops} are also locally generated, but as the topologically non-trivial part of the fractal p-string is not locally generated, it is therefore the case that such a logical operator must be accidental. The only caveat is that the string logical operators of the $[111]$ $TC_2$ layers are also in $\mc G_{\triangle}^{\perp_\Lambda}/\mc R$ and the stair-stepping p-strings they form are not locally generated. However, these are null and cannot be deformed out of the $[111]$ plane by any other member of $\mc G_{\triangle}^{\perp_\Lambda}/ \mc R$ so they can't account for the topologically non-trivial part of the fractal p-string. The other fractal logical operators following by a similar argument. Note however, the sum of two such fractal logical operator could be distilled as only an odd number of stings wrapping the torus are topologically non-trivial. By Proposition \ref{prop:acc}, one then expects that there are other distilled SS logical operators. However, the proof requires that $\mc L$ is projective, but this is not strictly true when $L$ is even, precisely when these fractal logical operators exist. Moreover, one can form a hexagonal-like pattern in the $[111]$ plane out of  cube-corner products of plaquettes as shown in Fig. \ref{fig:ccSSop}. One can see that this pattern overlaps an even number of times with the p-strings in $\mc A_{c'\hyph \triangle}$ and requires a periodicity of 4. We conjecture there are several other patterns for different values of $L$ which form distilled SS logical operators.

\begin{figure}

\centering

\includegraphics[scale=.15]{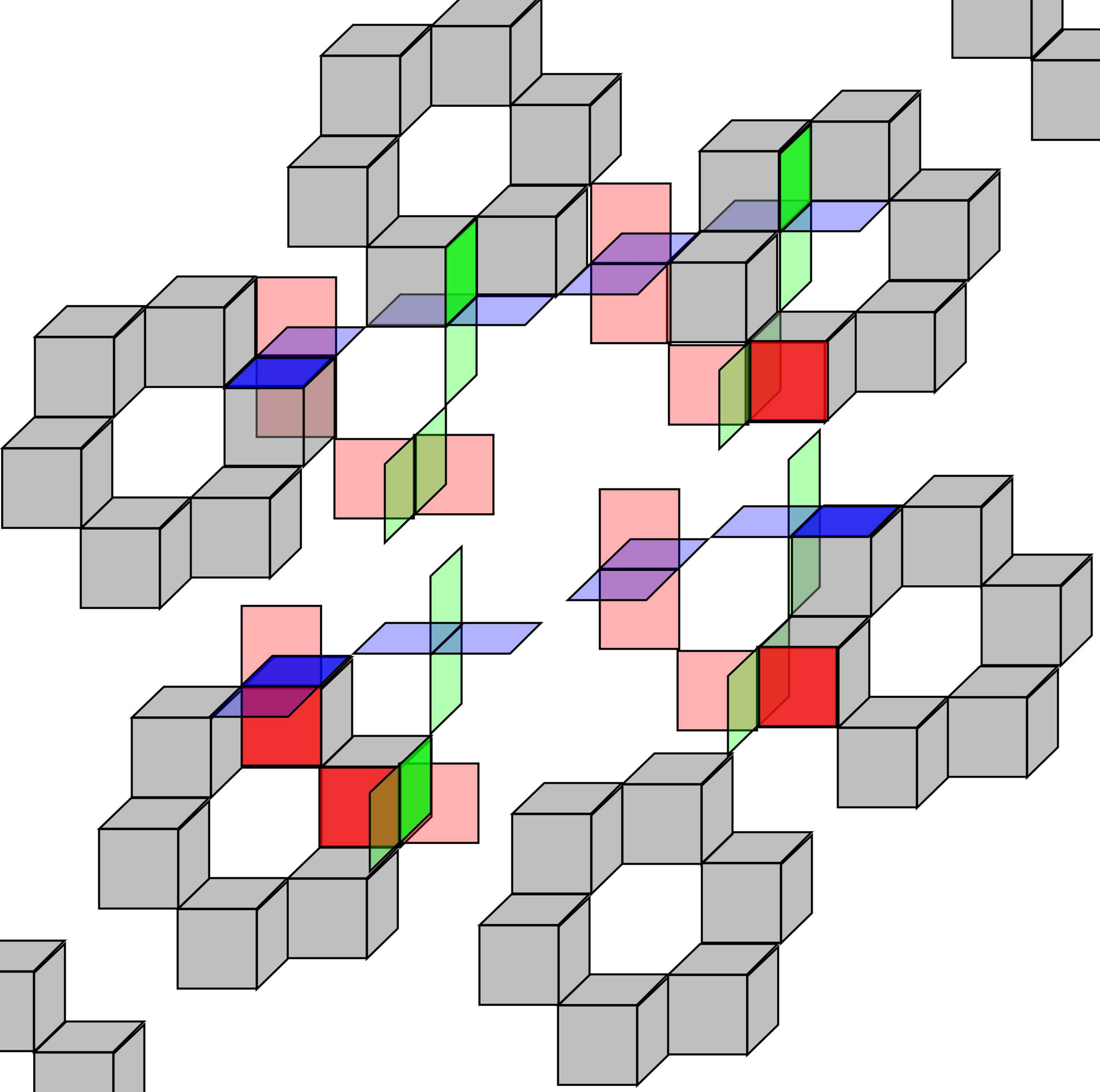}

\caption{2D X-type member of $(\mc A_{c' \hyph \triangle})^{\perp_\Omega}$ (gray) superimposed with several members of $\mc A_{c' \hyph \triangle}$. Highlighted plaquettes show overlap.}\label{fig:ccSSop}

\end{figure} 

\subsubsection{Type-II Distillation Example: Haah's Cubic Code}\label{sec:disthaah}

\begin{figure}

\centering

\includegraphics[scale=.2]{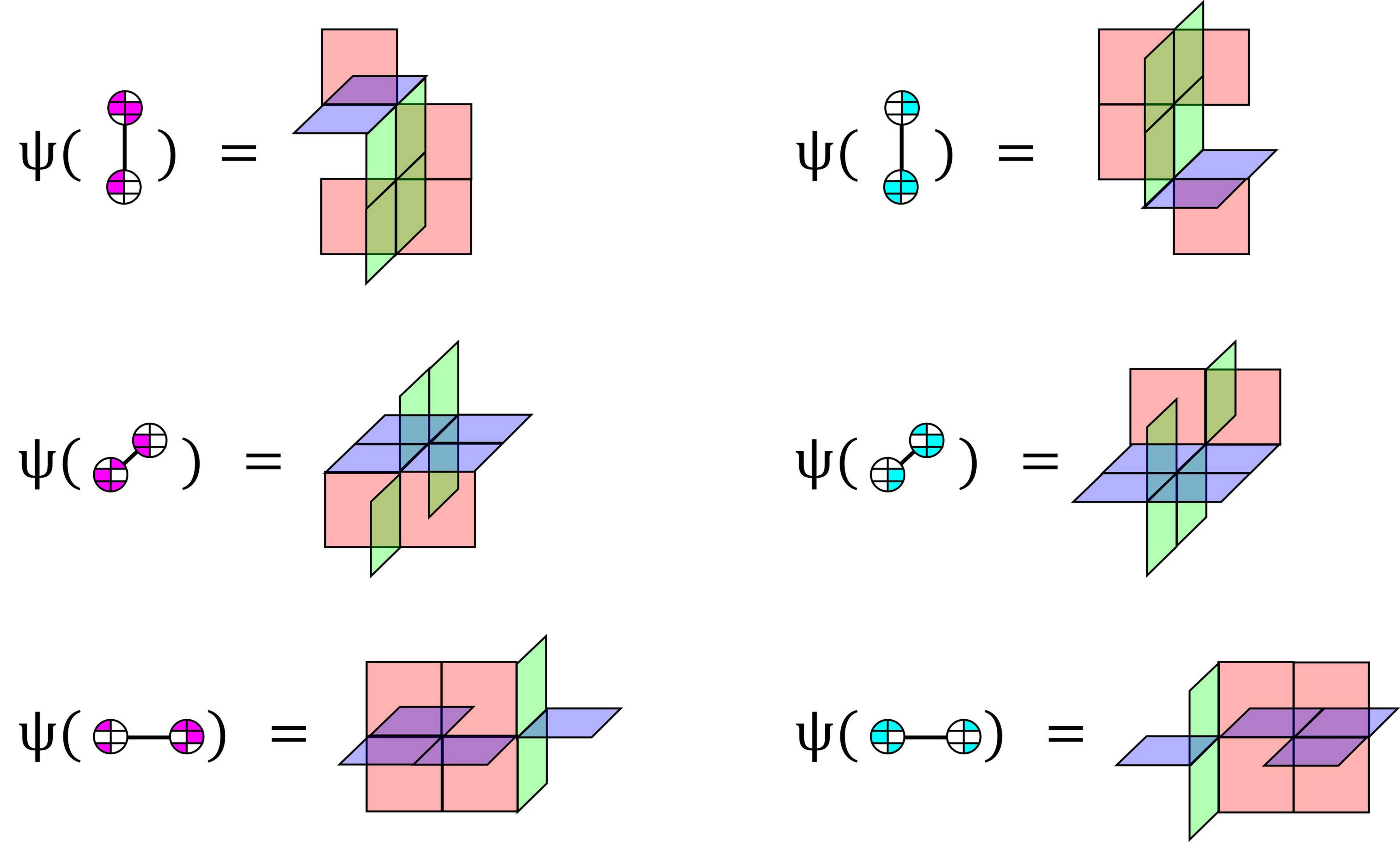}

\caption{Edge operators which generate most of $\mc G_{\text{hex}}^{\perp_{\Lambda}}$ and the p-strings they generate  under the action of $\psi$. }\label{fig:haahfundloops}

\end{figure} 

\begin{figure}

\centering

\includegraphics[scale=.25]{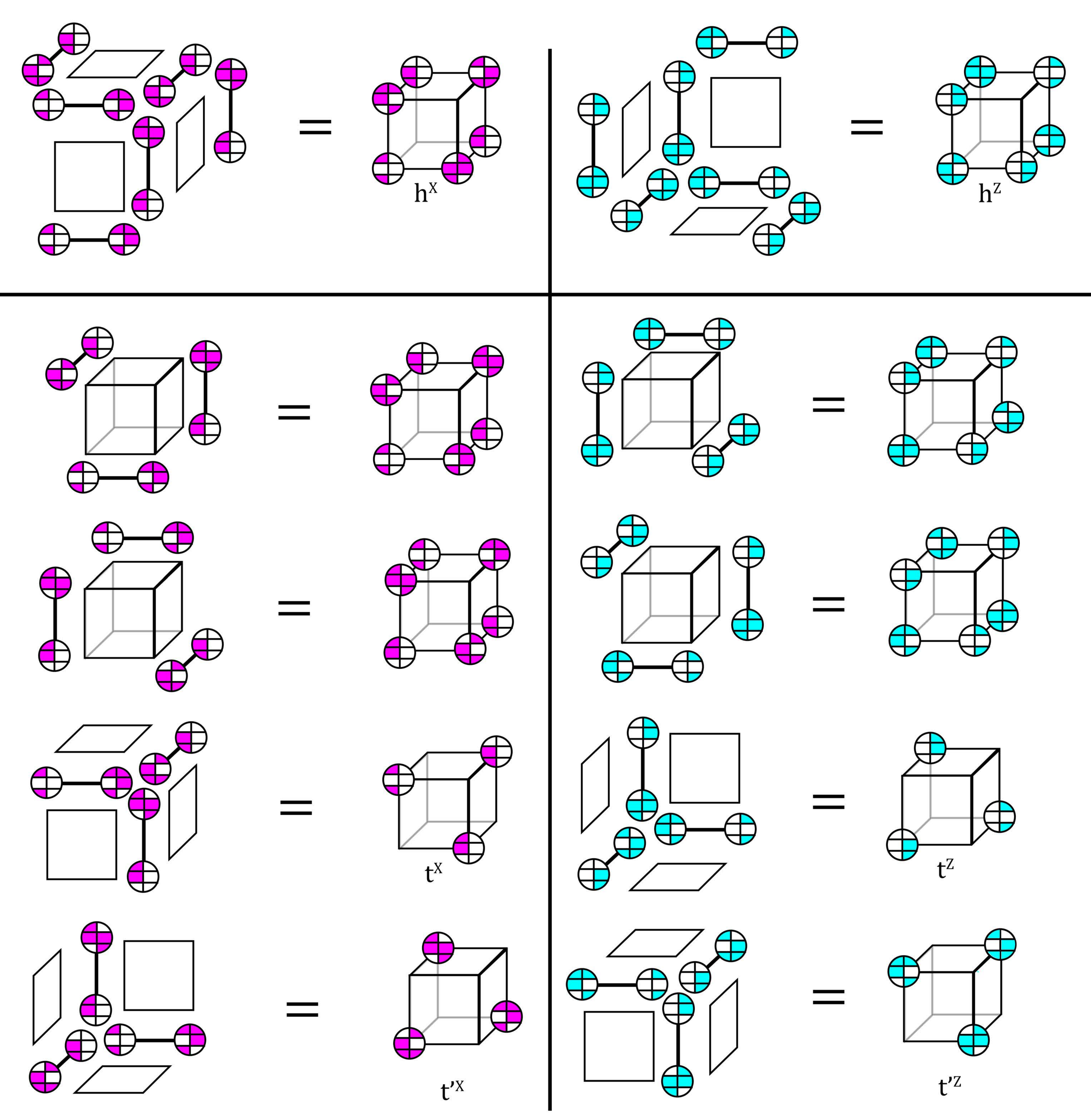}

\caption{(top) Members of $S_{\text{hex}}$ as broken in to edge operators. (bottom) Some of the many operators which can be used to form the completion, including the $t^X, t^Z$ operators discussed in Fig. \ref{fig:hexdef} and $t'^X t'^Z$ operators discussed below, broken into edge operators. }\label{fig:haahbuild2}

\end{figure}

Here we describe the SRE$\to$LRE process which takes the $d=2$ quasi-cluster model into Haah's cubic codel. Once again, $\mc L= \mc D^X \oplus \mc D^Z$ is the left restriction and \break $\mc R= \bigoplus_{v}\left(\mc R_{CLR}^3 \oplus \mc R_{CLR}^3\right)_v$ is the right restriction.

Unlike the cluster-cube SRE$\to$LRE example, members of $\mc A_{\text{distill}}= \mc A_{\text{hex}}$ are more complicated, as shown in Fig. \ref{fig:haahloops}. Furthermore, there are many roughly equivalent choices of completions. To demonstrate this, we can find a set of operators which we refer to as ``edge'' operators and generates most of $\mc G_{\text{hex}}^{\perp_\Lambda}/ \mc R$. These operators, as well as the p-strings they generate, are shown in Fig. \ref{fig:haahfundloops}. Just as with $\mc G_{c'\hyph \triangle}$ from the cluster-cube example, the completions that we discuss form $[111]$ layers of the variant of $TC_2$ on the triangular lattice. This implies there are string operators which are also in $\mc G_{\text{hex}}^{\perp_\Lambda}/ \mc R$, but are not null configurations in this case. The edge operators do not mutually commute, so they cannot be used to form a completion, as originally defined. But they can be used to form $h^X, h^Z$ as shown in Fig. \ref{fig:haahbuild2} and if for every $h^X, h^Z$ operator we choose some other mutually commuting product of edge operators, these operators, along with $h^X, h^Z$, collectively form a completion, $S_{c \hyph \text{hex}}$. Fig. \ref{fig:haahbuild2} shows several options, all of which form a version of $TC_2$. Unfortunately, none of these completions simplify the terms of the base Hamiltonian. However, the completion containing the $t'^X, t'^Z$ operators from Fig. \ref{fig:haahbuild2} splits the $h^X, h^Z$ generated p-strings as show in Fig. \ref{fig:haahloops}, where the left p-string is generated by $t'^Z$ and the right is generated by $h'^Z= t'^Z + h^Z$. An analogous splitting holds for $t'^X$, and $h'^X= t'^X  +h^X$.

\begin{figure}

\centering

\includegraphics[scale=.2]{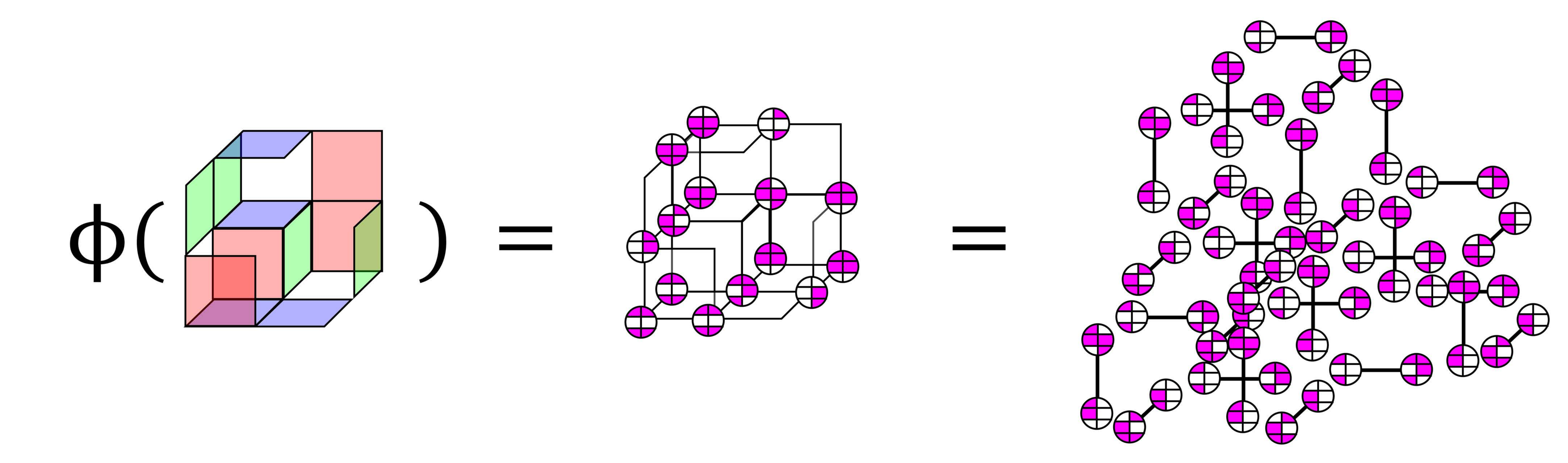}

\caption{Depiction of the null distilled SS member of $\mc A_{\text{hex}}^{\perp_\Omega}$ and how it maps into $ \mc R^{\perp_\Lambda}$ as well as its decomposition into edge operators.}\label{fig:haahnull}

\end{figure} 

To understand the distilled SS, we consider two important configurations. The first is a local pattern of plaquettes which belongs to $ \mc A_{\text{hex}}^{\perp_\Omega}$ but maps to $\mc R^{\perp_\Lambda}$ under $\phi$ as shown in Fig. \ref{fig:haahnull} i.e. it is a null distilled SS configuration. Once again, the image of a null configuration must map into $ \mc G_{\text{junk}}^{\perp_\Lambda}/ \mc R$ and indeed Fig. \ref{fig:haahnull} shows how this null operator can be decomposed into 33 edge operators. Once again, these operators represent terms that would appear in our effective Hamiltonian if the base only contained terms from $S_{\text{hex}}$. The second configuration we consider is the cube-corner which ``excites'' the pattern (has odd over lap with members of $\mc A_{\text{hex}}$) as shown in Fig. \ref{fig:haahorth}, which also shows other patterns of excitations generated by cube-corners. These cube-corner configurations can be used to form periodic patterns such that all excitations are removed, and we are left with a distilled SS logical operator. Note that all the excitations lie in a $[111]$ plane of cubes, so any configuration we consider for forming distilled SS logical operators using cube-corners may as well lie entirely in the same $[111]$ plane.  For example, one recognizes a pattern similar to Fig. \ref{fig:ccSSop} can be formed by these cub-corners. This as well as several others distilled SS operators have a finite periodicity, such as 1,2,3,4 and 6, and thus only represent a constant number of logical operators. These would appear to be connected to the constraints found in Ref. \cite{Schmitz2019a} which are periodic in the $[111]$ direction i.e. they are formed in the intersection of such constraints with a boundary along the $[111]$ direction, via Theorem 2 of the Reference. These constraints as well as the fractal constraints discussed in Ref. \ref{sec:haahdef} also map in $\mc R^{\perp_\Lambda}$ under $\phi$ by Proposition \ref{sec:genker}, though the members they map onto are complicated.  

We can also form fractal distilled SS logical operators from cube corners. In particular, the configuration on the left of Fig. \ref{fig:haahorth} is similar to the simplest excitation pattern of the Newman-Moore model, i.e. it forms a triangle, except for an additional excitation in the center. This can be used as the seed or first generation of a fractal sequence, whereby the next generation is given by four translated copies of the former generation as shown in Fig. \ref{fig:haahfracSS}. Excitations are spread along the corners of these triangular fractal patterns with one excitation remaining at the center. Because of the central excitation, these fractals do not themselves form distilled SS logical operators. However, fractal generation $a$ on the lattice of size $L=2^{a+1}$ forms a pattern of four excitation at the corners of a parallelogram half the size of the full $[111]$ plane. \footnote{Note that for the cube, a full $[111]$ plane wraps the cube at least twice, exactly twice when the plane intersects the cube at three corners for each wrap. In this case, the three corners form an ``upward-pointing'' triangle on one wrap and a ``downward-pointing'' triangle on the second wrap. Together these two triangles form a parallelogram which represents the full $[111]$ plane and is related to the triangular lattice with the periodicity of the underlying cube. } This implies that two copies of the fractal translated relative to each other by $2^a$ (half the period) in either of the primitive vector directions of the $[111]$ plane generates no excitations and thus represents a fractal distilled SS logical operator as depicted in Fig \ref{fig:haahfracSS}. These appear as rotated versions of one another. Each translated version of the resulting pattern is also a distilled SS logical operators, but only those along the primitive vector directions up to a period of $2^a$ are independent. Including both rotated versions, there are $4\times2^a -1= 2L-1$ independent patterns, where the minus one is due to the shared pattern between the rotated versions which consists of all cube-corners in the plane. Including the analogous Z-type patterns, we have $4L-2$ independent distilled SS logical operator which all mutually commute. From Ref. \cite{Haah2011, HaahThesis}, we know that for $L=2^{a+1}$, this is the maximum number of independent, mutually commuting logical operators.  To see that each plane is equivalent to every other, note there is a sum of these fractals such that the result is the sum of all cube-corners in that plane. As the sum of two such planes is equivalent to a member of $\mc D$ i.e. is the sum of cubes between the two planes, this implies each set of independent patterns in a given $[111]$ plane is equivalent to any other. This is a concrete example of a consequence of Theorem 2 of Ref. \cite{Schmitz2019a} which is that there is a complete mutually-commuting set of logical operators whose support is contained within a boundary for topological stabilizer codes. To relate these back to the fractal logical operators discussed in Section \ref{sec:haahdef}, one can see that internal to the fractal pattern of Fig. \ref{fig:haahfracSS}, there is a downward-facing triangle pattern formed by the cube-corners (this becomes more apparent the larger the fractal becomes). These can be deformed by a tetrahedral product of cubes out of the plane. Doing this at the various length scales forms a three-dimensional fractal pattern which is reminiscent of the logical operators discussed in Section \ref{sec:haahdef}. However, doing so does not give the exact fractal logical operators of Section \ref{sec:haahdef}. By the same arguments given in Section \ref{sec:distCC} for the cluster-cube fractal operators, the fractal operator similar to Fig. \ref{fig:fracbuild} should be accidental. So, our planar fractal logical operators are probably a sum of two non-planar fractal logical operators. We conjecture it is the sum of one of each geometry (one right-angle and one regular) based upon the counting of independent mutually commuting logical operators. That is, all non-planar fractal logical operators of both right-angle and regular geometry of either X or Z-type, but not both, suffice to generate a maximal, mutually commuting set, whereas with the planar fractal operators, we require both X and Z-types to find such a set.

Finally, we can find which completion results in a ground state for the perturbation analysis that is the simultaneous $+1$ eigenstate of these fractal distilled SS operators. Looking back to the p-strings generated by $S_{c' \hyph \text{hex}}$ as shown in Fig. \ref{fig:haahloops}, one can see that the p-strings generated by $t'^Z$ have the same overlap pattern with the cube-corners as $h^Z$, as shown in Fig. \ref{fig:haahcomp}. Thus our distilled SS logical operators belong to $\psi[\mc G_{c'\hyph\text{hex}}]^{\perp_\Omega}$ and by Proposition \ref{prop:loops}, $S_{c'\hyph\text{hex}}$ results in our desired ground state for the perturbation analysis. We also can find the members of $\psi[\mc G_{c'\hyph \text{hex}}]$ from Proposition \ref{prop:loops} generated by these logical operators. Fig. \ref{fig:haahdistloops} shows the plaquette configuration generated by the Z-type cube-corner.  We already see the loop configuration generated by $t'^Z$ plus some additional plaquettes. Thus, we can focus on only these remaining plaquettes. Fig. \ref{fig:haahdistloops} also shows the plaquette configuration generated by the three cube-corner operator used as the first generation for building the planar fractal distilled SS operators. One can see that mod p-strings generated by $t'^Z$, these plaquettes are confined to the $[111]$ plane, just like the configuration generated by $h'^Z$ (again, see Fig. \ref{fig:haahloops}). One can then show that once the distilled SS operator is formed, the resulting p-string configuration can be formed from  a sum of p-strings generated by $h'^Z$ operators.

\begin{figure}

\centering

\includegraphics[scale=.15]{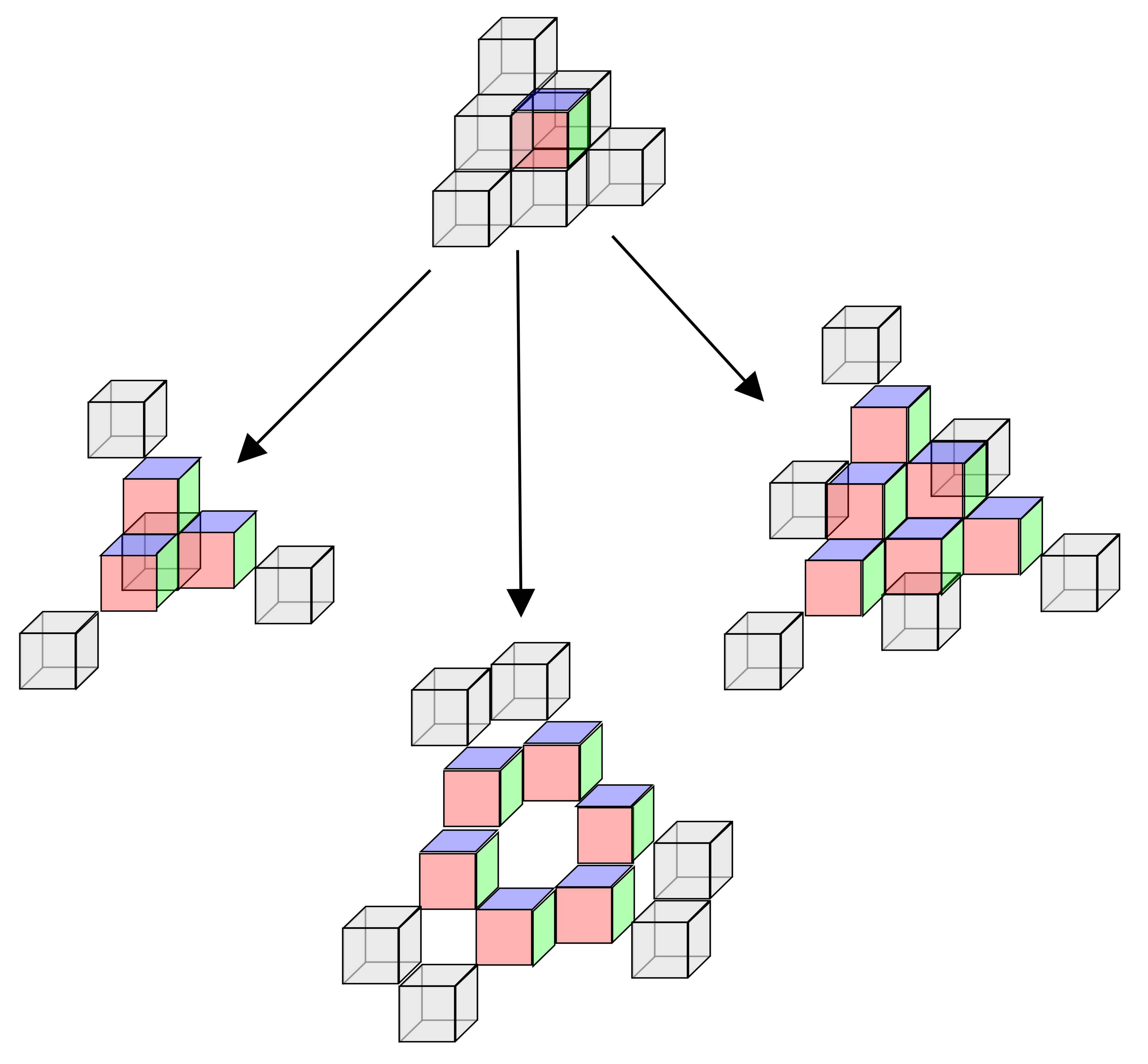}

\caption{Various patterns of excitations generated by the cube corners for the Haah's cubic code example.}\label{fig:haahorth}

\end{figure} 

\begin{figure}

\centering

\includegraphics[scale=.07]{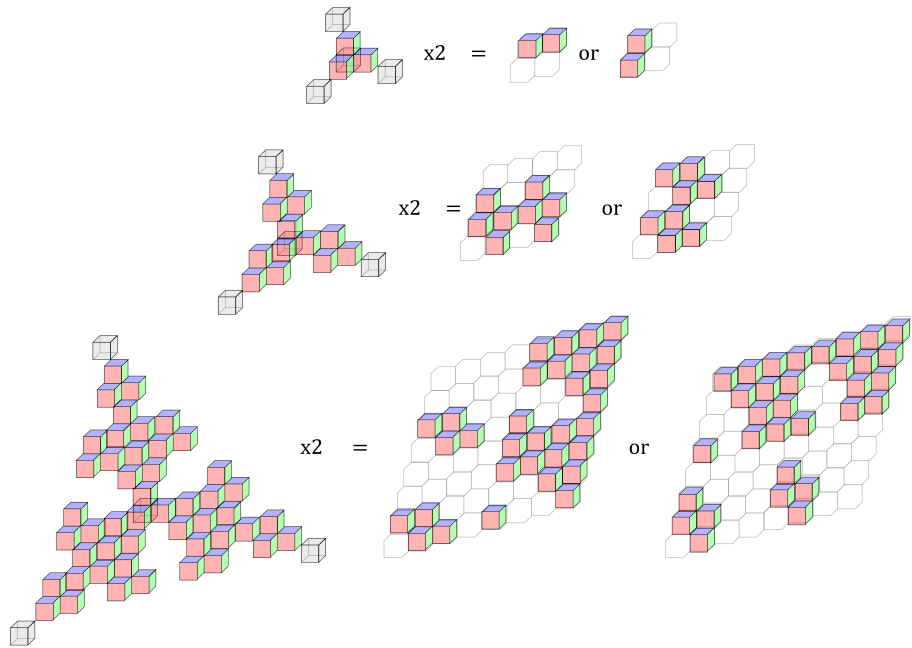}

\caption{Generations $a=1,2,3$ of the cube-corner fractal patterns and the excitations they create (gray) as well as how two copies of the pattern can be used to create distilled SS logical operators for $L=2^{(a+1)}$.}\label{fig:haahfracSS}

\end{figure} 

\begin{figure}

\centering

\includegraphics[scale=.15]{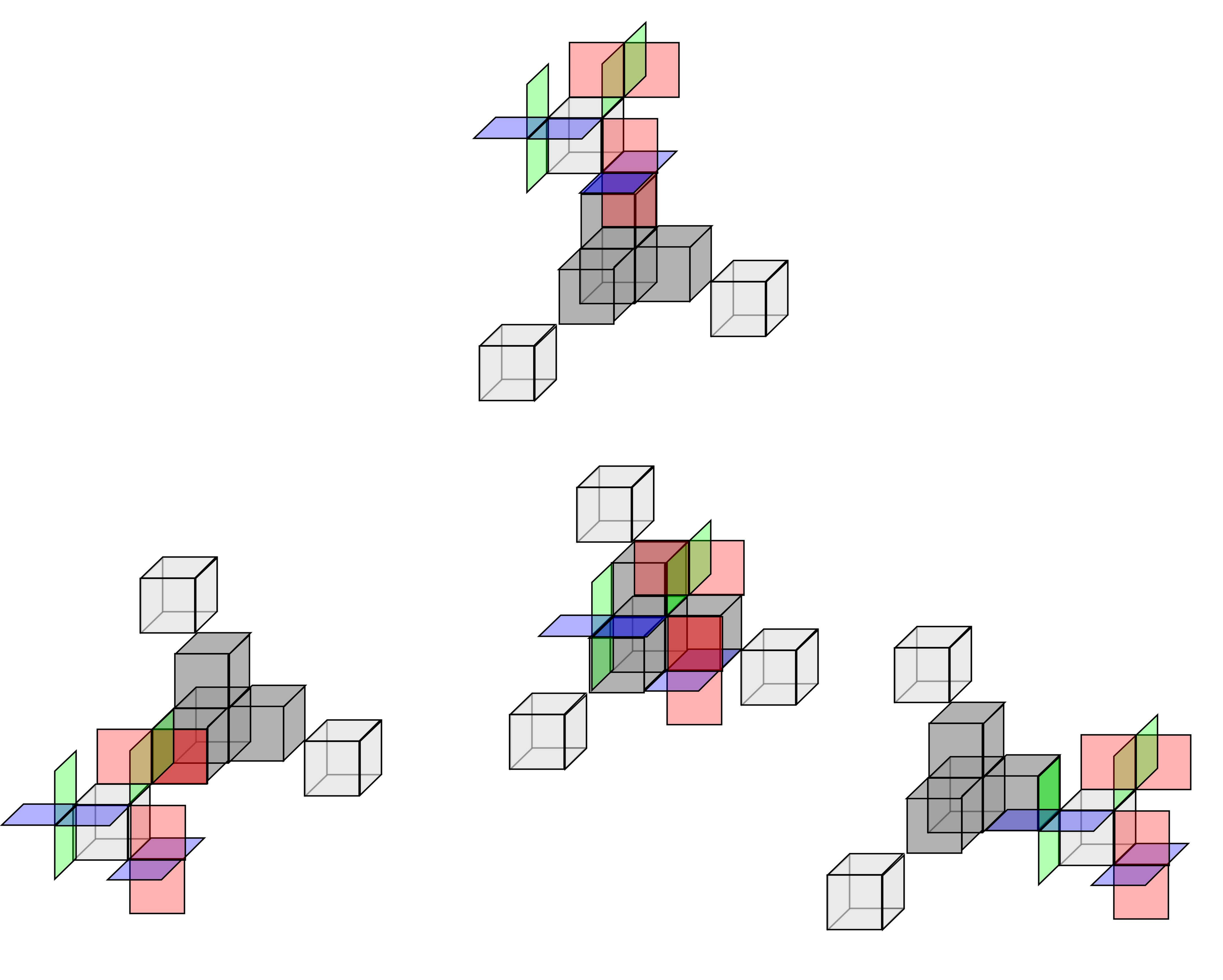}

\caption{Demonstration that the p-strings generated by the completion $S_{c'\hyph\text{hex}}$ have the same overlap pattern as the fractal configurations used to form the distilled SS logical operators of Fig. \ref{fig:haahfracSS}. Thus the perturbation analysis using $S_{c'\hyph\text{hex}}$ results in the simultaneous $+1$ grounds state of these logical operators.} \label{fig:haahcomp}

\end{figure} 

\begin{figure}

\centering

\includegraphics[scale=.15]{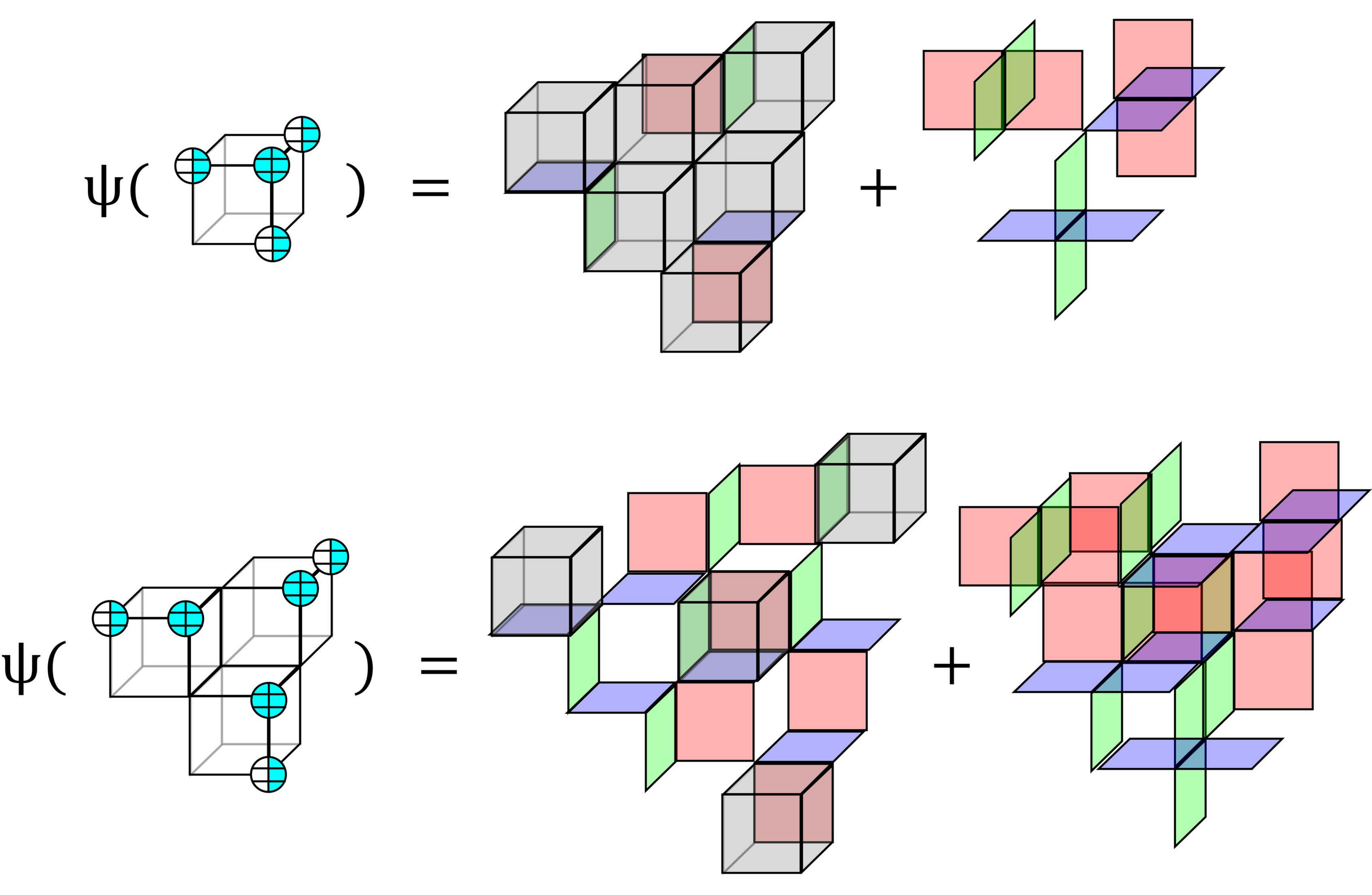}

\caption{Demonstrations of the X-type p-string configuration generated by the Z-type fractal distilled SS logical operator. After removing the p-string configuration generated by $t'^Z$, we see that the remaining plaquette configuration is confined to the $[111]$ plane and can be formed using p-strings generated by $h'^Z = h^Z + t'^Z$ operators, thus confirming Proposition \ref{prop:loops}. The gray cubes represent the Haah's code excitations generated by the operator. } \label{fig:haahdistloops}

\end{figure} 

\subsubsection{Simplifying the Base Model for the Perturbation Analysis of Haah's Cubic Code}\label{sec:haahsimple}

Every completion of $S_{\text{hex}}$ for Haah's code considered so far contains $h^X, h^Z$ in which case the base model Hamiltonian contains terms of weight 18 and supported in 6 unit cells. Thus, the base model is just as if not more complicated than the target. This is undesirable if one wants to use the layer construction to realize Haah's code in a realistic system. Moreover, no reasonable operators which are added to the completion simplify $h^X, h^Z$.
 
To remedy this, we relax the requirement that the base model is a stabilizer code. Instead we require the base model Hamiltonian be a sum over members of $\mc G_{\text{hex}}^{\perp_\Lambda}/ \mc R$ such that some product of the terms is equal to $h^X, h^Z$. In particular, we can consider an arbitrary sum over edge operators which are all weight 6 and supported on 2 unit cells.\footnote{It is worth noting that the set of edge operators along one line forms a version of the $d=1$ cluster model and edge operators in a given direction only anti-commute with edge operators along a different direction. This suggests one could obtain such a  base model from coupled $d=1$ cluster models.} For such a base model, $h^X, h^Z$ are symmetries and thus their eigenvalues are good quantum numbers which we can label with members of $\mc D^X \oplus \mc D^Z$. Moreover, since some product of the terms form $h^X, h^Z$, the energy eigenvalues must have a dispersion relation with respects to these quantum numbers. So the goal is to choose the coefficients of this new base Hamiltonian such the ground state of the $0 \in \mc D^X \oplus \mc D^Z$ sector is the unique overall ground state. This is reminiscent of models  whose ground space realizes a {\it sub-system} QECC \cite{Terhal2015}. This results in the same effective target Hamiltonian as that generated by a stabilizer base model, however the ground state of the target model which results from using such a base model is less obvious. We leave the study of this ground state and a more concrete discussion of a non-stabilizer base model for Haah's code to future work.

\section{Similarity to Entanglement Renormalization and Measurement-based Quantum Computing}\label{sec:ERGandMBQC}

The results discussed here should be reminiscent of entanglement renormalization (see Ref. \cite{Vidal2007} for a review). In particular, one can see the local right restriction as analogous to the process of applying a disentangler circuit so as to un-entangle local degrees of freedom and some fraction of them can be removed. Analogously for a local projective right restriction, there exists a local, constant-depth Clifford circuit which rotates the effective qubits of the restriction to actual degrees of freedom, but as these circuits are used to generate LRE, we refer to such a circuit as a {\it pre-entangler}. For example, we use CLR on three qubits for both our examples of SRE$\to$LRE. If one applies the circuit, $\left(\text{CNOT}_{12}, \text{CNOT}_{23}, \text{CNOT}_{31}\right)$, qubit 3 acts as our effective degree of freedom. So to achieve the SRE$\to$LRE layer construction on a hypothetical quantum computer:
\begin{enumerate}
\item Prepare the SRE stabilizer state for each layer. By definition, this can be done with a constant depth, local circuit.
\item Apply the pre-entangler circuit to each local unit cell, which collectively is again a constant depth, local circuit.
\item Measure the stabilizers of the base model (or in general, the terms of the base Hamiltonian). From this, one can infer an error syndrome for the target, including that of the distilled SS operators.
\item Decode and correct errors in the target code based upon the syndrome of the last step.
\item Discard the qubits which do not contain the degrees of freedom of the right restriction.   
\end{enumerate}  

This process results in a QECC corresponding to the LRE model as prepared in a fiducial state as determined by the base model and Proposition \ref{prop:loops}. In the case that the base model is simple as is the case for $S_{c\hyph \triangle}$ for the cluster-cube model, such a procedure is practically more efficient than measuring the target stabilizers directly. Furthermore, if we allow for an analogous relaxation of the base model as in Section \ref{sec:haahsimple} away from a stabilizer model and allow for more general measurements between members of $\mc G_{\text{junk}}^{\perp_\Lambda}/ \mc R$, it is reasonable to suspect that we can generate any state in the code space of the LRE model (except those associated with accidental logical operators). Devising such a scheme is equivalent to a fault-tolerant, measurement-based quantum computer (assuming the results can tolerate faults in the hypothetical measurement scheme) or could serve as a quantum memory write procedure. Such a conjecture is also supported by the fact that the $d=2$ cluster state can be used to realize a universal measurement-based quantum computer\cite{Raussendorf2003}. We look to explore this possibility more concretely, as well as the use of gauge substructures for entanglement renormalization, in future work.

\section {Conclusion}\label{sec:con}

In this paper, we have discussed how layers of 2D SSPT models can be coupled to form 3D fracton models in a process we have deemed distillation of long-range entanglement from short-range entanglement (SRE $\to$ LRE) in contrast to condensation in anyonic models. We started by analyzing the process for two examples, the cluster cube model and Haah's cubic code, from the perspective of Brillouin-Wigner perturbation theory as well as reviewed the previous of results of Refs. \cite{Ma2017, Vijay2017} that the X-cube model can be obtained by condensing anyonic p-strings in layer of the $d=2$ toric code. We then explored the results in great detail using the language of linear gauge structures as well as introduced an extension of the idea, which we have deemed a gauge substructure. The use of gauge substructures allowed us to rigorously distinguish between distillation and condensation as well as constructively prove some important general results which practically allowed us to answer the two primary questions raised by SRE$\to$LRE: what is the source of the LRE and what determines the resulting ground state after the infinite-strength perturbation? These abstract results were then applied to better understand the cluster cube and Haah's code results. 

We have already discussed how these results might be used in future work. In particular, this might be used to better understand entanglement renormalization, and could lead to a protocol for realizing a fault-tolerant, measurement-based quantum computer. Furthermore, the methods developed are highly constructive. For example, the author uniquely derived the quasi-cluster model assuming only that the spatial symmetries of Haah's code are also the symmetries of the layers which form the parent. Likewise, the cluster-cube model was only found after applying the right and left restrictions to layers of the cluster model. Thus, one only needs to provide either the parent or the target, and the other tends to follow. This could lead to many new exotic models and layer constructions for existing models with relative ease. Then the remaining pieces follow directly for the abstract analysis provided by the gauge structure and substructure formalism.  More broadly, we see the development of the gauge substructures as perhaps the most important use for the linear gauge structure formalism as we look to further broaden the use and scope of these ideas.

\section{Acknowledgments}

The author would especially like to thank Rahul Nandkishore for his support. Also the author thanks Abhinav Prem and Michael Pretko for comments on the manuscript as well as Trithep Devakul, Jeongwan Haah, Michael Hermele, Sheng-Jie Huang, Han Ma, Kevin Slagle, and Dominic Williamson for feedback on the methods presented. This material is based upon work supported by the Air Force Office of Scientific Research under award number FA9550-17-1-0183. 

\appendix

\section{Review of Brillouin-Wigner Perturbation Theory}\label{apx:PT}

The following section builds on Appendix A of Ref. \cite{Ma2017} as well as Ref. \cite{Sakurai2011}. In this appendix we justify that in general, the only terms of the effective Hamiltonian which survive the layer construction of Section \ref{sec:PT} are those products of parent stabilizers, which commute with the base model which by construction is given by the left restriction $\mc L$. We also show that the remaining degrees of freedom are generally reduced to those of the right restriction, $\mc R$.  

We start by factoring out $K$ from Eq. \ref{eq:Ham1} such that our Hamiltonian becomes
\begin{align}
H= H_{\text{base}} + K' H_{\text{parent}},
\end{align}
where $K'= \frac{1}{K}$ and we can now treat the parent model as an Infinitesimal perturbation to the base model. Let $\ket{\Psi}$ be any low energy state of $H$ with energy $E$. In particular, $\ket{\Psi}$ is a state with high overlap with the base model ground space.  We can then write the Shr\"{o}dinger's equation in the form of Brillouin-Wigner perturbation theory as 
\begin{align}\label{app:schrod}
\ket{\Psi} =  p^{(0)}_{\text{base}} \ket{\Psi} + H_{\text{base}}^{-1} \left(1- p^{(0)}_{\text{base}}\right) \left(E- K' H_{\text{parent}}\right) \ket{\Psi},
\end{align}
where we have used the fact that the base ground space energy is $0$. To solve this iteratively, we need a seed state in the base model ground space, so we use the ansatz state $p^{(0)}_{\text{base}} \ket{\psi_{\text{parent}}(f)}$ for some arbitrary operator $f \in \mc P$. Our solution based on this ansatz is 
\begin{align}
\ket{\Psi}= \sum_n \left[H_{\text{base}}^{-1} \left(1- p^{(0)}_{\text{base}}\right) \left(E- K' H_{\text{parent}}\right)\right]^n p^{(0)}_{\text{base}} \ket{\psi_{\text{parent}}(f)}.
\end{align}
We wish to express this in the form of an effective Schr\"{o}dinger's equation for some effective Hamiltonian. To do so we multiply by $p^{(0)}_{\text{base}} H$ on both sides and note that $p^{(0)}_{\text{base}} H \ket{\Psi}= E p^{(0)}_{\text{base}}\ket{\psi_{\text{parent}}(f)}$. As a result, we find
\begin{align}
 H_{\text{eff}}\left(p^{(0)}_{\text{base}}\ket{\psi_{\text{parent}}(f)}\right) = E \left(p^{(0)}_{\text{base}}\ket{\psi_{\text{parent}}(f)}\right),
\end{align}
where
\begin{align}
 H_{\text{eff}}=K' p^{(0)}_{\text{base}} H_{\text{parent}}\sum_n \left[H_{\text{base}}^{-1} \left(1- p^{(0)}_{\text{base}}\right) \left(E- K' H_{\text{parent}}\right)\right]^n p^{(0)}_{\text{base}}.
\end{align}
This effective Schr\"{o}dinger's equation is a self-consistent equation as both sides depend on the energy $E$. Furthermore because $H_{\text{eff}}$'s dependence on $E$, the orders of the sum do not directly correspond to the orders in perturbation theory. As usual in perturbation theory, one must recursively solve the equation order-by-order. However, we still want to confirm that only products of parent stabilizers which commute with all base stabilizers are the only terms which survive. Because $H_{\text{eff}}$ is projected on to the ground space of the base model, we can expand the operator sandwiched between $p^{(0)}_{\text{base}}$ in the Pauli operator basis. We can expand 
\begin{align}
H_{\text{base}}^{-1} \left(1- p^{(0)}_{\text{base}}\right)= \sum_{J\neq0 \in \im \psi_{\text{base}}} \frac{1}{E^{(J)}_{\text{base}}} p^{(J)}_{\text{base}},
\end{align}
and clearly $H_{\text{parent}}$ is expanded in $S_{\text{parent}}$ operators. So at order $n$ in the perturbation theory, one has terms of the form
\begin{align}
(-1) p^{(0)}_{\text{base}} s_{i_{n-1}} p^{(J_{n-1})}_{\text{base}} \dots s_{i_1} p^{(J_1)}_{\text{base}} s_{i_0}p^{(0)}_{\text{base}},
\end{align}
 where $s_i \in S_{\text{parent}}$, as well as other terms. We can push $s_i$ through the projection operator such that $p^{(J)}_{\text{base}} s_{i}=s_i p^{(J +\psi_{\text{base}}(s_i))}_{\text{base}}$. Such a term only survive once it hits the ground space projector if $J_1 = \psi(s_{i_0})$, $J_2= \psi(s_{i_0}+ s_{i_1})$, and so on, which determines the energy denominators for this term. More importantly for us, however, is the fact that once we push all the stabilizers to the left, the term only survives if $\sum_{j=0}^{n-1}\psi_{\text{base}}(s_{i_j}) =0$, i.e. the resulting product of parent stabilizers commutes with all stabilizers of the base model. The other terms are dependent on the energy and have fewer than $n$ stabilizer products at order $n$, but they result in the same condition for the term to be non-zero. So after a shift in energy, the effective Hamiltonian can be written as 
\begin{align}
H_{\text{eff}} =\frac{1}{2} \sum_{A \in \ker \psi_{\text{base}} \phi_{\text{parent}}} j_A(1- \phi(A)),
\end{align}
 where $j_A$ are coefficients which can be derived in principle by evaluating $\tr(\phi(A)H_{\text{eff}})$. Without doing this explicitly, we do know that $j_A = \mc O(K'^{|A|})$ and at leading order--i.e. for small enough $K'$--$j_A>0$. So our ansatz that $  \left(p^{(0)}_{\text{base}}\ket{\psi_{\text{parent}}(f)}\right)$ are the correct eigenstates of our effective Hamiltonian is correct provided that $\psi_{\text{parent}}(f) \in \ker \psi_{\text{base}} \phi_{\text{parent}}= (\im \psi_{\text{parent}}\phi_{\text{base}})^{\perp_\omega}$, using the composite gauge structure BrLE rules as discussed in Lemma \ref{perpperp}. By construction, $(\im \psi_{\text{parent}}\phi_{\text{base}})^{\perp_\omega} \supseteq \mc L$.

As a final point, we want to show which members of $\im\psi_{\text{parent}}$ survive the projection by $ p^{(0)}_{\text{base}}$. To do this, we first define a new gauge structure $\left(\left(\mc A_{\text{base}} \oplus \mc A_{\text{target}}, \Phi, \Omega'' \right), \left( \mc P, \Psi, \Lambda\right)\right)$, such that $\Phi(A \oplus B)= \phi_{\text{base}}(A) + \phi_{\text{target}}(B)$ and $\Omega''= \Omega_{\text{base}} \oplus \Omega_{\text{target}}$. We can uniquely derive $\Psi$ by noting 
\begin{align}
\lambda(\Phi(A \oplus B), f)=& \lambda(\phi_{\text{base}}(A), f) + \lambda(\phi_{\text{target}}(B), f) \nonumber \\
=& \omega_{\text{base}}(A, \psi_{\text{base}}(f)) + \omega_{\text{target}}(B, \psi_{\text{target}}(f))\nonumber \\
=& \omega''(A \oplus B, \psi_{\text{base}}(f) \oplus \psi_{\text{target}}(f)).
\end{align}
Thus $\Psi(A \oplus B)= \psi_{\text{base}}(f) \oplus \psi_{\text{target}}(f)$. Note that for all $A \oplus B \in \ker \Phi$, $\phi_{\text{base}}(A) = \phi_{\text{target}}(B)$. With this gauge structure in mind, let $J \in \mc A_{\text{parent}}$ and consider
\begin{align}
\tr\left( p^{(0)}_{\text{base}} p_{\text{target}}^{(J)}\right)\propto &  \sum_{A\in \mc A_{\text{base}}} \sum_{B \in \mc A_{\text{target}}} (-1)^{\omega_{\text{target}}(B, J)} \tr \left(\phi_{\text{base}}(A)\phi_{\text{target}}(B)\right) \nonumber \\
\propto &  \sum_{A\oplus B\in \ker \Phi} (-1)^{\omega''(A \oplus B, 0 \oplus J)} \propto \left[ 0 \oplus J \in (\ker \Phi)^{\perp_{\Omega''}}\right],
\end{align}
where $[]$ is the Iverson bracket which is $1$ if the statement is true and $0$ otherwise. Thus, the only $J= \psi_{\text{parent}}(f)$ whose ket survives the projection by $p_{\text{base}}^{(0)}$ is such that $ 0 \oplus \psi_{\text{target}}(f) \in (\ker \Phi)^{\perp_{\Omega''}}= \im \Psi$, where we invoke the BrLE rules for our new gauge structure. This implies $\psi_{\text{base}}(f) =0$, which  implies $f \in (\im \phi_{\text{base}})^{\perp_\Lambda} = \mc R \oplus (\im \phi_{\text{base}})^{\perp_\Lambda} / \mc R$. The first equality is a consequence of the BrLE rules for the base model and the second is by construction of the gauge substructure. When the completion used for the base model is maximally mutually commuting in $\mc R^{\perp_\Lambda}$, i.e. $(\im \phi_{\text{base}})^{\perp_\Lambda} / \mc R= \im \phi_{\text{base}}$, then our effective degrees of freedom are $(\mc R \oplus \im \phi_{\text{base}}) /\im \phi_{\text{base}} \simeq \mc R$ as the projection by $p_{\text{base}}^{(0)}$ freezes out $\im \phi_{\text{base}}$ degrees of freedom. Therefore, we have reduced all degrees of freedom down to $\mc R$ as intended.

\section{Peripheral Details of $\mb F_2$ Vector Spaces}\label{apx:F2vec}

In this appendix, we establish a few basic points regarding $\mb F_2$ vector spaces which were used throughout the main body. 

Ref. \cite{Bernhard2009} discusses general frame theory of binary vector spaces equipped with a non-degenerate 2-form. By proofs provided in the reference, every invertible form on a binary space satisfies the same properties so all results apply equally to the symplectic Pauli space $(\mc P, \lambda)$, as it does for a binary dot-product space. Let $(\mc A, \omega)$ be an $N$-dimensional $\mb F_2$ vector space and $\omega$ be the non-degenerate binary 2-form. Of interest to us is Section 3.1 of the Reference, which establishes the existence of a linear, idempotent project map on to a space $\mc L\subseteq \mc A$. Such a map exists if and only if there is a {\it dual basis pair}, $(\{b_i\}, \{\tilde{b}_i\})$ for $\mc L$ such that for all $L \in \mc L$, 
\begin{align}
L = \sum_i \omega(b_i,L) \tilde{b}_i.
\end{align}
One can also define the {\it Grammian}, $G$, for a dual basis pair as the binary matrix for which $G_{ij}= \omega(b_i, \tilde{b}_j)$. They also establish that a dual basis pair exists if and only if there exists a Grammian (i.e. any matrix formed from two bases) which is invertible, and a matrix is invertible if and only $G* L =0$ implies $L=0$. So for any Grammian $G$, suppose there exists an $L\in \{0,1\}^{\dim \mc L}$ such that $G*L=0$. This implies 
\begin{align}
0= \sum_j \omega(b_i, \tilde{b}_j) L_j = \omega(b_i, \sum_j L_j \tilde{b}_j).
\end{align}
As $\{b_i\}$ is a basis for $L$, this implies $\tilde{L}= \sum_j L_j \tilde{b}_j \in \mc L$ is a vector in our subspace which is orthogonal to all of $\mc L$, i.e. $\tilde{L} \in \mc L \cap \mc L^{\perp_\omega}$. Therefore, this establishes the following:

\begin{proposition}\label{app:proj}
$\mc L \subseteq \mc A$ is a projective subspace, i.e. there exist a linear, idempotent map $\pi_{\mc L}$ if and only if $\mc L \cap \mc L^{\perp_\omega}$ is trivial.
\end{proposition}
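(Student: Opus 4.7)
The plan is to reduce the statement to the dual-basis-pair characterization of projections on binary $\omega$-spaces imported from Ref.~\cite{Bernhard2009} in the paragraph immediately preceding the proposition: a linear idempotent map $\pi_{\mc L}: \mc A \to \mc L$ exists if and only if one can find two bases $\{b_i\}$ and $\{\tilde b_i\}$ of $\mc L$ such that
\begin{align}
L = \sum_i \omega(b_i, L)\, \tilde b_i \quad \text{for all } L \in \mc L,
\end{align}
and such a dual basis pair exists if and only if the Grammian $G_{ij} = \omega(b_i, \tilde b_j)$ built from some (equivalently, any) two bases of $\mc L$ is invertible over $\mb F_2$. I would take these facts as the sole external input and then carry out both implications as linear algebra.

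For the $(\Rightarrow)$ direction I would simply substitute any $L \in \mc L \cap \mc L^{\perp_\omega}$ into the dual-basis expansion: because each $b_i$ lies in $\mc L$ and $L$ is $\omega$-orthogonal to all of $\mc L$, every coefficient $\omega(b_i, L)$ vanishes, forcing $L = 0$. Hence the intersection is trivial whenever a projection exists.

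For the $(\Leftarrow)$ direction I would pick any two bases $\{b_i\}$ and $\{c_j\}$ of $\mc L$ and form the Grammian $G_{ij} = \omega(b_i, c_j)$. The computation already sketched in the paragraph preceding the proposition shows that $Gv = 0$ forces $\tilde L = \sum_j v_j c_j$ into $\mc L \cap \mc L^{\perp_\omega}$: it is in $\mc L$ by construction and orthogonal to every $b_i$ (hence to all of $\mc L$) by the vanishing of $Gv$. Under the hypothesis this makes $\tilde L = 0$ and therefore $v = 0$, since $\{c_j\}$ is a basis. Thus $G$ is invertible over $\mb F_2$; setting $\tilde b_i = \sum_j (G^{-1})_{ji}\, c_j$ yields the dual basis, and the idempotent projection $\pi_{\mc L}$ follows.

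The only genuine subtlety is the import of the Bernhard characterization itself: over $\mb F_2$ there is no canonical orthogonal complement and Gram-Schmidt fails, so the passage from an invertible Grammian to a usable dual basis is not automatic and cannot be patterned on the real or complex case. Since this is exactly the content of the cited result, my proof would not re-derive it; everything else reduces to the bookkeeping above.
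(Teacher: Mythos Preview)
Your proposal is correct and follows essentially the same route as the paper: both reduce the question to the Bernhard dual-basis-pair/Grammian criterion and then identify the kernel of the Grammian with $\mc L \cap \mc L^{\perp_\omega}$. You are in fact slightly more explicit than the paper, which only writes out the computation showing that a non-trivial kernel of $G$ produces a non-zero element of $\mc L \cap \mc L^{\perp_\omega}$ and leaves the converse (your $\Rightarrow$ direction via the dual-basis expansion) implicit.
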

As a corollary, if $\mc L$ is projective, then so is $\mc L^{\perp_\omega}$ and there exist of decomposition of the space $\mc A \simeq \mc L \oplus \mc L^{\perp_{\omega}}$ which also immediately implies that $\left(\mc L^{\perp_\omega}\right)^{\perp_\omega}= \mc L$ when we also use the non-degeneracy of $\omega$.

We also made the claim that of $(\mc P, \lambda)$, and any mutually commuting space $\mc G$, $\left(\mc G^{\perp_\lambda}\right)^{\perp_\lambda}= \mc G$. Such a subspace maximally violates the conditions for projectivity, so we cannot use this to prove the claim. However, we can use gauge structures. $\mc G$ always has a basis and we can use this to generate a gauge structure where the potential space is a binary vector space of size $\dim G$ and $\im \phi =\mc G$. Because binary dot product is invertible, such a gauge structure is transposable as discussed in Lemma \ref{perpperp}. One then uses the same logic as that of Lemma \ref{perpperp} to establish the assertion. Other methods for proving this can be found in Ref. \cite{HaahThesis}.

Finally, we discuss some details regarding the space $\mc D$ or the space of all closed 2D membranes. For a lattice of size $L$, $\dim \mc D = L^3-1$, where as we typically describe the resulting space as that of $\mc V$, the set of points of the 3D lattice or alternatively the space of all cubes of the 3D lattice for which $\mc D \simeq \text{div} [\mc V]$. But $ \dim \mc V= L^3$. This is because the divergence of all cubes is the identity, resulting in one constraint. So it would see that when using $\mc D$ for SRE$\to$LRE as our left restriction, we have inserted LRE by hand as one constraint implies one logical qubit (when $\dim \mc D= N$). This is exactly the case in our examples as we technically violate the maximality condition stated in Def. \ref{def:max}. For each example, one can find a topologically non-trivial membrane $A \in \mc A_{\text{layers}}$ wrapping the torus and thus $A \notin \mc D$ and yet $\phi(A)\in \mc R$. As discussed there, this always results in a logical operator. However, this appears to be the only violation of the maximality condition, so if we imagine modifying $ \mc D \to \mc D \oplus\{0, A\}$, the restriction is maximal and all results follow without changing the primary results, i.e. we distill $ \sim L$ logical operators from subsystem symmetries. Beyond our examples, we can also argue that using $\mc D$ always results in a distilled SS operator. As discussed, $\psi \pi_{\mc R}\phi[\mc D] \subseteq \mc D^{\perp_{\Omega}}$, i.e. the members of $\mc D$ alway generate p-string under $\psi \pi_{\mc R}\phi$, which a priori does not exclude the topologically non-trivial p-strings wrapping the torus.  But in fact, such topologically non-trivial p-strings cannot be in $\psi \pi_{\mc R}\phi[\mc D]$ as this set is locally generated, so all p-strings must be  topologically trivial. As topologically non-trivial p-strings are the only p-string that have odd overlap with topologically non-trivial membranes, such membranes must be in $(\psi \pi_{\mc R}\phi[\mc D])^{\perp_\Omega}$ and by Proposition \ref{prop:LREsource} become a distilled SS operators. In our example of condensation, this is avoided because such membranes are equivalent to a constraint in the $\text{TC}_2$ layers, i.e. planes are null distilled configurations. $\mc D$ also fails to be projective when $L$ is even because one can form the product of every wrapping membrane perpendicular to a given coordinate direction in $\mc D$ as this is an even number of such layers. One can also form such a configuration in $\mc D^{\perp_\Omega}$ from all p-strings wrapped in the same direction. This violates the condition of Proposition \ref{app:proj} for a projective $\mc L$. In general even when not projective, we use ${}_{\mc D}\psi_{\mc R}= \beta \iota^{\dagger}_{\mc D} \psi \iota_{\mc R}$, where $\beta: \mc A/\mc D^{\perp_\Omega} \to \mc D$ is such that $[A] \mapsto \sum_{c} \omega(A, B_c) B_c$ and $B_c \in \mc D$ are the primitive cube configurations.

\printbibliography

\end{document}